\documentclass[11pt,a4paper]{article}
\pdfoutput=1
\usepackage[margin=28mm,headheight=14pt]{geometry}
\usepackage[T1]{fontenc}
\usepackage[utf8]{inputenc}
\usepackage{lmodern}
\usepackage{amsmath,amssymb,amsthm,mathtools,mathrsfs}
\usepackage{booktabs,array,longtable,microtype}
\usepackage[numbers,sort&compress]{natbib}
\usepackage{xcolor}
\usepackage[colorlinks=true,linkcolor=blue,citecolor=blue,urlcolor=blue]{hyperref}
\usepackage{fancyhdr}
\numberwithin{equation}{section}
\allowdisplaybreaks[2]
\newcommand{\h}{\hbar}
\newcommand{\dd}{\mathrm d}
\newcommand{\tr}{\operatorname{tr}}
\newcommand{\Res}{\operatorname*{Res}}
\newcommand{\remd}{\operatorname{rem}}
\newcommand{\Rhat}{\widehat R}
\newcommand{\Qhat}{\widehat Q}
\newcommand{\cM}{\mathcal M}
\newcommand{\cC}{\mathcal C}
\newcommand{\cK}{\mathcal K}
\newcommand{\cD}{\mathcal D}
\newcommand{\cP}{\mathcal P}

\newcommand{\ii}{\mathrm i}

\newtheorem{proposition}{Proposition}[section]
\newtheorem{theorem}[proposition]{Theorem}
\newtheorem{lemma}[proposition]{Lemma}
\newtheorem*{localpolelemma}{Lemma}

\hypersetup{pdftitle={Resolvent reconstruction of minimal strings beyond KdV},pdfauthor={Wasif Ahmed and Joydeep Naskar}}
\begin{document}
\hypersetup{pageanchor=false}
\begin{titlepage}
\noindent\hfill
\vspace{2.1cm}
\begin{flushleft}
{\LARGE\sffamily\bfseries Resolvent reconstruction of minimal strings\\[0.25em]
beyond KdV\\[0.25em]\par}
\end{flushleft}
\vspace{0.5cm}\hrule\vspace{0.65cm}
\noindent\textbf{Wasif Ahmed}$^{a}$,\quad\textbf{Joydeep Naskar}$^{b}$
\vspace{0.3cm}

\noindent{\small\itshape
$^a$Physics Department, Broida Hall, University of California, Santa Barbara, CA 93106, USA\par
\noindent$^b$Beijing Institute of Mathematical Sciences and Applications, Huaibei Town, Huairou District, Beijing 101408, China\par}
\vspace{0.15cm}
\noindent\textit{E-mail:}\ \href{mailto:wasif@ucsb.edu}{\texttt{wasif@ucsb.edu}},\quad
\href{mailto:joydeepnaskar@bimsa.cn}{\texttt{joydeepnaskar@bimsa.cn}}
\vspace{0.55cm}

\noindent\textsc{Abstract:}
Minimal strings offer two ways to study quantum surfaces: through a spectral curve or through a differential equation. We ask how much of the quantum equation can be recovered once the classical curve and its motion with the background are known. Our starting point is the resolvent of a scalar differential operator. We require its one-boundary integral to have poles only at branch points, with the resulting differential regular at infinity. These conditions give a finite reconstruction procedure at each genus. When the branch points are simple and their energy values move, we prove that the quantum corrections are unique whenever the procedure is consistent. A specified string equation provides a solution if its trace primitive satisfies the same analytic conditions. The scalar construction also leads to amplitudes with several boundaries. We introduce the method through pure gravity and Ising, then explore higher-order, dual and nonunitary models. Higher-Airy examples connect the calculation to $r$-spin theory on a degenerate locus that requires a separate treatment of the inverse problem.
\vspace{0.4cm}

\noindent\textbf{Keywords:} Matrix Models, Two-Dimensional Gravity, Integrable Hierarchies, Minimal Strings, Topological Recursion
\end{titlepage}
\setcounter{page}{1}
\hypersetup{pageanchor=true}
\setcounter{tocdepth}{2}
{\small\tableofcontents}
\clearpage

\section{Introduction}\label{sec:intro}
A surface can change in two simple ways: we can add a handle, or we can open another boundary. Minimal strings give a setting in which both operations can be studied systematically. They describe two-dimensional gravity coupled to minimal-model matter, and their matrix-model formulation organizes the sum over surfaces by genus and number of boundaries \cite{BrezinKazakov1990,DouglasShenker1990,GrossMigdal1990,DiFrancescoGinspargZinnJustin1995}. The disk is the starting point. Its amplitude is encoded in a spectral curve, from which loop equations and topological recursion construct the higher-genus, multi-boundary amplitudes \cite{EynardOrantin2007}.

There is also a differential-equation description of the same perturbative problem \cite{Douglas1990,BanksDouglasSeibergShenker1990}. Here a string equation selects a background, and the resolvent of an auxiliary differential operator gives access to its observables. For one boundary, the resolvent can be integrated at fixed spectral energy and evaluated at an endpoint. This provides a particularly direct route to the amplitude. In topological recursion, the calculation of a higher-genus one-boundary function generally passes through functions with more boundaries; the scalar ODE allows us to follow the one-boundary problem on its own.

The familiar KdV example makes the two steps easy to see. A string equation first determines a potential $u(x)$. The Gel'fand--Dikii equation then determines the diagonal resolvent of the associated Schr\"odinger operator \cite{GelfandDikii1975,Johnson2024Virasoro,AhmedJohnsonSaraswat2025}. We would like to turn part of this reasoning around. Suppose the leading potential is given and we want to find its quantum corrections. Can we recover those corrections by asking that the integrated resolvent have the analytic behavior of a boundary amplitude? And can the same idea work when the differential operator contains several unknown fields?

The one-boundary ODE method provides a useful starting point for this question. Johnson developed it for the Virasoro minimal string and its supersymmetric relatives \cite{Johnson2024Virasoro,Johnson2024SuperVirasoro}. Ahmed, Johnson and Saraswat developed the genus expansion and a descending construction of resolvent primitives, with applications to extended JT supergravity \cite{AhmedJohnsonSaraswat2025}. In their second-order construction, matching the poles from strongest to weakest also determines the next potential correction through the final coefficient equation. This gives a direct precedent for the reconstruction we develop here with several unknown fields. The role of the chosen string solution is also visible in comparisons of supersymmetric backgrounds and their loop observables \cite{JohnsonKolanowski2026,Johnson2025Further}.

Several-boundary amplitudes have a well-developed KdV description. Okuyama and Sakai organize macroscopic loops by KdV flows \cite{OkuyamaSakai2019}, and Lowenstein studies loop and brane correlators on general backgrounds, including closed expressions at fixed low genus \cite{Lowenstein2024OpenClosed,Lowenstein2024Volumes}. Johnson's universal $W_{g,n}$ formulas express these observables in terms of a single leading potential and its derivatives \cite{Johnson2026Universal}. At higher order, the corresponding differential systems and determinantal correlation functions are available for Drinfeld--Sokolov hierarchies and reduced KP \cite{BertolaDubrovinYang2015,BertolaDubrovinYang2016,BergereEynard2009,BergereBorotEynard2015}. Related approaches derive equations for correlators of specified integrable systems \cite{EynardMitsiosOukassi2026} and describe macroscopic-loop insertions through open Gel'fand--Dikii hierarchies \cite{AleshkinBelavin2019}. We will use these connections to attach additional boundaries to the scalar construction.

How does a classical spectral curve lead to a quantum differential equation? Bouchard and Eynard reconstruct WKB wavefunctions and quantum curves through topological recursion for a large class of curves, including higher-order examples \cite{Bouchard:2016obz}. Eynard, Garcia-Failde, Marchal and Orantin extend this construction to generic spectral curves with simple ramification \cite{Eynard:2021sxg}. These constructions involve choices: the integration divisor affects operator ordering, while shifted topological recursion produces families of quantum curves with additional parameters \cite{Bouchard:2016obz,Belliard:2024pae}.

Our reconstruction asks for the quantum fields of the spatial operator along a prescribed moving classical family, with its ordering and allowed genus expansion fixed. At third order, for example, the classical energy has the form $E=p^3+bp+a$. Both $a$ and $b$ can vary with the background coordinate. If only their leading terms are known, the resolvent can still be expanded, with the higher-genus corrections to both fields left as unknowns. We will see that the allowed poles of its integral can determine those unknowns. This two-field example is the simplest setting in which to understand the higher-order reconstruction.

To state the problem more precisely, consider the principal $A_{r-1}$ Drinfeld--Sokolov hierarchy, equivalently the $r$-reduction of KP \cite{DrinfeldSokolov1985,Dubrovin1996}. The integer $r$ is the scalar operator order and the dimension of the companion representation; the Lie algebra $A_{r-1}$ has rank $r-1$. With $D=\h\partial_x$, we choose the symmetric ordering
\begin{equation}\label{eq:introQ}
 Q=D^r+\sum_{j=0}^{r-2}\frac12\{u_j,D^j\},
 \qquad u_j(x;\h)=\sum_{g\geq0}\h^{2g}u_{j,g}(x).
\end{equation}
The anticommutator specifies how derivatives and coefficient fields are placed in the operator. Commuting a derivative past a field produces additional terms, whose coefficients depend on this choice of ordering. The leading approximation to $Q\psi=E\psi$ is the algebraic relation
\begin{equation}\label{eq:introE}
 E(p,x)=p^r+\sum_{j=0}^{r-2}u_{j,0}(x)p^j,
 \qquad \Delta(p,x)=\partial_pE(p,x).
\end{equation}
There are three variables to keep track of. The coordinate $x$ describes the background, $p$ parametrizes the classical curve, and $E$ is the spectral energy. At a generic fixed energy there are $r$ values of $p$, or $r$ sheets. Two local sheets meet at a simple zero of $\Delta$.

Knowing the polynomial $E(p,\mu)$ at an endpoint is only part of the story. We also need to know how its coefficients vary with $x$ and which disk function $y(p)$ this motion selects. Pure gravity and Ising make the distinction concrete: they can share an endpoint energy polynomial while having different disks. The input to our inverse problem is therefore a leading \emph{family} of curves, including its background motion.

The scalar equation and its adjoint come with a conserved bilinear pairing, the higher-order counterpart of a Wronskian. Its normalization gives a formal diagonal resolvent $\Rhat$ whose leading term is $1/\Delta$. At each positive genus we seek a rational primitive $\Qhat_g$:
\begin{equation}\label{eq:introprimitive}
 \left.\partial_x\right|_E\Qhat_g=\Rhat_{2g},
 \qquad \omega_{g,1}=\Qhat_g\,\dd E\big|_{x=\mu}.
\end{equation}
The derivative is taken at fixed energy. As the background changes, $p$ must move along with the fields to keep $E$ unchanged. We allow finite poles of $\Qhat_g$ only where sheets meet, and require the amplitude differential to be regular at infinity. These two analytic conditions turn the search for a primitive and its unknown fields into a finite calculation.

The key observation is that the most singular terms are already known before the new fields are determined. At genus $g$, changing those fields changes the resolvent only through poles of order at most three. The fixed-energy derivative of a nonzero allowed primitive has a pole of order at least four, provided the branch points are simple and their energy values move. We can therefore start with the strongest poles and work downward: first determine the primitive, then recover the fields from what remains. This proves uniqueness at every order $r$ and genus $g$ whenever the compatibility conditions hold. The compatibility conditions decide whether a solution exists for the proposed leading family.

A known string equation helps us understand when a solution exists. For a pair satisfying $[P,Q]=\kappa\h$, the second operator defines a spectral connection $\mathsf B$. The same bilinear pairing that gave the resolvent also gives a sheet projector $\cM$, and the two obey
\begin{equation}\label{eq:introtrace}
 \partial_x\tr(\mathsf B\cM)=\Rhat.
\end{equation}
Thus the string pair supplies a primitive. If its positive-genus coefficients satisfy our pole and infinity conditions, the uniqueness result forces the inverse construction to recover that string solution. The projector also leads to connected amplitudes with several boundaries through cyclic traces. Loop equations for differential systems on curves provide a general setting for these correlators \cite{Belliard:2016mxj}. Their identification with topological recursion uses the topological-type property of the differential system \cite{BergereBorotEynard2015}; sufficient conditions for this property are established for integrable Lax systems in \cite{Belliard:2016cyc}.

The main symbols are collected in Table~\ref{tab:notation}. We will introduce the remaining notation when it is needed; in particular, the distinction between the operator $Q$ and the primitive $\Qhat_g$ will be used throughout.

\begin{table}[htbp]
\centering
\small
\setlength{\tabcolsep}{5pt}
\renewcommand{\arraystretch}{1.12}
\begin{tabular}{@{}>{\raggedright\arraybackslash}p{0.23\linewidth}>{\raggedright\arraybackslash}p{0.71\linewidth}@{}}
\toprule
Symbol & Meaning\\
\midrule
$x$, $\mu$ & Background coordinate and endpoint at which amplitudes are evaluated.\\[3pt]
$\h$, $g$, $n$ & Formal expansion parameter, genus (number of handles), and number of boundaries.\\[3pt]
$D=\h\partial_x$; $Q$, $P$ & Scaled derivative, scalar operator, and its string partner. Their orders are denoted by $(r,s)$.\\[3pt]
$u_j$, $u_{j,g}$ & Coefficient fields of $Q$ and their coefficients at order $\h^{2g}$.\\[3pt]
$p$; $E(p,x)$ & Coordinate on the classical curve and its energy projection.\\[3pt]
$\Delta=E_p$, $\alpha_i$ & Derivative of the energy projection and its finite zeros, the ramification points.\\[3pt]
$V=E_x|_p$, $v_i$ & Motion of the leading energy polynomial and branch-value velocities $v_i=V(\alpha_i,x)$.\\[3pt]
$\cD=\partial_x|_E$ & Derivative along the moving curve while holding the energy fixed.\\[3pt]
$y$; $B(p_1,p_2)$ & Disk function, with $\omega_{0,1}=y\,\dd E$, and the Bergman two-point differential.\\[3pt]
$\Rhat$, $\Rhat_{2g}$ & Normalized scalar resolvent and its coefficient at order $\h^{2g}$.\\[3pt]
$\Qhat_g$ & Positive-genus resolvent primitive, satisfying $\cD\Qhat_g=\Rhat_{2g}$.\\[3pt]
$\omega_{g,n}$, $W_{g,n}$ & Amplitude differential and its coefficient in curve coordinates: $\omega_{g,n}=W_{g,n}\prod_i\dd p_i$.\\[3pt]
$\cM$, $\mathsf B$ & Sheet projector and spectral connection used in the matrix description.\\
\bottomrule
\end{tabular}
\caption{Notation used in the main construction. The genus $g$ labels the surfaces counted by the expansion; the classical curves themselves have a rational parametrization.}
\label{tab:notation}
\end{table}

Table~\ref{tab:benchmarks} gives an overview of the results and the examples through which we develop them. Here $(r,s)$ lists the orders of $Q$ and $P$, respectively. EO denotes an independent Eynard--Orantin calculation from the endpoint energy, disk and Bergman differential. A full comparison means equality of the rational functions in all their displayed variables. This lets us distinguish comparisons with the literature, independent recursion calculations, and checks at selected values of the arguments.

\begin{table}[p]
\centering
\small
\setlength{\tabcolsep}{4pt}
\renewcommand{\arraystretch}{1.08}
\begin{tabular}{@{}>{\raggedright\arraybackslash}p{0.215\linewidth}>{\raggedright\arraybackslash}p{0.14\linewidth}>{\raggedright\arraybackslash}p{0.59\linewidth}@{}}
\toprule
Setting & Range & Result and comparison\\
\midrule
\multicolumn{3}{@{}l}{\textit{General construction}}\\[3pt]
Regular type $A_{r-1}$ & All $r\geq2$, all $g\geq1$ & At most one admissible primitive and one set of field corrections at each genus. Existence is conditional on compatibility; an admissible string solution supplies it. Theorem~\ref{thm:unique} and Proposition~\ref{prop:transfer}.\\[5pt]
Specified string system & General $g,n$ & Exact trace primitive \eqref{eq:traceprimitive} and cyclic-trace construction \eqref{eq:Wcycles}. Identification with EO additionally uses topological type \cite{BergereBorotEynard2015}.\\[5pt]
Generic third order & $g=1$ & Both moving-field corrections are explicit for general regular leading fields in \eqref{eq:genericB}--\eqref{eq:genericA}.\\[5pt]
Unitary $(r,r+1)$ & All $r\geq2$ & Closed $W_{0,3}$, $W_{1,1}$ and highest-field genus-one correction: \eqref{eq:Cheby03}, \eqref{eq:Cheby11}, \eqref{eq:Chebytop}.\\
\midrule
\multicolumn{3}{@{}l}{\textit{Explicit backgrounds and independent checks}}\\[3pt]
Pure gravity $(3,2)$ & Fields to $g=3$ & Painlev\'e~I through $\h^6$; $W_{1,1},W_{2,1}$ compared with the literature and EO; full matrix/EO $W_{1,2}$ \cite{BergereBorotEynard2015}. Genus-three checks use the string and trace identities.\\[5pt]
Ising $(3,4)$ & Fields to $g=3$, $a\equiv0$ & $W_{g,1}$ through $g=3$ compared with the literature, with disk normalization fixed; full matrix/EO $W_{1,2}$ \cite{BergereBorotEynard2015}. The stated reconstruction uses the zero-magnetic-field sector.\\[5pt]
$(4,5)$, $(5,6)$ & Fields to $g=2$ & String commutators through $\h^5$; full independent EO $W_{1,1},W_{2,1},W_{1,2}$. All homogeneous fields are allowed to vary.\\[5pt]
$(6,7)$, $(7,8)$ & Fields to $g=2$ & String and trace-primitive checks; full EO $W_{1,1},W_{2,1}$. Full forward-projection $W_{1,2}$ remains to be compared.\\[5pt]
Dual $(4,3)$, $(5,4)$, $(6,5)$ & Fields to $g=2$ & Complete dual operators, with the Ising pair compared with the literature; full EO $W_{1,1},W_{2,1},W_{1,2}$. Boundary functions differ between projections.\\[5pt]
Nonunitary $(3,5)$, $(5,3)$ & Fields to $g=2$ & Operators and both integrated string equations compared with the literature; full EO $W_{1,1},W_{2,1},W_{1,2}$, including complex ramification \cite{ChanIrieNiednerYeh2016}.\\[5pt]
Yang--Lee $(5,2)$, $(2,5)$ & Fields to $g=2$ & Fifth-/second-order pair and KdV string equation compared with the literature; full EO $W_{1,1},W_{2,1},W_{1,2}$ \cite{ChanIrieNiednerYeh2016}.\\[5pt]
Higher-Airy, $r=4,5$ & $W_{g,1}$ to $g=6$ & $r$-spin one-point coefficients compared with the literature; full $W_{0,3},W_{0,4},W_{1,2}$ intersection checks \cite{LiuVakilXu2011,EynardEtAl2023}. Tests the specified operator on the higher-Airy locus.\\
\bottomrule
\end{tabular}
\caption{Summary of the reconstruction, boundary construction and explicit comparisons. The finite genus bounds describe the stated homogeneous families. The general results in the upper part follow from the theorem and trace construction under their stated hypotheses.}
\label{tab:benchmarks}
\end{table}

Some further multi-boundary calculations are included in the appendices with a more limited set of checks. For Ising, the four-boundary result has eight exact matrix-cycle evaluations, while $W_{2,2}(0,p)$ has a separate projector extraction and local-pole checks. The corresponding full four- and two-variable comparisons remain open. Appendix~\ref{app:isingmulti} describes their scope, and Appendix~\ref{app:airydata} does the same for the higher-Airy coefficients beyond the fully compared cases in Table~\ref{tab:benchmarks}.

The paper is organized as follows. Section~\ref{sec:curve} develops the scalar reconstruction, and Section~\ref{sec:backgrounds} connects it to string equations and several boundaries. Section~\ref{sec:ising} works through pure gravity and Ising; Section~\ref{sec:unitary} extends the examples to higher-order, dual and nonunitary models. Section~\ref{sec:airy} studies higher-Airy limits and their connection to intersection theory. Section~\ref{sec:discussion} discusses what the reconstruction determines and the questions it leaves open. The appendices collect the proofs, detailed calculations, longer formulas and conventions for comparison with the literature.

\section{From the spectral curve to quantum fields}\label{sec:curve}
\subsection{The curve, its motion, and the disk}
Before constructing the higher-order resolvent, it helps to recall what the differential equation describes. In the KdV formulation the operator
\begin{equation}
 \mathcal H=-\h^2\partial_x^2+u(x)
\end{equation}
is an auxiliary quantum-mechanical Hamiltonian. The coordinate $x$ comes from the double-scaled orthogonal-polynomial description. It labels the background along which the calculation proceeds; boundary length and spectral energy are separate variables. Integrating an appropriate diagonal matrix element up to $x=\mu$ then gives the macroscopic-loop observable. The matrix-model string equation selects the potential on which this calculation is performed \cite{BanksDouglasSeibergShenker1990,AhmedJohnsonSaraswat2025,DiFrancescoGinspargZinnJustin1995}.

With the resolvent normalization used in that formulation, the Gel'fand--Dikii equation reads \cite{GelfandDikii1975}
\begin{equation}\label{eq:GD}
 4(u-E)\Rhat^2-2\h^2\Rhat\Rhat''+\h^2(\Rhat')^2=1.
\end{equation}
A prime on a field or resolvent in this equation denotes an $x$ derivative at fixed $E$. Write $u=u_0+\h^2u_2+O(\h^4)$, set $X=u_0(x)-E$, and choose the branch $\Rhat_0=-1/(2\sqrt X)$. We work locally where $u_0'\ne0$ and on a chosen branch of $\sqrt X$, away from $X=0$ when evaluating the expansion. In this Schr\"odinger example, $u_2$ denotes the coefficient of $\h^2$ in the potential. On the string solution with
\begin{equation*}
 u_2=-\frac1{12}\partial_x^2\log u_0',
\end{equation*}
the genus-one resolvent admits the primitive \cite{Johnson2024Virasoro,AhmedJohnsonSaraswat2025}
\begin{equation}\label{eq:KdVprimitive}
 \Qhat_1=\frac{u_0'}{32X^{5/2}}
       -\frac{u_0''}{48u_0'X^{3/2}},
 \qquad \partial_x\Qhat_1=\Rhat_2.
\end{equation}
Here $\Rhat_2$ is the coefficient of $\h^2$, a convention we keep throughout. To check the required potential correction explicitly, differentiating the displayed primitive gives
\begin{equation*}
 \partial_x\Qhat_1
 =-\frac{5(u_0')^2}{64X^{7/2}}
  +\frac{u_0''}{16X^{5/2}}
  -\frac{\partial_x^2\log u_0'}{48X^{3/2}},
\end{equation*}
whereas expanding \eqref{eq:GD} gives the same first two terms and $u_2/(4X^{3/2})$ as its last term. The logarithm is defined on a local branch; changing it by an additive constant leaves its derivatives unchanged. The useful feature of \eqref{eq:KdVprimitive} is that the integral has become an endpoint expression: knowing the potential and its derivatives there is enough to evaluate it. Our aim is to retain this feature when there are several coefficient fields to determine.

For the higher-order construction we choose a monic operator, with leading term $+D^r$. Its second-order member is therefore $D^2+u_0$. This differs in sign from the Schr\"odinger convention above, so we will fix the resolvent normalization directly from the scalar pairing. The corresponding recursion sign is given in Section~\ref{sec:EO}.

Consider \eqref{eq:introQ} and a leading WKB solution
\begin{equation}
 \psi(x,E;\h)\sim\exp\!\left(\frac{S_0(x,E)}{\h}+\cdots\right),
 \qquad p=\partial_xS_0.
\end{equation}
At leading order, $D\psi/\psi=p$. Replacing $D$ by $p$ and each field by its leading coefficient gives the polynomial
\begin{equation*}
 E(p,x):=p^r+\sum_{j=0}^{r-2}u_{j,0}(x)p^j.
\end{equation*}
This is the classical energy projection in \eqref{eq:introE}. The notation $E(p,x)$ emphasizes that its value depends on both the curve coordinate $p$ and the background coordinate $x$. The coefficients $u_{j,0}(x)$ are the specified leading fields, rather than their full quantum expansions. For a fixed $x$, the relation between $E$ and $p$ defines the classical curve, and varying $p$ moves along that curve. Varying $x$ instead changes the polynomial itself and hence moves us through a family of curves.

The same letter $E$ in $Q\psi=E\psi$ denotes a prescribed spectral value. To make the distinction explicit, temporarily call that fixed value $E_\star$. Solving $E(p,x)=E_\star$ gives the points of the curve above that energy. Generically there are $r$ roots, each defining a local sheet $p=p(x,E_\star)$. Thus $p$ is a parameter on the curve, while the function $E(p,x)$ tells us which energy labels that point. We continue to write simply $E$ for the fixed spectral value when no confusion can arise.

There are two notions of genus in this discussion. The classical curve is parametrized by $p$ and compactifies to $\mathbb P^1$, so its geometric genus is zero. The integer $g$ in $\omega_{g,n}$ counts handles of the surfaces summed over in the perturbative expansion. Thus a genus-zero spectral curve can encode, for example, a genus-two boundary amplitude.

To locate the points where sheets meet, first differentiate the energy polynomial with respect to $p$ while holding $x$ fixed. We use a subscript $p$ for this partial derivative and denote the resulting polynomial by $\Delta$:
\begin{equation}\label{eq:Deltadef}
 \begin{aligned}
 \Delta(p,x)&:=E_p(p,x):=\left.\frac{\partial E(p,x)}{\partial p}\right|_x
       =rp^{r-1}+\sum_{j=1}^{r-2}j\,u_{j,0}(x)p^{j-1},\\
 \Delta(\alpha_i(x),x)&=0.
 \end{aligned}
\end{equation}
The first line defines one function, $\Delta(p,x)$. The second evaluates that same function at $p=\alpha_i(x)$ and requires it to vanish; it is the equation that determines the ramification coordinates $\alpha_i(x)$. The index $i$ labels its roots. Likewise, $\Delta_p=\partial_p\Delta=E_{pp}$ denotes a second derivative of the energy polynomial at fixed $x$.

At a simple root, $\Delta_p(\alpha_i,x)\ne0$, and the Taylor expansion is
\begin{equation*}
 E(p,x)-E(\alpha_i,x)
 =\frac12\Delta_p(\alpha_i,x)(p-\alpha_i)^2
       +O\bigl((p-\alpha_i)^3\bigr).
\end{equation*}
The linear term vanishes because $\Delta(\alpha_i,x)=0$. The two signs of the resulting local square root describe the two sheets that meet there. A generic order-$r$ operator has $r-1$ such finite points. Increasing the order gives more branch points, while making them collide requires a special choice of coefficients. For example,
\begin{equation}
 E=p^4-6p^2+8p,
 \qquad \Delta=4(p-1)^2(p+2)
\end{equation}
has a partial collision. The pure $r$-Airy projection $E=p^r$ has a maximally degenerate finite ramification point. These loci require separate treatment from the simple-root calculation.

In spectral recursion this is the distinction between ordinary simple-ramification recursion and its higher-ramification extension \cite{EynardOrantin2007,BouchardEynard2013}.

The leading fields depend on $x$, giving a family of curves as the background changes. Define
\begin{equation}\label{eq:Vdef}
 V(p,x)=\left.\partial_xE(p,x)\right|_p
       =\sum_{j=0}^{r-2}u_{j,0}'(x)p^j.
\end{equation}
Unlike $E_p$, this derivative keeps the curve coordinate $p$ fixed and differentiates the leading fields. It measures the change in energy at the same value of $p$ as the background changes. Holding the \emph{energy} fixed is a different operation. On a chosen sheet, differentiate $E(p(x,E_\star),x)=E_\star$ by the chain rule:
\begin{equation*}
 0=\left.\frac{\partial E(p,x)}{\partial x}\right|_p
   +\left.\frac{\partial E(p,x)}{\partial p}\right|_x
       \left.\frac{\partial p}{\partial x}\right|_{E_\star}
  =V+\Delta\left.\frac{\partial p}{\partial x}\right|_{E_\star}.
\end{equation*}
Away from $\Delta=0$, this gives $\partial_xp|_E=-V/\Delta$. A function of $p$ and the background fields changes both explicitly with $x$ and through this induced change in $p$. Its total derivative is therefore
\begin{equation}\label{eq:totalderivative}
 \cD:=\left.\partial_x\right|_E
 =\left.\partial_x\right|_p-\frac{V}{\Delta}\partial_p.
\end{equation}
The first term differentiates every field and its derivatives. For example, if $f$ depends on the jet variables $u_{j,0}^{(m)}$, then this term contains $\sum_{j,m}u_{j,0}^{(m+1)}\partial f/\partial u_{j,0}^{(m)}$. A jet means only the value of a field together with finitely many of its derivatives.

The energies of the ramification points also move. Their velocities are
\begin{equation}\label{eq:velocities}
 v_i=\frac{\dd}{\dd x}E(\alpha_i(x),x)=V(\alpha_i,x).
\end{equation}
The term involving $\alpha_i'$ vanishes because $E_p(\alpha_i)=0$. Thus $v_i$ measures the motion of a branch \emph{value} in the energy plane; $\alpha_i'$ measures the motion of its ramification coordinate. The reconstruction below works locally where
\begin{equation}\label{eq:regular}
 \Delta_p(\alpha_i,x)\ne0,\qquad v_i\ne0
 \quad\hbox{for every finite ramification point.}
\end{equation}
The first condition keeps each branch point simple; the second ensures that its energy value moves with the background. They allow real or complex roots, and several branch points may have the same energy value. This distinction will allow the same calculation to cover both the unitary and nonunitary examples.

At the endpoint $x=\mu$, the energy projection is only one part of the recursion data. We use
\begin{equation}\label{eq:spectraldata}
 \omega_{0,1}(p)=y(p)\,\dd E(p,\mu),
 \qquad B(p_1,p_2)=\omega_{0,2}(p_1,p_2)
 =\frac{\dd p_1\dd p_2}{(p_1-p_2)^2}.
\end{equation}
The function $y$ is the disk amplitude, determined by the background and the selected normalization. The pair of functions $p\mapsto(E(p,\mu),y(p))$, together with $B$, is the spectral curve data used for the amplitudes.

For positive genus, the scalar primitive and the one-boundary differential are related by
\begin{equation}\label{eq:oneboundary}
 W_{g,1}(p)=\Delta(p,\mu)\Qhat_g(p,\mu),
 \qquad \omega_{g,1}=W_{g,1}(p)\dd p.
\end{equation}
More generally we write
\begin{equation}\label{eq:omegaW}
 \omega_{g,n}(p_1,\ldots,p_n)
 =W_{g,n}(p_1,\ldots,p_n)\prod_{i=1}^n\dd p_i.
\end{equation}
The displayed $W$'s are therefore coefficients of differentials in curve coordinates. A thermal amplitude or a geometric volume is obtained by a further, model-dependent transformation. Keeping the differential explicit makes those later changes of variables easier to track and helps specify the geometric interpretation appropriate to each model.

For ordinary JT gravity, the geometric interpretation is fixed by the matrix-integral description and trumpet gluing \cite{SaadShenkerStanford2019}, and connects with Mirzakhani's volume recursion and its matrix-model formulation \cite{Mirzakhani2007,EynardOrantin2007WP}. Supersymmetric volume recursions require different geometric data; recent spectral-curve and ODE treatments make that distinction explicit \cite{Johnson2026SuperRecursion,AhmedJohnsonSaraswat2025}.

At genus zero, $\cD\Qhat_0=1/\Delta$ determines the disk primitive up to a function of $E$. One must perform this integration at fixed energy before evaluating $x=\mu$. Locally where $V\ne0$, equation~\eqref{eq:totalderivative} lets us use $p$ as a coordinate along that fixed-energy trajectory and gives
\begin{equation}\label{eq:diskintegral}
 \Qhat_0=-\int\frac{\dd p}{V},
 \qquad y(p)=\Qhat_0(p,\mu).
\end{equation}
The integral follows a trajectory of fixed energy, along which the fields inside $V$ also evolve. This is why the endpoint function $y(p)$ must be found after carrying out the integral. At a zero of $V$, this change of integration variable is unavailable, but the original equation $\cD\Qhat_0=1/\Delta$ remains the defining relation wherever $\Delta\ne0$. An expression obtained on a patch with $V\ne0$ may be continued through a zero of $V$ when the resulting primitive is regular. For example, the Ising background has $V=-p/U^2$, yet its disk is regular at $p=0$. One may add a common function $C(E)$: it cancels between the two locally conjugate points in spectral recursion and leaves the stable amplitudes unchanged. The remaining part of the disk retains the background dependence that distinguishes models with the same endpoint energy polynomial.

\subsection{The normalized scalar resolvent}\label{sec:resolvent}
We next need a normalization of the resolvent that works at any order. The first step is to fix where the derivatives sit in the operator. Since $Du=uD+\h u'$, moving $D$ past a field produces an extra term. At third order, our symmetric convention \eqref{eq:introQ} gives
\begin{equation}\label{eq:Q3}
 Q=D^3+\frac12\{b,D\}+a
   =D^3+bD+a+\frac\h2b',
 \qquad a=u_0,\quad b=u_1.
\end{equation}
The last term is completely determined by $b$. It belongs to the ordering prescription, even though it carries an odd power of $\h$. Leaving it out while keeping both fields even would already introduce an unwanted term $-3\h b'p/\Delta^3$ in the formal product. This small example explains why ordering must be settled before comparing genus coefficients.

For computation, write the same operator in left-ordered form,
\begin{equation}\label{eq:leftcoeff}
 \begin{aligned}
 Q&=\sum_{j=0}^r c_jD^j,\qquad c_r=1,\qquad c_{r-1}=0,\\
 c_j&=u_j+\frac12\sum_{k=j+1}^{r-2}\binom{k}{j}
          \h^{k-j}\partial_x^{k-j}u_k,
          \qquad 0\leq j\leq r-2.
 \end{aligned}
\end{equation}
Thus the left-ordered coefficients $c_j$ are determined by the symmetric fields $u_j$ and their derivatives. At fourth order, for example, $\{u_2,D^2\}/2$ contributes $u_2D^2$, $\h u_2'D$ and $\h^2u_2''/2$. All three terms describe the same field in the chosen ordering, and all are retained below.

The formal adjoint of the left-ordered operator is
\begin{equation}
 Q^*=\sum_{k=0}^r(-D)^k\circ c_k.
\end{equation}
Let $Q\psi=E\psi$ and $Q^*\chi=E'\chi$. The higher-order version of Wronskian conservation is the Lagrange identity
\begin{align}
 \mathcal B(\chi,\psi)&=
 \h\sum_{k=1}^r\sum_{j=0}^{k-1}(-1)^j
       D^j(c_k\chi)D^{k-1-j}\psi,\label{eq:concomitant}\\
 \partial_x\mathcal B&=(E-E')\psi\chi.\label{eq:Lagrange}
\end{align}
This pairing is the bilinear concomitant. At equal energies its derivative vanishes, so it plays the same normalizing role as the conserved Wronskian of a second-order equation. We pair solutions with opposite leading WKB momenta and choose $\mathcal B=+\h$.

The solutions here are formal series on a chosen sheet. That is enough for the perturbative amplitudes we study. To obtain an analytic Green function one would also have to select actual solutions and their contours or boundary conditions. The concomitant and its differential-system interpretation are standard ingredients of the determinantal construction \cite{BergereEynard2009,BergereBorotEynard2015}.

Logarithmic derivatives express the wavefunction equations in terms of scalar functions that can be found recursively. Introduce
\begin{equation}
 w=\h\frac{\partial_x\psi}{\psi},\qquad
 \widetilde w=\h\frac{\partial_x\chi}{\chi}.
\end{equation}
The polynomials
\begin{equation}\label{eq:Bell}
 \cP_0(w)=1,\qquad
 \cP_{m+1}(w)=w\cP_m(w)+\h\cD\cP_m(w)
\end{equation}
satisfy $D^m\psi=\cP_m(w)\psi$. Each further derivative is therefore obtained by one multiplication and one differentiation. The first terms are $1,w,w^2+\h w'$ and $w^3+3\h ww'+\h^2w''$. We use $\cP_m$ for these polynomials; ordinary $P$ will denote the string partner.

Leibniz's rule similarly gives
\begin{equation}\label{eq:Pi}
 \Pi_j[c_k]=\sum_{i=0}^j\binom ji\h^i
   (\partial_x^i c_k)\cP_{j-i}(\widetilde w),
 \qquad D^j(c_k\chi)=\Pi_j[c_k]\chi.
\end{equation}
Factoring $\psi\chi$ out of \eqref{eq:concomitant} leaves a finite expression:
\begin{equation}\label{eq:Sigma}
 \Sigma=\sum_{k=1}^r\sum_{j=0}^{k-1}(-1)^j
               \Pi_j[c_k]\cP_{k-1-j}(w),
 \qquad \Rhat:=\psi\chi=\frac1\Sigma.
\end{equation}
The finite sum $\Sigma$ gives us the higher-order counterpart of the quadratic identity \eqref{eq:GD}. Once the logarithmic derivatives are known, the resolvent follows by series inversion. At leading order $w_0=p$ and $\widetilde w_0=-p$; the $k$ terms associated with $c_k$ then sum to $kc_kp^{k-1}$. Thus the normalization has a simple answer at every scalar order:
\begin{equation}\label{eq:Rleading}
 \Sigma_0=\Delta,\qquad \Rhat_0=\frac1\Delta.
\end{equation}

The symmetric operator with even fields obeys $Q^*(\h)=Q(-\h)$. With the paired branch convention, this implies
\begin{equation}\label{eq:parity}
 \widetilde w(\h)=-w(-\h),\qquad
 \Rhat(\h)=\sum_{g\geq0}\h^{2g}\Rhat_{2g}.
\end{equation}
This even expansion belongs to the normalized scalar pairing. The logarithmic derivative $w$ still has odd coefficients, as does the matrix projector used later for several boundaries. Those coefficients must be kept when building either object.

The scalar equation becomes $\sum_jc_j\cP_j(w)=E$. Expand $w=p+\sum_{m\geq1}\h^mw_m$ and set $w_{<m}=p+\sum_{k=1}^{m-1}\h^kw_k$. The coefficient multiplying the new $w_m$ is always $\Delta$, so
\begin{equation}\label{eq:WKBrec}
 w_m=-\frac1\Delta[\h^m]
       \left(\sum_{j=0}^r c_j\cP_j(w_{<m})-E\right).
\end{equation}
Square brackets extract the indicated power of $\h$. To find $w_m$, we substitute everything known at lower orders, take the coefficient of $\h^m$, and divide by $-\Delta$. The coefficient fields may remain symbolic if they are still to be reconstructed. This algebraic step determines each $w_m$ directly from the earlier coefficients.

After computing $\Sigma=\sum_m\h^m\Sigma_m$, its inverse is obtained recursively by
\begin{equation}\label{eq:Rrec}
 \Rhat_m=-\frac1\Delta\sum_{k=1}^m\Sigma_k\Rhat_{m-k},
 \qquad m\geq1.
\end{equation}
These recurrences give a practical sequence: put the operator in left-ordered form, find the logarithmic derivatives, evaluate the concomitant, and invert its series. Each step uses differentiation and finite algebra at the order being calculated. The same sequence applies at every scalar order, with a larger concomitant sum as $r$ increases.

At third order we can write these steps explicitly. Equation~\eqref{eq:Q3} gives
\begin{equation}\label{eq:riccati3}
 w^3+3\h ww'+\h^2w''+bw+a+\frac\h2b'=E,
\end{equation}
and its concomitant is
\begin{equation}\label{eq:Sigma3}
 \Sigma=w^2+\widetilde w^2-w\widetilde w
       +\h(w'+\widetilde w')+b.
\end{equation}
Here the leading denominator is $\Delta=3p^2+b_0$. We have reduced the two-field problem to scalar rational functions containing the unknown corrections to $a$ and $b$. The analytic conditions on the primitive will now supply the missing equations.

\subsection{Recovering the quantum fields}\label{sec:reconstruction}
Suppose we have completed the calculation through genus $g-1$. At genus $g$, both the next field corrections and the primitive are unknown. We seek a rational function $\Qhat_g$ whose derivative gives $\Rhat_{2g}$, whose finite poles lie only at zeros of $\Delta$, and whose differential $\Qhat_g\Delta\dd p$ is regular at infinity. Let $\mathbb K$ denote the field generated by the leading fields, the known lower-genus corrections, their finitely many $x$ derivatives and the fixed model parameters. The allowed primitive then belongs to $\mathbb K(p)$.

Put $d=\deg\Delta=r-1$. Repeated polynomial division gives the useful normal form
\begin{equation}\label{eq:slots}
 \Qhat_g=\sum_{k=2}^{K_g^{\max}}\frac{A_k(p)}{\Delta^k},
 \qquad \deg A_k<d,\qquad \deg A_2\leq d-2.
\end{equation}
The normal form makes the condition at infinity explicit. A polynomial part or a $\Delta^{-1}$ term would make the differential singular there, so both are absent. For $\Delta^{-2}$, the highest possible numerator coefficient must also vanish, giving the last inequality; at $r=2$ this sets $A_2=0$. We are therefore imposing a boundary condition on the amplitude differential before attempting to determine any field.

We call each denominator power a pole slot. It is a way of organizing the calculation with polynomials, while the actual poles lie at the roots of $\Delta$. Working directly with the polynomial keeps the calculation equally accessible when the roots are real, complex or inconvenient to express in radicals.

Suppose the fields have already been found below genus $g$. Collect the new corrections into the polynomial
\begin{equation}\label{eq:Ffield}
 F_g(p)=\sum_{j=0}^{r-2}u_{j,g}(x)p^j.
\end{equation}
The polynomial $F_g$ collects the new fields we want to find as its coefficients. Everything already known is collected in $S_g$, the coefficient of $\h^{2g}$ in the scalar recurrence with the new fields set to zero and the lower-genus corrections retained. The full coefficient is then
\begin{equation}\label{eq:fieldvariation}
 \boxed{\displaystyle
 \Rhat_{2g}=S_g+\frac{F_g\Delta_p-(\partial_pF_g)\Delta}{\Delta^3}.}
\end{equation}
It is useful to derive this formula in two steps: first change the fields at fixed $x$ and fixed spectral energy, then identify the power of $\h$ at which that change contributes. Introduce an auxiliary parameter $\epsilon$ and deform the energy polynomial by
\begin{equation*}
 E_\epsilon(p,x)=E(p,x)+\epsilon F_g(p,x).
\end{equation*}
For this variation, $x$ is held fixed. We also hold the spectral value $E_\star$ fixed, so the chosen root must change from $p$ to $p_\epsilon$:
\begin{equation*}
 E_\epsilon(p_\epsilon,x)=E(p,x)=E_\star,
 \qquad p_\epsilon=p+\epsilon\eta_g+O(\epsilon^2).
\end{equation*}
Here $\eta_g$ is the coefficient of the root displacement under the field variation. It is not the background derivative $\cD p$. Expanding the first equality gives
\begin{equation*}
 0=\epsilon\bigl(\Delta\eta_g+F_g\bigr)+O(\epsilon^2),
 \qquad \eta_g=-\frac{F_g}{\Delta}.
\end{equation*}
This is the first effect of the unknown fields: even at the same energy, they change which point of the deformed curve lies on our chosen sheet.

The leading resolvent is the reciprocal of the derivative of the energy polynomial, evaluated at that root. Its denominator changes in two ways. There is the explicit derivative of the added polynomial, and there is the change from evaluating the old derivative at $p_\epsilon$ rather than $p$:
\begin{align*}
 \Delta_\epsilon(p_\epsilon,x)
 &:=\left.\partial_qE_\epsilon(q,x)\right|_{q=p_\epsilon}\\
 &=\Delta+\epsilon\bigl(\partial_pF_g+\eta_g\Delta_p\bigr)
       +O(\epsilon^2).
\end{align*}
The variable $q$ here is only a temporary argument for the differentiation. All quantities on the second line are evaluated at the undeformed point $(p,x)$. Taking the reciprocal and inserting $\eta_g=-F_g/\Delta$ gives
\begin{align*}
 \frac1{\Delta_\epsilon(p_\epsilon,x)}
 &=\frac1\Delta
   -\epsilon\frac{\partial_pF_g+\eta_g\Delta_p}{\Delta^2}
     +O(\epsilon^2)\\
 &=\frac1\Delta
   +\epsilon\frac{F_g\Delta_p-(\partial_pF_g)\Delta}{\Delta^3}
     +O(\epsilon^2).
\end{align*}
The two numerator terms in \eqref{eq:fieldvariation} thus have separate origins: $-(\partial_pF_g)\Delta$ comes from changing the polynomial's derivative, while $F_g\Delta_p$ comes from moving its root. Holding $p$ fixed during this variation would miss the latter contribution.

We now set $\epsilon=\h^{2g}$ to insert the actual genus-$g$ fields. Before expanding the coefficient fields themselves, the normalized scalar resolvent has the form of a leading, derivative-free term plus explicit dispersive corrections,
\begin{equation*}
 \Rhat[u;\h]=\Rhat^{\rm cl}[u]
           +\h^2\mathcal R^{(2)}[u]
           +\h^4\mathcal R^{(4)}[u]+\cdots.
\end{equation*}
Here $u$ denotes the collection of coefficient fields; the functionals $\mathcal R^{(2m)}$ also involve their $x$ derivatives. Inserting a new field of order $\h^{2g}$ into a term already multiplied by $\h^{2m}$, with $m\geq1$, changes it only at order $\h^{2g+2m}$ or higher. Products with an already known positive-genus field are also of higher order, as are terms quadratic in the new fields. Consequently their contribution at order $\h^{2g}$ is only the first variation of $\Rhat^{\rm cl}=1/\Delta$ about the leading fields. All contributions involving known lower-genus fields are precisely the source $S_g$. This establishes \eqref{eq:fieldvariation} without using a string equation.

For a concrete check, take $r=3$ and write the new genus-one fields as $A=u_{0,1}$ and $B=u_{1,1}$. Then $F_1=A+Bp$ and $\Delta=3p^2+b_0$, so the general expression becomes
\begin{equation*}
 \Rhat_2-S_1=
 \frac{6pA+(3p^2-b_0)B}{(3p^2+b_0)^3}.
\end{equation*}
Both field corrections enter linearly, and no derivatives of $A$ or $B$ appear at this order.

This is the separation we need. The unknown fields affect poles of order at most three. The stronger poles are already fixed by the lower-genus data. We can match those stronger poles to find a candidate primitive before solving for the fields themselves.

For a single term in \eqref{eq:slots}, the fixed-energy derivative is
\begin{equation}\label{eq:slotderivative}
 \cD\left(\frac A{\Delta^k}\right)
 =\frac{A_x}{\Delta^k}
 -\frac{VA_p+kAV_p}{\Delta^{k+1}}
 +\frac{kAV\Delta_p}{\Delta^{k+2}}.
\end{equation}
Here $A_x$ is taken at fixed $p$. The last term is the only one that reaches denominator power $k+2$. At this power we keep the numerator modulo $\Delta$, so the relevant operation is multiplication by $kV\Delta_p$ in the finite algebra $\mathbb K[p]/(\Delta)$. Its determinant is
\begin{equation}\label{eq:determinant}
 \det\mathscr L_k=k^{r-1}\prod_{i=1}^{r-1}v_i\Delta_p(\alpha_i),
 \qquad \mathscr L_k[A]=[kV\Delta_pA].
\end{equation}
The two regularity conditions \eqref{eq:regular} make every factor nonzero. We can therefore solve for the numerator of the strongest pole and then move to the next one. In practice, the extended Euclidean algorithm finds the required inverse directly from the polynomial coefficients.

The direction of this construction is already useful in KdV: matching the strongest inverse powers first gives the algebraic, descending primitive calculation of Ahmed, Johnson and Saraswat \cite{AhmedJohnsonSaraswat2025}. Here the single-field matching is replaced by polynomial multiplication modulo $\Delta$, so the same logic can constrain all $r-1$ coefficient fields.

The scalar recurrence has pole order at most $6g+1$ in $\Rhat_{2g}$ at simple ramification. Equation~\eqref{eq:slotderivative} then bounds an admissible primitive by $K_g^{\max}=6g-1$. Starting with $S_g$, write $H_{k+2}$ for the numerator of the current strongest slot. The next primitive numerator is
\begin{equation}\label{eq:descent}
 A_k=\remd_{\Delta}\!\left(
       \frac{(V\Delta_p)^{-1}H_{k+2}}{k}\right),
 \qquad k=6g-1,\ldots,2.
\end{equation}
At each step we add $A_k/\Delta^k$ to the primitive and subtract its full derivative from the remainder. The next step starts with the strongest pole still present. Thus $H_{k+2}$ always refers to the reduced numerator after the earlier subtractions, and every inverse in \eqref{eq:descent} is taken modulo $\Delta$.

Once all poles of order four and higher have been removed, define
\begin{equation}\label{eq:fieldrecovery}
 K_g=\Delta^3(\cD\Qhat_g-S_g),
 \qquad F_g=\remd_{\Delta}\bigl((\Delta_p)^{-1}K_g\bigr).
\end{equation}
The calculation is accepted only if the full residual identities hold:
\begin{equation}\label{eq:obstructions}
 K_g=F_g\Delta_p-(\partial_pF_g)\Delta,
 \qquad [p^{r-2}]A_2=0.
\end{equation}
The first identity checks the entire remaining rational expression, beyond the congruence used to recover $F_g$. The second imposes the condition at infinity. If either fails, the proposed leading family has an obstruction at this genus in the chosen primitive class.

To see why the answer is unique, suppose there were two allowed primitives and subtract them. If their difference were nonzero, it would have a highest occupied slot $k\geq2$. Its derivative would then have a nonzero pole of order $k+2\geq4$, by \eqref{eq:determinant}. The difference of the two field contributions has pole order at most three. Equality therefore forces the primitive difference to vanish. The degree bound on $F_g$ then forces the fields to coincide as well. Appendix~\ref{app:proofs} gives the precise statement. The result holds at every scalar order and genus within the stated analytic class, with existence decided by the compatibility conditions \eqref{eq:obstructions}.

The generic third-order calculation makes the mechanism explicit. In this subsection $a(x),b(x)$ denote the leading fields of \eqref{eq:Q3}, with corrections $a+\h^2A$ and $b+\h^2B$. The ramification data are
\begin{equation}
 p_\pm=\pm\sqrt{-b/3},\quad v_\pm=a'+b'p_\pm,
 \quad v_+v_-=a'^2+\frac b3(b')^2.
\end{equation}
On the domain $b\ne0$ and $v_+v_-\ne0$, the pole calculation leaves three algebraic conditions on the two corrections. They have coefficient and augmented rank two and give
\begin{align}
 B&=\frac18\partial_x^2\log\!\left(3(a')^2+b(b')^2\right),\label{eq:genericB}\\
 A&=\frac18\partial_x\left[
       \sqrt{-b/3}\,\partial_x\log\!\left(\frac{v_+}{v_-}\right)
                        \right].\label{eq:genericA}
\end{align}
These expressions may be evaluated on any local branches of the logarithms, since additive constants disappear on differentiation. Substitution into the scalar recurrence gives the required rational derivative identity. At this genus the compatibility conditions hold for general regular leading functions $a$ and $b$, so the calculation determines both corrections throughout that family.

The top-field result may be written as
\begin{equation}\label{eq:Floc}
 B=3\partial_x^2F_1^{\rm loc},\qquad
 F_1^{\rm loc}=\frac1{24}(\log v_++\log v_-).
\end{equation}
The motion of the branch values controls this correction. The superscript reminds us that $F_1^{\rm loc}$ is a potential for a field correction. Its relation to the coefficient of $\log Z$ depends on the field-to-tau-function convention; Ising provides an explicit sign check below.

Genus-one corrections to integrable hierarchies also have a broader Frobenius-manifold and bihamiltonian formulation \cite{Dubrovin1996,DubrovinZhang1998}. Our formulas above are stated in symmetric operator coefficients, so a comparison with flat-coordinate deformation formulas requires the corresponding change of variables.

The compact third-order answer makes it tempting to look for every higher-order correction by interpolating branch velocities. This reproduces \eqref{eq:genericB}--\eqref{eq:genericA}. At fourth order, the lowest field of a moving family acquires an additional contribution, calculated in Appendix~\ref{app:diagnostics}. For higher scalar order we therefore return to the remainder construction and determine all field corrections independently.

\section{From string pairs to several boundaries}\label{sec:backgrounds}
\subsection{Choosing a background}
So far we have treated the leading family as given. To choose a physical model we need additional information, supplied by the integrable hierarchy \cite{Douglas1990,DrinfeldSokolov1985}. This is also the route by which we will compare the reconstructed fields with string equations in the literature. Begin with the formal root
\begin{equation}\label{eq:root}
 \mathcal L=Q^{1/r}=D+\sum_{j\geq1}\gamma_jD^{-j},
 \qquad \mathcal L^r=Q.
\end{equation}
Negative powers of $D$ here are formal symbols. Their multiplication is defined by the generalized Leibniz rule
\begin{equation}\label{eq:Leibniz}
 D^k f=\sum_{j\geq0}\binom kj\h^j f^{(j)}D^{k-j},
 \qquad k\in\mathbb Z.
\end{equation}
For $k\geq0$ the sum terminates; for $k<0$ it is a formal series. At any specified coefficient and order in $\h$, only finitely many terms contribute.

The commuting hierarchy flows are
\begin{equation}\label{eq:hierarchy}
 B_m=(Q^{m/r})_+,
 \qquad \h\partial_{t_m}Q=[B_m,Q],\qquad r\nmid m.
\end{equation}
The subscript $+$ retains the nonnegative powers of $D$. The times $t_m$ are deformation parameters, with $t_1$ identified with the spatial flow. The nontrivial congruence classes of $m$ modulo $r$ give $r-1$ primary sectors \cite{DrinfeldSokolov1985,BertolaDubrovinYang2016}. The energy projection also has its own set of $r$ sheet labels; we will explain their relation to the time sectors below.

For minimal-string backgrounds we specify a second differential operator $P$ and impose the Douglas string equation \cite{Douglas1990}
\begin{equation}\label{eq:stringpair}
 [P,Q]=\kappa\h,\qquad \kappa\ne0.
\end{equation}
The pure-flow examples below use $P=(Q^{s/r})_+$, with $r,s\geq2$ coprime. We list the order of $Q$ first and call this an ordered pair $(r,s)$. The matter central charge is
\begin{equation}\label{eq:centralcharge}
 c=1-\frac{6(r-s)^2}{rs}.
\end{equation}
The orders label the model. To calculate its amplitudes we also choose a hierarchy solution and a boundary projection \cite{DiFrancescoGinspargZinnJustin1995,SeibergShih2004,BelavinDubrovinMukhametzhanov2014,ChanIrieNiednerYeh2016}. The partner of order $s$ and the disk amplitude supply essential data alongside the degree-$r$ endpoint polynomial.

One consequence is already visible in Boussinesq. At $r=3$, the second flow has $B_2=D^2+2b/3$ and gives
\begin{equation}\label{eq:Boussinesq}
 \partial_{t_2}b=2a',\qquad
 \partial_{t_2}a=-\frac23bb'-\frac{\h^2}{6}b'''.
\end{equation}
This flow generally changes $a$ even when the chosen background has $a=0$. Adding a boundary therefore explores field variations beyond this particular background slice. We will keep this distinction in mind when constructing the Ising multi-boundary amplitudes.

For the examples, a common scaling form makes the background motion easy to follow. Assign weights $[p]=1$ and $[U]=2$, and define
\begin{equation}\label{eq:homvars}
 H=r+s,\qquad \delta=\frac{r+s-1}{2},\qquad
 U=x^{1/\delta},\qquad z=\frac p{\sqrt U}.
\end{equation}
Then $[x]=r+s-1$ and $[\h]=H$, so $D=\h\partial_x$ has weight one. We work on a chosen branch with $x\ne0$ and set the endpoint to $x=1$ for the displayed amplitudes. At that endpoint $U=1$ and $z=p$.

Write the leading pair as
\begin{equation}\label{eq:hompair}
 E(p,x)=U^{r/2}e(z),\qquad
 P_0(p,x)=U^{s/2}\pi(z),\qquad
 \pi=(e^{s/r})_+.
\end{equation}
Here $e$ and $\pi$ are monic polynomials of degrees $r$ and $s$; in the last expression the polynomial part is taken at $z=\infty$. The leading string equation is a constant Poisson bracket, with convention
\begin{equation}\label{eq:bracket}
 \{P_0,E\}=P_{0,p}E_x-P_{0,x}E_p
 =\frac{re\pi'-s\pi e'}{r+s-1}=\kappa.
\end{equation}
Primes on $e$ and $\pi$ mean $z$ derivatives. The weight assignment removes the overall $x$ dependence. The bracket condition then requires the remaining expression to be constant in $z$. Checking this bracket selects the leading homogeneous family before we start the quantum reconstruction.

The fixed-energy derivative now becomes a simple operation on rational functions:
\begin{equation}\label{eq:homderivative}
 \cD\bigl(U^\alpha f(z)\bigr)
 =U^{\alpha-\delta}\left[
       \frac\alpha\delta f(z)
       -\frac{re(z)}{2\delta e'(z)}f'(z)\right].
\end{equation}
Substituting \eqref{eq:bracket} shows that
\begin{equation}\label{eq:homdisk}
 \Qhat_0=-\frac{U^{s/2}\pi(z)}\kappa,
 \qquad y(z)=-\frac{\pi(z)}\kappa
 \quad\hbox{at }x=1.
\end{equation}
This fixes the disk normalization from the leading string pair. With both $E$ and $y$ specified, the ODE calculation and the later spectral-recursion comparison refer to the same boundary observable.

At higher genus, homogeneity separates the known dependence on $x$ from the unknown numerical coefficients. Since $u_jD^j$ has the same weight $r$ as $D^r$, the field $u_j$ has weight $r-j$. A term proportional to $\h^{2g}$ must therefore contain $U^{(r-j)/2-Hg}$, and we write
\begin{equation}\label{eq:homfields}
 u_j(x;\h)=\sum_{g\geq0}\beta_{j,g}\h^{2g}
                  U^{(r-j)/2-Hg},\qquad
 \beta_{j,0}=[z^j]e(z).
\end{equation}
The coefficient $\beta_{j,g}$ is a number independent of $x$ and $z$ once the homogeneous background and its fixed parameters have been chosen. Its first index tells us which field it corrects, and its second index is the genus. The leading numbers $\beta_{j,0}$ are the coefficients of the specified polynomial $e(z)$. The numbers $\beta_{j,g}$ with $g\geq1$ are what the reconstruction must determine. Thus the unknown function $u_{j,g}(x)$ has become an unknown constant multiplying a prescribed power of $U(x)$.

For example, in the Ising family $(r,s)=(3,4)$, one has $H=7$, $U=x^{1/3}$ and $b=u_1$. Its first two terms are
\begin{equation*}
 b(x;\h)=-3U+\beta_{1,1}\h^2U^{-6}+O(\h^4).
\end{equation*}
The power $U^{-6}=x^{-2}$ is fixed by scaling; the number $\beta_{1,1}$ is not. At each genus we solve for every allowed $\beta_{j,g}$, except where a sector restriction is explicitly imposed, as in the higher-genus Ising calculation. A coefficient that vanishes at leading order is still allowed to receive a quantum correction.

The primitive has a separate set of unknown coefficients. Its known overall scale is $U^{s/2-Hg}$, and its allowed poles occur at the zeros of $e'(z)$. Bringing the finite pole-slot expansion to a common denominator gives
\begin{equation}\label{eq:homprimitive}
 \Qhat_g=U^{s/2-Hg}\frac{N_g(z)}{e'(z)^{6g-1}},
 \qquad \deg N_g\leq(r-1)(6g-2)-2.
\end{equation}
Here $N_g(z)$ is an \emph{unknown polynomial}; the quotient in \eqref{eq:homprimitive} is the rational function. To state its unknowns explicitly, put
\begin{equation*}
 d_g^{\max}=(r-1)(6g-2)-2,\qquad
 N_g(z)=\sum_{\ell=0}^{d_g^{\max}}n_{g,\ell}z^\ell.
\end{equation*}
The numbers $n_{g,\ell}$ determine the primitive, whereas the numbers $\beta_{j,g}$ determine the fields of the operator. They are two different sets of coefficients, to be solved for together. The degree bound follows because the differential, at fixed $x$, is proportional to
$N_g(z)e'(z)^{-(6g-2)}\dd z$. If $N_g$ has degree $m$, its coefficient grows as $z^{m-(r-1)(6g-2)}$. Regularity at infinity requires this power to be at most $-2$, since $\dd z=-t^{-2}\dd t$ in the coordinate $t=1/z$. This gives exactly the bound in \eqref{eq:homprimitive}.

We can now make the coefficient-matching problem explicit. Remove the overall scale from the primitive and from the known source by defining
\begin{equation*}
 q_g(z)=\frac{N_g(z)}{e'(z)^{6g-1}},\qquad
 S_g(p,x)=U^{-(r-1)/2-Hg}s_g(z).
\end{equation*}
The function $s_g(z)$ is known from the scalar recurrence at this stage. Similarly, let $f_g(z)$ collect the unknown field coefficients:
\begin{equation*}
 f_g(z)=\sum_{j=0}^{r-2}\beta_{j,g}z^j,
 \qquad F_g(p,x)=U^{r/2-Hg}f_g(z).
\end{equation*}
The lower-case $f_g$ is the dimensionless version of the field polynomial $F_g$ in \eqref{eq:Ffield}; neither denotes a free energy. Using \eqref{eq:homderivative} and \eqref{eq:fieldvariation}, the primitive equation becomes
\begin{equation*}
 \frac{s/2-Hg}{\delta}\,q_g(z)
 -\frac{r e(z)}{2\delta e'(z)}q_g'(z)
 =s_g(z)+\frac{f_g(z)e''(z)-f_g'(z)e'(z)}{e'(z)^3}.
\end{equation*}
The common power $U^{-(r-1)/2-Hg}$ has cancelled. This is now an identity in $z$ with a known source and finitely many unknown numbers. Insert the polynomial expansions of $N_g$ and $f_g$, multiply by a common denominator, and equate the coefficient of every power of $z$. Each coefficient gives a linear equation for the $n_{g,\ell}$ and $\beta_{j,g}$. For instance, at $r=4$ and $g=1$ the bound gives $d_1^{\max}=10$: there are eleven primitive coefficients and three field coefficients, or fourteen unknown numbers before solving.

Solving this system determines the primitive and the fields simultaneously; any remaining coefficient equations test consistency. Alternatively, the descending-remainder method determines the same quantities by matching the strongest poles first. Agreement between these two arrangements is a useful implementation check. In both, the dispersive string equation is reserved for the subsequent comparison.

There is a useful distinction in the examples. Define $e_m(z)=2T_m(z/2)$, where $T_m(\cos\theta)=\cos(m\theta)$. A pure-Chebyshev pair satisfies
\begin{equation}\label{eq:Chebyidentity}
 s e_s e_r'-r e_r e_s'
 =\frac{2rs\sin((r-s)\theta)}{\sin\theta},
 \qquad z=2\cos\theta.
\end{equation}
Within the homogeneous single-flow ansatz \eqref{eq:hompair}, a nonzero constant bracket of this pure-Chebyshev form requires adjacent degrees. This explains the particularly simple unitary family used in Section~\ref{sec:unitary}.

The adjacent-degree condition belongs to this particular homogeneous single-flow ansatz. General coprime minimal strings also have Chebyshev descriptions at conformal times \cite{SeibergShih2004}. Our nonunitary examples use other leading polynomials and a specified choice of hierarchy times, while retaining a single moving scale $U$. They will therefore test the same reconstruction beyond the pure-Chebyshev family.

\subsection{A projector for several boundaries}\label{sec:boundaries}
The scalar resolvent was obtained by pairing a solution with its adjoint. To add boundaries, we retain the individual components of that pairing in matrix form. Introduce a column $v$ and a row $\ell$, indexed by $j=0,\ldots,r-1$:
\begin{equation}\label{eq:rowcolumn}
 v_j=\cP_j(w),\qquad
 \ell_j=\sum_{k=j+1}^r(-1)^{k-1-j}\Pi_{k-1-j}[c_k].
\end{equation}
The same finite sum as in \eqref{eq:Sigma} gives $\ell v=\Sigma$. Hence
\begin{equation}\label{eq:projector}
 \cM=\frac{v\ell}{\Sigma}=\Rhat v\ell,
 \qquad \cM^2=\cM,\quad \tr\cM=1,\quad \cM_{1r}=\Rhat.
\end{equation}
The normalization has turned the row and column into a rank-one projector in the defining $r$-dimensional representation. Its upper-right entry is the original scalar resolvent, so the matrix description inherits the normalization already fixed in the one-boundary calculation.

The derivative vector $\boldsymbol\psi=(\psi,D\psi,\ldots,D^{r-1}\psi)^t$ obeys a first-order companion system,
\begin{equation}\label{eq:companion}
 \h\partial_x\boldsymbol\psi=L\boldsymbol\psi,
 \qquad
 L=\begin{pmatrix}
 0&1&0&\cdots&0\\
 0&0&1&\cdots&0\\
 \vdots&&&\ddots&\vdots\\
 0&\cdots&0&0&1\\
 E-c_0&-c_1&\cdots&-c_{r-2}&-c_{r-1}
 \end{pmatrix}.
\end{equation}
The adjoint pairing gives the dual row system, so
\begin{equation}\label{eq:matrixODE}
 \h\partial_x\cM=[L,\cM].
\end{equation}
We can therefore pass between the scalar equation and a finite matrix system at every scalar order. Additional boundaries will be described by products of this same projector evaluated at different energies.

For example, at third order the projector is
\begin{equation}\label{eq:projector3}
 \cM=\Rhat
 \begin{pmatrix}1\\w\\w^2+\h w'\end{pmatrix}
 \begin{pmatrix}\widetilde w^2+\h\widetilde w'+b&-\widetilde w&1\end{pmatrix}.
\end{equation}
Expanding gives $\cM=\sum_{m\geq0}\h^mM_m$. The odd coefficients $M_1,M_3,\ldots$ generally survive, even though the scalar resolvent is even. They will contribute when projector coefficients are multiplied in multi-boundary amplitudes.

A specified string pair supplies a second differential equation for the wavefunction:
\begin{equation}\label{eq:spectralwave}
 \h\partial_E\psi=-\frac P\kappa\psi.
\end{equation}
The sign follows from \eqref{eq:stringpair}. Differentiating $(Q-E)\psi=0$ gives $(Q-E)\partial_E\psi=\psi$, while $(Q-E)P\psi=-\kappa\h\psi$. Thus \eqref{eq:spectralwave} is compatible with the original spectral equation.

To write it as a matrix system, reduce each $-D^jP/\kappa$ by right division by $Q-E$. The remainder has order less than $r$ and gives row $j+1$ of a matrix $\mathsf B$. This operation is often called folding the scalar operator. A formal fundamental matrix $\Psi$, whose columns are a basis of companion solutions, then satisfies
\begin{equation}\label{eq:flatness}
 \h\partial_x\Psi=L\Psi,\qquad
 \h\partial_E\Psi=\mathsf B\Psi,\qquad
 \partial_x\mathsf B-\partial_E L+\h^{-1}[\mathsf B,L]=0.
\end{equation}
The last equation is compatibility, or zero curvature. The construction is the folded reduced-KP system used in the differential-system approach \cite{BergereBorotEynard2015}.

Since $\partial_E L=e_{r1}$, where $e_{r1}$ has its only nonzero entry in row $r$, column $1$, equations~\eqref{eq:matrixODE} and \eqref{eq:flatness} give
\begin{equation}\label{eq:traceprimitive}
 \boxed{\displaystyle
 \partial_x\tr(\mathsf B\cM)=\Rhat,
 \qquad \Qhat_g=[\h^{2g}]\tr(\mathsf B\cM).}
\end{equation}
This gives a direct link between the two operators: the spectral connection from $P$ supplies a primitive for the resolvent built from $Q$. The identity holds as a formal series. Its short proof, and a scalar version expressed directly in WKB coefficients, are given in Appendix~\ref{app:proofs}.

To use this primitive in the inverse problem, we must still check that its positive-genus coefficients are rational in the required jets, have only the allowed poles and give a differential regular at infinity. Once these conditions hold, the uniqueness argument of Section~\ref{sec:reconstruction} identifies the reconstructed fields with the string solution at every genus. Appendix~\ref{app:proofs} states this implication precisely. It supplies existence for such string solutions, while leaving open the compatibility of an arbitrary leading family.

Choose a constant diagonal rank-one selector $\Pi$ for one formal sheet. Then $\cM=\Psi\Pi\Psi^{-1}$. The transport kernel
\begin{equation}\label{eq:transport}
 \cK(E_1,E_2)=\frac{\Psi^{-1}(E_1)\Psi(E_2)}{E_2-E_1}
\end{equation}
is built from the same pairing. For a dual row at $E_1$ and a column at $E_2$, \eqref{eq:Lagrange} gives $\partial_x\cK_{ij}=\chi_i(E_1)\psi_j(E_2)/\h$. Off the diagonal, the concomitant therefore integrates a product of solutions. This is the matrix form of the off-diagonal route to correlation kernels \cite{BergereBorotEynard2015}. Johnson develops a related off-diagonal differential equation for the kernel integrand in the Schr\"odinger case, with the diagonal limit reducing to Gel'fand--Dikii \cite{Johnson2025Kernel}.

Its finite diagonal term defines the first cumulant,
\begin{equation}\label{eq:C1}
 \cC_1(E)=\tr\!\left(\Pi\Psi^{-1}\partial_E\Psi\right)
        =\frac1\h\tr(\mathsf B\cM),
 \qquad \partial_x\cC_1=\frac{\Rhat}{\h}.
\end{equation}
Taking the coefficient of $\h^{2g-1}$ therefore returns the same one-boundary primitive as before. This is the normalization link that allows the higher cumulants to be interpreted as additional-boundary amplitudes.

For $n\geq2$, the connected determinant expansion retains single cycles and gives
\begin{equation}\label{eq:cycles}
 \boxed{\displaystyle
 \cC_n(E_1,\ldots,E_n)=
 -\sum_{\sigma\in\operatorname{Cyc}_n}
 \frac{\tr\!\left(\cM_1\cM_{\sigma(1)}\cdots
                         \cM_{\sigma^{n-1}(1)}\right)}
 {\prod_{i=1}^n(E_i-E_{\sigma(i)})}.}
\end{equation}
Here $\cM_i=\cM(E_i,x;\h)$, and $\operatorname{Cyc}_n$ consists of the $(n-1)!$ permutations with one $n$-cycle. Each term visits all $n$ energies once, and each cycle is counted once. We use the ordinary $r\times r$ matrix trace; some normalized Drinfeld--Sokolov generating functions use an adjoint-representation trace \cite{BertolaDubrovinYang2016}.

Our kernel denominator in \eqref{eq:transport} is the reverse of that in equation~(2-1) of BBE \cite{BergereBorotEynard2015}. The constant minus sign in \eqref{eq:cycles} is the corresponding choice after writing its denominators as $E_i-E_{\sigma(i)}$. For example,
\begin{align}
 \cC_2&=\frac{\tr(\cM_1\cM_2)}{(E_1-E_2)^2},\label{eq:C2}\\
 \cC_3&=-\frac{\tr(\cM_1[\cM_2,\cM_3])}
 {(E_1-E_2)(E_2-E_3)(E_3-E_1)}.\label{eq:C3}
\end{align}
Keeping the sign and the denominator orientation together is necessary for comparison with the scalar primitive.

For a topological-type solution, the expansion is
\begin{equation}\label{eq:TT}
 \cC_n=\sum_{g\geq0}\h^{2g-2+n}C_{g,n}.
\end{equation}
The genus coefficients are then pulled back from energy to curve coordinates:
\begin{equation}\label{eq:Wcycles}
 W_{g,n}(p_1,\ldots,p_n)=
 \left.\left(\prod_{i=1}^n\Delta(p_i,x)\right)
 [\h^{2g-2+n}]\cC_n(E_1,\ldots,E_n)
 \right|_{x=\mu,\ E_i=E(p_i,\mu)}.
\end{equation}
The factors of $\Delta_i$ simply convert $\dd E_i$ to $\dd p_i$. For this interpretation, the differential system must have topological type: lower forbidden powers of $\h$ cancel, the remaining powers have the correct parity, and the coefficients have the required poles. These properties hold for the applicable reduced-KP systems studied in BBE \cite{BergereBorotEynard2015}. The general loop equations and the sufficient conditions for topological type explain how this identification extends to other differential systems \cite{Belliard:2016mxj,Belliard:2016cyc}. Applying that extension requires checking the hypotheses for the system under study.

A formal insertion can be defined on the fundamental matrix by
\begin{equation}\label{eq:insertion}
 \delta_{E_0}\Psi(E)=
 \left(\frac{\cM(E_0)}{E_0-E}+\mathsf G(E_0)\right)\Psi(E).
\end{equation}
The compensating gauge matrix $\mathsf G(E_0)$ is independent of the argument $E$. It cancels from cyclic traces. Differentiating the kernel gives
\begin{equation}
 \delta_{E_0}\cK(E_1,E_2)
 =-\cK(E_1,E_0)\Pi\cK(E_0,E_2),
 \qquad \delta_{E_0}\cC_n=\cC_{n+1}.
\end{equation}
This explains the name boundary insertion: differentiating a connected cumulant produces the next one. The operation acts on the full hierarchy of fields and explores its independent deformations. Its relation to the hierarchy times extends the KdV loop-operator construction \cite{BanksDouglasSeibergShenker1990,OkuyamaSakai2019,Lowenstein2024OpenClosed,Johnson2026Universal,BertolaDubrovinYang2016}. The Boussinesq flow \eqref{eq:Boussinesq} gives a concrete reason to retain that full dependence.

In Johnson's universal KdV formulation the loop operator acts on the leading potential and its derivatives before the background is specialized \cite{Johnson2026Universal}. The projector formula likewise retains the full field dependence during insertion; we specialize to the homogeneous background afterwards.

The genus-one two-boundary function gives a useful example of the required coefficient bookkeeping. From \eqref{eq:C2},
\begin{equation}\label{eq:W12matrix}
 W_{1,2}(p_1,p_2)=
 \frac{\Delta_1\Delta_2}{(E_1-E_2)^2}
 \sum_{m=0}^2\tr\!\left(M_m(E_1)M_{2-m}(E_2)\right).
\end{equation}
In particular, the product $M_1(E_1)M_1(E_2)$ gives an essential contribution from the odd coefficients. The same scalar data through $\h^2$ also give $W_{0,4}$, while $W_{0,3}$ needs only $M_0$ and $M_1$. Once the projector is available, the number of boundaries tells us which products and which coefficient to extract.

For the unstable cylinder we use the Bergman differential in \eqref{eq:spectraldata}. A connected matrix resolvent is sometimes defined by subtracting an energy-space Cauchy term, giving
\begin{equation}\label{eq:cylinder}
 \omega^{\rm conn}_{0,2}=B(p_1,p_2)
       -\frac{\dd E_1\dd E_2}{(E_1-E_2)^2}.
\end{equation}
That choice must be tracked in a comparison of unstable amplitudes. The stable functions displayed in this paper use the unsubtracted Bergman initial data.

Finally, the $r$ mutually orthogonal sheet projectors sum to the identity. Their relation to the $r-1$ nontrivial primary time sectors is expressed through expansions at spectral infinity and the appropriate normalization of the hierarchy times. The explicit higher-Airy dictionary in Section~\ref{sec:airy} is one instance where that relation can be written coefficient by coefficient.

For one boundary, compute the scalar series through $\h^{2g}$ and use the remainder construction, or extract \eqref{eq:traceprimitive} from a specified string pair. For $n\geq2$, set $m=2g-2+n$. It is sufficient to know the even fields through $\h^{2\lfloor m/2\rfloor}$, retain all ordering contributions through $\h^m$, and construct $M_0,\ldots,M_m$. Formula~\eqref{eq:Wcycles} then gives the desired answer directly.

The order of operations matters: take all derivatives first, then set $x=\mu$. The calculation requires both the fields and their derivatives at the endpoint. The homogeneous derivative \eqref{eq:homderivative} keeps this background dependence explicit throughout the calculation. The procedure is finite at fixed $g,n$, with a number of cycles that grows with $n$. Comparing its computational cost with topological recursion would require a separate benchmark.

\subsection{Comparison with spectral recursion}\label{sec:EO}
We now have two ways to reach the amplitudes. The scalar construction starts from the differential operator. Eynard--Orantin recursion starts directly from the spectral data \eqref{eq:spectraldata} \cite{EynardOrantin2007}. Comparing their answers is a useful test because the intermediate calculations are different.

For the ordinary simple-ramification formulas below, the disk must be regular at each finite ramification point and satisfy $y_p(\alpha)\ne0$. In the examples here this follows directly from the moving disk primitive. Multiplying $\cD\Qhat_0=1/\Delta$ by $\Delta$ gives
\begin{equation*}
 \Delta\left.\partial_x\Qhat_0\right|_p
       -V\partial_p\Qhat_0=1.
\end{equation*}
Assume that $\Qhat_0$ and its fixed-$p$ background derivative are regular near the point in question. Taking the limit to $p=\alpha$ and then setting $x=\mu$ gives
\begin{equation*}
 y_p(\alpha)=-\frac{1}{v_\alpha},\qquad
 v_\alpha:=V(\alpha,\mu).
\end{equation*}
Thus the nonzero branch-value velocities of the reconstruction domain also ensure the nonvanishing disk derivatives needed by the residue formulas. This argument uses regularity of the disk primitive in addition to the conditions on $\Delta$ and $V$; those conditions alone do not prove the required regularity for an arbitrary leading family. Adding a function $C(E)$ that is regular at the branch value does not change this relation, since its $p$ derivative is $C'(E)\Delta$.

Near each simple ramification point $\alpha$, let $\sigma_\alpha(p)$ be the other local point with the same energy:
\begin{equation}
 E(\sigma_\alpha(p))=E(p),\qquad \sigma_\alpha(\alpha)=\alpha.
\end{equation}
The two conjugate points meet at $\alpha$. The kernel is
\begin{equation}\label{eq:EOkernel}
 K_\alpha(p_0,p)=
 \frac{(p_0-p)^{-1}-(p_0-\sigma_\alpha(p))^{-1}}
 {2[y(p)-y(\sigma_\alpha(p))]E_p(p)}\frac{\dd p_0}{\dd p}.
\end{equation}
The recursion combines amplitudes of lower complexity on these two local sheets. Its first term lowers the genus, while its product term distributes the genus and the remaining boundaries between two pieces. With $J=(p_2,\ldots,p_n)$, it reads
\begin{align}\label{eq:EOrec}
 \omega_{g,n}(p_1,J)=\sum_\alpha\Res_{p=\alpha}K_\alpha(p_1,p)
 \bigg[&\omega_{g-1,n+1}(p,\sigma_\alpha(p),J)\nonumber\\
 &+\sum_{\substack{g_1+g_2=g\\ I\sqcup I^c=J}}'
 \omega_{g_1,|I|+1}(p,I)
 \omega_{g_2,|I^c|+1}(\sigma_\alpha(p),I^c)\bigg].
\end{align}
The prime excludes factors $\omega_{0,1}$; the Bergman differential is retained as initial data. The overall sign is fixed by the kernel as written in \eqref{eq:EOkernel}. For example, $E=p^2/2$ and $y=p$ give
\begin{equation}\label{eq:EOsign}
 W_{1,1}=-\frac1{8p^4},\qquad W_{2,1}=-\frac{105}{128p^{10}}.
\end{equation}
Reversing $y$ multiplies stable outputs by $(-1)^{2g-2+n}$. More generally, multiplying $y$ by a nonzero constant $a$ multiplies them by $a^{-(2g-2+n)}$. This is why matching the disk precedes all positive-genus comparisons.

The first residues can be written in terms of the local derivatives of $E$ and $y$. For three planar boundaries,
\begin{equation}\label{eq:EO03}
 W_{0,3}(p_1,p_2,p_3)
 =-\sum_{\alpha:E_p(\alpha)=0}
   \frac{1}{E_{pp}(\alpha)y_p(\alpha)}
   \prod_{i=1}^3\frac1{(p_i-\alpha)^2}.
\end{equation}
The full genus-one answer is a finite sum of the local jets of $E$ and $y$. Appendix~\ref{app:proofs} gives the complete formula \eqref{eq:EO11} and its residue derivation.

The same recursion constructs $\omega_{1,2}$ from $\omega_{0,3}$, $\omega_{1,1}$ and $B$. Its genus-two one-point source is then
\begin{equation}\label{eq:EO21source}
 \omega_{1,2}(p,\sigma_\alpha(p))
 +\omega_{1,1}(p)\omega_{1,1}(\sigma_\alpha(p)).
\end{equation}
Thus the spectral calculation of $\omega_{2,1}$ uses $\omega_{1,2}$ along the way. The scalar primitive gives the same one-boundary quantity directly. Equality of the full rational functions compares two different constructions, and tests more than the expected orders of their poles.

For later examples it is useful to evaluate the residues directly from the ramification polynomial. Some nonunitary curves have complex roots, and one dual projection has an irreducible quartic ramification polynomial. Both can be treated exactly. If $E=e(z)$ at the endpoint, set $\mathscr D=e'/r$ and work in
\begin{equation}\label{eq:quotient}
 \mathscr A=\mathbb Q[a]/(\mathscr D(a)).
\end{equation}
Within this quotient algebra, $a$ denotes a formal root variable. Squarefreeness makes $\mathscr A$ a finite reduced algebra, possibly a product of number fields. The algebra trace satisfies
\begin{equation}
 \operatorname{Tr}_{\mathscr A}f(a)
 =\sum_{\mathscr D(\alpha)=0}f(\alpha).
\end{equation}
Thus local residues can be reduced modulo $\mathscr D$ and summed exactly within the quotient algebra. This is also useful for keeping all external coordinates symbolic in a two-boundary comparison.

The residue calculation uses only $E$, $y$, $B$ and the local involutions, so it is independent of the reconstructed fields and primitive numerators. The string-equation comparison asks a different question: starting with those fields, does an independently constructed fractional-power partner have the required commutator? Together these calculations test both the background and its amplitudes.

\section{Pure gravity and Ising}\label{sec:ising}
\subsection{Pure gravity as a calibration}
We begin with pure gravity, where the expected string equation is simple enough to recognize at the end of the calculation. Using its third-order realization also gives a first test with two fields. Choose $(r,s)=(3,2)$, so
\begin{equation}\label{eq:PIcurve}
 U=x^{1/2},\qquad e(z)=z^3-3z,\qquad
 \pi(z)=z^2-2,\qquad \kappa=-3.
\end{equation}
This gives the endpoint disk $y=(z^2-2)/3$. We take the leading fields from this family and leave both quantum corrections free. At genus one, \eqref{eq:genericB}--\eqref{eq:genericA} give $A=0$ and $B=1/(16x^2)$. Continuing the same primitive calculation through genus three yields
\begin{align}\label{eq:PIpotential}
 Q&=D^3-\frac32\{\mathfrak u,D\},\nonumber\\
 \mathfrak u&=x^{1/2}-\frac{\h^2}{48x^2}
 -\frac{49\h^4}{4608x^{9/2}}
 -\frac{1225\h^6}{55296x^7}+O(\h^8).
\end{align}
The zeroth-order coefficient field $a$ remains zero at each calculated genus, as a result of the reconstruction. To identify the solution, substitute the series for $\mathfrak u$ into
\begin{equation}\label{eq:PI}
 \mathfrak u^2-\frac{\h^2}{6}\mathfrak u''=x
\end{equation}
The residual vanishes through $\h^6$. We have recovered Painlev\'e~I in this normalization, with its genus expansion obtained from the primitive conditions \cite{DouglasShenker1990,GrossMigdal1990}.

BBE's third-order pure-gravity example uses $t_B=3x$, $\h_B=3\h$ and the equation $3\mathfrak u^2-\h_B^2\mathfrak u_{t_Bt_B}/2=t_B$ \cite[Section~6.2]{BergereBorotEynard2015}. The equality $\h_B\partial_{t_B}=D$ identifies the differential operators, while the disk normalization fixes the scale of the amplitudes. The endpoint one-boundary result is
\begin{equation}\label{eq:PIW11}
 W^{(3,2)}_{1,1}(p)
 =-\frac{p^6-p^4+19p^2+5}{48(p^2-1)^4}.
\end{equation}
Comparison with the literature gives agreement for this function and $W^{(3,2)}_{2,1}$ after the respective normalization factors $3$ and $3^3$. Independent EO calculations reproduce both functions as well. The full genus-one two-boundary function also agrees between the matrix and EO constructions; one compact specialization is
\begin{equation}\label{eq:PIW12}
 W^{(3,2)}_{1,2}(0,p)=
 \frac{10p^{10}-30p^8+45p^6+3p^4+69p^2+23}
 {48(p^2-1)^6}.
\end{equation}
At genus three, the field and primitive are checked through the string equation and trace identity. The independent EO comparisons extend through genus two.

\subsection{The Ising background and its one-boundary amplitudes}

The next example asks what changes when matter is added while the endpoint energy polynomial is kept the same. The Ising model on a fluctuating surface has a classical matrix-model realization \cite{Kazakov1986}; for comparison with the literature we use the differential-system conventions of Berg\`ere, Borot and Eynard \cite{BergereBorotEynard2015}.
Now take $(r,s)=(3,4)$ on the homogeneous zero-magnetic-field sector,
\begin{equation}\label{eq:Isingleading}
 a_0=0,\qquad b_0=-3U,\qquad U=x^{1/3}.
\end{equation}
At $x=1$ we again have $E=p^3-3p$, just as in pure gravity. Their different powers of $x$ give different motions away from the endpoint. We can see the consequence by deriving the Ising disk directly.

Using the fixed-energy integral \eqref{eq:diskintegral}, we eliminate the background through $b_0=E/p-p^2$ and $x=-b_0^3/27$. Thus $V=b_0'p=-9p/b_0^2$, and \eqref{eq:diskintegral} becomes an elementary rational integral. Its primitive is
\begin{equation}\label{eq:IsingQ0}
 \Qhat_0(E,p)=-\frac{E^2}{18p^2}-\frac{2Ep}{9}
                 +\frac{p^4}{36}+C(E).
\end{equation}
Only after this integration do we set $x=1$ and $E=p^3-3p$. Dropping the common $C(E)$ gives
\begin{equation}\label{eq:Isingdisk}
 E(p)=p^3-3p,\qquad
 y(p)=-\frac14(p^4-4p^2+2).
\end{equation}
The same result follows from \eqref{eq:homdisk} with $\pi=e_4$ and $\kappa=4$. The quartic disk records the background motion, supplying information that complements the common endpoint energy polynomial. Its normalization has been fixed by integrating the leading resolvent.

At genus one, $b_0(b_0')^2=-3/x$, so \eqref{eq:genericB} immediately gives the correction $\h^2/(8x^2)$ to $b$. The correction to $a$ vanishes. For higher genus we keep the entire field $a$ zero on the zero-magnetic-field sector and write
\begin{equation}\label{eq:Isingansatz}
 b=-3U+\sum_{g\geq1}d_g\h^{2g}U^{1-7g},
 \qquad
 \Qhat_g=U^{2-7g}q_g(z),\quad z=p/\sqrt U.
\end{equation}
The derivative needed in the scalar recurrence is
\begin{equation}\label{eq:Isingderivative}
 \cD\bigl(U^\nu f(z)\bigr)
 =U^{\nu-3}\left[
 \frac\nu3f(z)-\frac{z(z^2-3)}{6(z^2-1)}f'(z)\right].
\end{equation}
This change of variables makes each fixed-energy derivative explicit. At each genus we then determine the unknown coefficients by solving a rational identity in $z$.

The allowed poles are at $z=\pm1$. Evenness in $z$ and regularity of $3(z^2-1)q_g(z)\dd z$ at infinity give
\begin{equation}\label{eq:Isingqansatz}
 q_g(z)=\frac{\sum_{j=0}^{6g-3}c_{g,j}z^{2j}}
                     {(z^2-1)^{6g-1}}.
\end{equation}
At genus $g$ the unknowns are the field coefficient $d_g$ and the coefficients of the primitive numerator. There are $5$, $11$ and $17$ unknowns at the first three genera, respectively. In each case the linear equations are consistent and determine them uniquely, giving
\begin{equation}\label{eq:Isingb}
 b=-3x^{1/3}+\frac{\h^2}{8x^2}
 +\frac{1925\h^4}{15552x^{13/3}}
 +\frac{509575\h^6}{1119744x^{20/3}}+O(\h^8).
\end{equation}
These coefficients have been obtained solely from the primitive conditions within the stated homogeneous Ising sector. We can now ask whether they satisfy the string equation of that model.

For the independent comparison, put $b=-3\mathfrak u$. The complete BBE Ising pair is recorded in Appendix~\ref{app:published}. With $t_B=4x$, $\h_B=4\h$, one has $D_B=D$. Their scalar equation becomes
\begin{equation}\label{eq:Isingstring}
 \mathfrak u^3-\frac34\h^2\mathfrak u\mathfrak u''
 -\frac38\h^2(\mathfrak u')^2+\frac{\h^4}{24}\mathfrak u''''=x.
\end{equation}
Substitution of \eqref{eq:Isingb} makes the residual vanish through $\h^6$. The normalization map also identifies the complete ordered operator, including the derivative contribution, so the comparison with the literature tests the quantum equation as well as its classical limit \cite[Section~6.3]{BergereBorotEynard2015}.

The reconstructed primitive gives
\begin{equation}\label{eq:IsingW11}
 W_{1,1}(p)=-\frac{3p^6-3p^4+17p^2+7}{72(p^2-1)^4}.
\end{equation}
At genus two,
\begin{equation}\label{eq:IsingW21}
 W_{2,1}(p)=-\frac{5P_2(p)}{31104(p^2-1)^{10}},
\end{equation}
where the numerator $P_2$ is written in Appendix~\ref{app:polynomials}.

The genus-three answer is
\begin{equation}\label{eq:IsingW31}
 W_{3,1}(p)=-\frac{5P_3(p)}{10077696(p^2-1)^{16}},
\end{equation}
with $P_3$ in Appendix~\ref{app:polynomials}. The increasing denominator powers are consistent with the primitive bound, while the degree of each numerator makes the differential regular at infinity.

The disk in \eqref{eq:Isingdisk} is one quarter of BBE's disk, so their stable differentials and ours are related by
\begin{equation}\label{eq:BBEscale}
 \omega_{g,n}=4^{2g-2+n}\omega^B_{g,n}.
\end{equation}
After this rescaling, comparison with the literature reproduces \eqref{eq:IsingW11} and every coefficient of the genus-two and genus-three numerators, with factors $4^3$ and $4^5$ at those genera \cite[Section~6.3]{BergereBorotEynard2015}. Recovering these known amplitudes tests both the primitive criterion and the normalization inherited from the disk.

\subsection{Additional Ising boundaries}
We now use the same scalar data to add boundaries. The two branch points suggest a convenient way to write the answer: let $L_\alpha(p)=(p-\alpha)^{-1}$ for $\alpha=\pm1$. The three-cycle formula gives
\begin{equation}\label{eq:IsingW03}
 W_{0,3}(p_1,p_2,p_3)=-\frac16
       \sum_{\alpha=\pm1}\prod_{i=1}^3L_\alpha(p_i)^2.
\end{equation}
The denominators at coincident energies cancel when the cycles are combined, leaving the final answer regular there. The result agrees with the direct residue formula \eqref{eq:EO03} and, after the disk rescaling, with the three-point function in the literature.

For the genus-one two-boundary function, use the symmetric polynomial $\mathsf H_\alpha$ defined in Appendix~\ref{app:isingmulti}. The full result is
\begin{align}\label{eq:IsingW12}
 W_{1,2}(p_1,p_2)=\frac1{3456}\sum_{\alpha=\pm1}\big[&
 L_\alpha(p_1)^2L_\alpha(p_2)^2
 \mathsf H_\alpha(L_\alpha(p_1),L_\alpha(p_2))\nonumber\\
 &+3L_\alpha(p_1)^2L_{-\alpha}(p_2)^2\big].
\end{align}
The last term couples the two ramification points and records their contribution to a single global amplitude. The complete function is symmetric in its arguments and regular when the two points coincide away from ramification. For example,
\begin{equation}\label{eq:IsingW12slice}
 W_{1,2}(0,p)=
 \frac{35p^{10}-105p^8+160p^6-36p^4+117p^2+69}
 {216(p^2-1)^6},
 \qquad W_{1,2}(0,0)=\frac{23}{72}.
\end{equation}
An independent EO calculation reproduces \eqref{eq:IsingW12} in both variables. This adds a complete two-boundary comparison to the one-point comparisons with the literature above.

Appendix~\ref{app:isingmulti} gives two further outputs with more limited checks. The planar four-boundary expression, obtained from exact local residues, agrees with the matrix cycles at eight exact rational tuples. The specialization $W_{2,2}(0,p)$ is reproduced by a separate projector extraction, and a local expansion checks the six most singular Laurent coefficients, those of $(p-\alpha)^{-12},\ldots,(p-\alpha)^{-7}$ at each $\alpha=\pm1$. A full four-variable comparison for $W_{0,4}$ and a full two-variable matrix/EO comparison for $W_{2,2}$ remain to be done.

\section{From unitary to nonunitary models}\label{sec:unitary}
\subsection{The moving unitary family}
The Ising curve belongs to a moving family whose two operator orders are adjacent. This gives a natural way to increase the order while keeping the leading background and disk explicit. For $(r,s)=(r,r+1)$, the Chebyshev construction is
\begin{equation}\label{eq:Chebycurve}
 U=x^{1/r},\qquad E(p,x)=U^{r/2}e_r(p/\sqrt U),\qquad
 y(p)=-\frac{e_{r+1}(p)}{r+1}\quad(x=1).
\end{equation}
The identity \eqref{eq:Chebyidentity} gives $\kappa=r+1$. The case $r=3$ is precisely the Ising disk just derived. The cases $(4,5)$ and $(5,6)$ have central charges $7/10$ and $4/5$, respectively; the former is tricritical Ising, while the latter here denotes the diagonal $A$-series minimal model. Throughout these examples we use the diagonal $A$-series modular invariant \cite{DiFrancescoGinspargZinnJustin1995}.

The finite ramification points are
\begin{equation}\label{eq:Chebyroots}
 \alpha_j=2\cos\frac{j\pi}{r},\qquad j=1,\ldots,r-1.
\end{equation}
At the endpoint these are the ramification coordinates. Along the moving family, the branch values are $2(-1)^j\sqrt x$ and their velocities are $(-1)^j/\sqrt x$. For $x\ne0$ the ramification remains simple and every velocity is nonzero. Both conditions continue to hold when several points project to the same energy, so the reconstruction applies throughout this regular domain.

Substituting the Chebyshev derivatives into \eqref{eq:EO03} gives the formula valid at arbitrary scalar order
\begin{equation}\label{eq:Cheby03}
 W^{\mathrm C,r}_{0,3}(p_1,p_2,p_3)
 =-\frac2{r^2}\sum_{j=1}^{r-1}\sin^2\frac{j\pi}{r}
              \prod_{i=1}^3\frac1{(p_i-\alpha_j)^2}.
\end{equation}
The superscript $\mathrm C$ specifies this Chebyshev background. At $r=3$, each branch coefficient is $-1/6$, recovering \eqref{eq:IsingW03}.

There is also a closed genus-one formula. The differential equation
\begin{equation}
 (4-p^2)e_r''-pe_r'+r^2e_r=0
\end{equation}
reduces the jets entering \eqref{eq:EO11} to rational expressions in $\alpha_j$. For example, $X_3/X_2=3\alpha_j/(4-\alpha_j^2)$ and $Y_2/Y_1=-r\alpha_j/(4-\alpha_j^2)$. The complete result is
\begin{align}\label{eq:Cheby11}
 W^{\mathrm C,r}_{1,1}(p)=\sum_{j=1}^{r-1}\bigg[&
 -\frac{4-\alpha_j^2}{16r^2(p-\alpha_j)^4}
 +\frac{\alpha_j}{16r^2(p-\alpha_j)^3}\nonumber\\
 &+\left(\frac{\alpha_j^2}{16r^2(4-\alpha_j^2)}
       -\frac{(r+2)(r-1)}{48r^2}\right)\frac1{(p-\alpha_j)^2}
 \bigg].
\end{align}
The branch sums give the full answer at arbitrary scalar order. In particular, they retain the lower-order poles as well as the leading local Airy singularity. This makes them useful for comparing the complete amplitudes obtained by reconstructing the fields.

The large-$p$ term of \eqref{eq:Cheby11} is $-(r-1)/(48p^2)$. Restoring the $x$ dependence and comparing the leading term of the primitive derivative with \eqref{eq:fieldvariation} gives
\begin{equation}\label{eq:Chebytop}
 u_{r-2,1}=\frac{r(r-1)}{48x^2},\qquad
 F_1^{\rm loc}=-\frac{r-1}{48}\log x,
 \qquad u_{r-2,1}=r\partial_x^2F_1^{\rm loc}.
\end{equation}
The simple expression determines the highest nonleading field on this unitary family. To see what happens to the other fields, we now carry out the full reconstruction at orders four and five.

At fourth order, the full homogeneous reconstruction gives
\begin{align}\label{eq:fields45}
 u_2&=-4U+\frac{\h^2}{4}U^{-8}
              +\frac{3549\h^4}{8192}U^{-17}+O(\h^6),\nonumber\\
 u_1&=0+O(\h^6),\nonumber\\
 u_0&=2U^2-\frac{9\h^2}{32}U^{-7}
              -\frac{18477\h^4}{40960}U^{-16}+O(\h^6),
 \qquad U=x^{1/4}.
\end{align}
All three field corrections were left free, so the zero in $u_1$ is part of the answer. The genus-one amplitude is
\begin{equation}\label{eq:W11r4}
 W^{\mathrm C,4}_{1,1}(p)=
 -\frac{2p^{10}-9p^8+17p^6+2p^4-4p^2+8}
 {32p^4(p^2-2)^4}.
\end{equation}
Its poles occur at the three roots of $e_4'$, and it agrees with \eqref{eq:Cheby11}. The genus-two result is
\begin{equation}\label{eq:W21r4}
 W^{\mathrm C,4}_{2,1}(p)
 =-\frac{N_4(p)}{40960p^{10}(p^2-2)^{10}},
\end{equation}
with $N_4$ given in Appendix~\ref{app:polynomials}.

At fifth order the corresponding fields are
\begin{align}\label{eq:fields56}
 u_3&=-5U+\frac{5\h^2}{12}U^{-10}
              +\frac{1309\h^4}{1125}U^{-21}+O(\h^6),\nonumber\\
 u_2&=u_0=0+O(\h^6),\nonumber\\
 u_1&=5U^2-\h^2U^{-9}
              -\frac{46003\h^4}{18000}U^{-20}+O(\h^6),
 \qquad U=x^{1/5}.
\end{align}
The vanishing corrections to $u_2,u_0$ are outputs of the all-field fit. The endpoint amplitudes include
\begin{align}\label{eq:W11r5}
 W^{\mathrm C,5}_{1,1}(p)
 &=-\frac{10p^{14}-81p^{12}+257p^{10}-372p^8+325p^6-228p^4+135p^2+33}
 {120(p^4-3p^2+1)^4},\\
 W^{\mathrm C,5}_{2,1}(p)
 &=-\frac{N_5(p)}{180000(p^4-3p^2+1)^{10}}.\label{eq:W21r5}
\end{align}
The numerator $N_5$ is recorded in Appendix~\ref{app:polynomials}. The lower fields in \eqref{eq:fields45} and \eqref{eq:fields56} have their own nontrivial corrections. Their values show why recovering the highest field alone would leave part of the operator undetermined.

At both scalar orders we compare the reconstructed operator with the standard string-pair construction by independently forming $P=(Q^{(r+1)/r})_+$. Its commutator satisfies $[P,Q]-(r+1)\h=0$ through $\h^5$. We also calculate the amplitudes directly from the endpoint curve using EO recursion: the full $W_{1,1}$, $W_{2,1}$ and two-variable $W_{1,2}$ agree. The relation to the literature is through the hierarchy and its string-pair construction; the explicit amplitude comparison is with this independent residue calculation.

The same calculation has been performed at $r=6,7$ through genus two, with all $r-1$ homogeneous fields free. The simultaneous systems have full ranks $24,54$ at order six and $29,65$ at order seven for genera one and two. The descending-remainder calculation recovers the same coefficients independently of the simultaneous fit. Appendix~\ref{app:higherfields} gives the full nonzero fields.

At orders six and seven, the fractional-power string equations again hold through $\h^5$, and the folded trace primitives agree through genus two. The genus-one amplitudes reproduce \eqref{eq:Cheby11}, and the full genus-two one-boundary functions agree with EO. These comparisons extend the one-boundary check to additional fields and branch points. The full genus-one two-boundary comparison remains open for the forward $(6,7)$ and $(7,8)$ backgrounds.

\subsection{Dual scalar projections}

A further test comes from exchanging the two operators. This is the usual $p$--$q$ duality of the minimal-string description \cite{FukumaKawaiNakayama1992,SeibergShih2004,ChanIrieNiednerYeh2016}. It lets us describe one string solution through two different scalar equations, while asking how the choice of projection affects a marked boundary.
A shared string pair has a simple relation between its highest coefficient fields. From \eqref{eq:root},
\begin{equation}
 Q^{1/r}=D+\frac{u_{r-2}}rD^{-1}+O(D^{-2}),
 \qquad [D^{s-2}](Q^{s/r})_+=\frac sr u_{r-2}.
\end{equation}
Derivatives of that root coefficient contribute at lower powers of $D$. Therefore the two scalar realizations of the same normalized pair have
\begin{equation}\label{eq:dualtop}
 \mathfrak u=-\frac{u^{(r,s)}_{r-2}}r
             =-\frac{u^{(s,r)}_{s-2}}s.
\end{equation}
The equality holds for the same normalized pair, with the same $x$, $\h$ and background. It gives a useful common quantity to compare when the orders are exchanged.

If $Q^\vee=P$ and $P^\vee=Q$, then $\kappa^\vee=-\kappa$. The endpoint spectral data transform as
\begin{equation}\label{eq:swap}
 (E^\vee,y^\vee)=\left(-\kappa y,\frac E\kappa\right).
\end{equation}
The exchange changes the energy projection, its ramification points and the disk. It therefore changes the marked-boundary amplitudes. We first compare the complete dual operators and then calculate each projection's boundary functions using its own spectral data.

For fourth-order Ising, the leading data are
\begin{equation}
 E=p^4-4p^2+2,\qquad y=\frac14(p^3-3p).
\end{equation}
The reconstructed operator is the fourth-order member of \eqref{eq:BBEIsingops}, written in our variables:
\begin{equation}\label{eq:IsingdualQ}
 Q^{(4,3)}=D^4-4\mathfrak uD^2-4\h\mathfrak u'D
          +2\mathfrak u^2-\frac53\h^2\mathfrak u''.
\end{equation}
In symmetric fields this means $u_2=-4\mathfrak u$, $u_1=0$ and $u_0=2\mathfrak u^2+\h^2\mathfrak u''/3$. Comparison with the Ising pair in the literature includes the last derivative term and holds through the calculated order. The new boundary projection gives
\begin{equation}\label{eq:IsingdualW11}
 W^{(4,3)}_{1,1}(p)=
 -\frac{3p^{10}-15p^8+35p^6+22p^4-28p^2+24}
 {72p^4(p^2-2)^4}.
\end{equation}
The poles are governed by the same degree-four energy polynomial as in \eqref{eq:W11r4}. The disk and background scaling enter through the numerators, giving distinct amplitudes for the two backgrounds.

The $(5,4)$ and $(6,5)$ reconstructions likewise agree with the complete dual operators of the forward $(4,5)$ and $(5,6)$ solutions through genus two. In particular, the sixth-order highest field has
\begin{equation}\label{eq:dual65}
 \beta^{(6,5)}_{4,1}=\frac12,\qquad
 \beta^{(6,5)}_{4,2}=\frac{2618}{1875}.
\end{equation}
The independent reconstruction gives these values, in agreement with the shared-potential relation. For all three dual backgrounds, the full $W_{1,1}$, $W_{2,1}$ and two-variable $W_{1,2}$ also agree with EO. Appendix~\ref{app:NUdata} records the complete genus-two one-boundary expressions.

\subsection{Nonunitary pairs}\label{sec:nonunitary}
We next move away from the unitary family. The $(3,5)$ model has $c=-3/5$ and gives a useful example in which both leading fields move and the ramification points are complex. Choose
\begin{align}\label{eq:35curve}
 e(z)&=z^3+9z+9,\nonumber\\
 \pi(z)&=z^5+15z^3+15z^2+45z+90,\nonumber\\
 \kappa&=-405,\qquad U=x^{2/7},\qquad y(z)=\frac{\pi(z)}{405}.
\end{align}
The relation $3e\pi'-5\pi e'=-2835$ verifies the constant bracket before any quantum calculation. The roots $z=\pm\ii\sqrt3$ of $e'$ are simple, and their branch values have nonzero velocities. The polynomial-remainder algorithm therefore applies under the same analytic conditions.

Writing the operator as $Q=D^3+\{b,D\}/2+a$, the all-field reconstruction gives
\begin{align}\label{eq:35fields}
 b&=9U+\frac{\h^2}{7}U^{-7}
             -\frac{3289\h^4}{108045}U^{-15}+O(\h^6),\nonumber\\
 a&=9U^{3/2}+\frac{1573\h^4}{36015}U^{-29/2}+O(\h^6).
\end{align}
Here the genus-one correction to $a$ vanishes and its genus-two correction is nonzero. The example illustrates why each field must be determined afresh at each genus.

The independent model identification comes from the third-/fifth-order Baker--Akhiezer system of Chan, Irie, Niedner and Yeh \cite[Appendix~B.7]{ChanIrieNiednerYeh2016}. After the ordering and coordinate changes recorded in Appendix~\ref{app:published}, its two integrated string equations are
\begin{align}\label{eq:35strings}
 a^2&=\frac{b^3}{9}+\frac{\h^2}{3}bb''
      +\frac{\h^2}{4}(b')^2+\frac{\h^4}{15}b'''',\nonumber\\
 ab^2&+\frac{\h^2}{2}\left(b'a'+ab''+2ba''\right)
       +\frac{\h^4}{5}a''''=729x.
\end{align}
At leading order the equations give $a^2=b^3/9$ and $ab^2=729x$, consistent with \eqref{eq:35curve}. The reconstructed fields satisfy both through $\h^4$. Thus the comparison with the literature checks the two field equations together, including the correction to the lower field.

The genus-one one-boundary function is
\begin{equation}\label{eq:35W11}
 W^{(3,5)}_{1,1}(p)=
 -\frac{p^6+3p^4-9p^3+36p^2+54p-54}{21(p^2+3)^4}.
\end{equation}
Its lack of a definite parity in $p$ reflects the nonzero leading zeroth-order coefficient field $a_0$. The general three-boundary formula becomes
\begin{equation}\label{eq:35W03}
 W^{(3,5)}_{0,3}(p_1,p_2,p_3)=
 \sum_{\alpha=\pm\ii\sqrt3}\frac{3}{2(\alpha+2)}
           \prod_{i=1}^3\frac1{(p_i-\alpha)^2}.
\end{equation}
The sum over complex roots combines into a rational function over $\mathbb Q$, with external denominator $\prod_i(p_i^2+3)^2$.

The matrix construction also gives the full genus-one two-boundary function. Both one-boundary functions through genus two and the full two-variable $W_{1,2}$ agree with independent EO calculations. Appendix~\ref{app:NUdata} gives $W_{2,1}$ and a compact specialization of $W_{1,2}$. Here, as in the unitary examples, comparison with the literature identifies the operators and string equations, while the independent residues test the explicit rational amplitudes.

The dual $(5,3)$ projection has $e^\vee=\pi$ and $y^\vee=-e/405$. Its monic ramification polynomial is
\begin{equation}\label{eq:53ram}
 \mathscr D_{53}(p)=p^4+9p^2+6p+9.
\end{equation}
We perform the fifth-order reconstruction with four independent quantum fields. Its result agrees through $\h^4$ with
\begin{align}\label{eq:53fields}
 u_3^\vee&=\frac53b,&u_2^\vee&=\frac53a,\nonumber\\
 u_1^\vee&=\frac59(b^2-\h^2b''),&
 u_0^\vee&=\frac{10}{9}ab+\frac{5\h^2}{18}a''.
\end{align}
Restoring the ordering derivative terms identifies the complete fifth-order operator with the Baker--Akhiezer partner in the literature. Its independently constructed third-order partner is the original $Q$.

The quotient algebra \eqref{eq:quotient} lets us work with the quartic \eqref{eq:53ram} exactly. The resulting $W_{1,1}$, $W_{2,1}$ and full $W_{1,2}$ agree with EO as rational functions. Their poles and numerators differ from the third-order answers, as expected from the changed projection. We have therefore checked both the common string solution and the different boundary observables it defines.

The Yang--Lee model provides a complementary connection back to KdV. In the fifth-order realization choose
\begin{equation}\label{eq:52curve}
 e(z)=z^5-5z^3+\frac{15}{2}z,\qquad
 \pi(z)=z^2-2,\qquad \kappa=5,\qquad U=x^{1/3}.
\end{equation}
The disk is $y=-(z^2-2)/5$. The coefficient $15/2$ in $e$ fixes the homogeneous Yang--Lee background used here. Replacing it by the Chebyshev coefficient $5$ would change the leading bracket and hence the reconstruction problem.

All four fields are left free. The two fields that would break the skew-adjoint reduction vanish through genus two, and the complete operator becomes
\begin{align}\label{eq:52Q}
 Q={}&D^5-5\mathfrak uD^3-\frac{15}{2}\h\mathfrak u'D^2
 +\left(\frac{15}{2}\mathfrak u^2-\frac{25}{4}\h^2\mathfrak u''\right)D\nonumber\\
 &+\frac{15}{2}\h\mathfrak u\mathfrak u'
                    -\frac{15}{8}\h^3\mathfrak u''',\nonumber\\
 \mathfrak u={}&U-\frac{\h^2}{36}U^{-6}
                    -\frac{7\h^4}{432}U^{-13}+O(\h^6).
\end{align}
Its partner is $P=D^2-2\mathfrak u$. In symmetric fields, $u_3=-5\mathfrak u$ and $u_1=15\mathfrak u^2/2+5\h^2\mathfrak u''/4$, while $u_2=u_0=0$ at the calculated orders.

The normalization map in Appendix~\ref{app:published} identifies \eqref{eq:52Q} with the Yang--Lee operator in the literature \cite[Appendix~B.3]{ChanIrieNiednerYeh2016}. Its string equation reads
\begin{equation}\label{eq:YLstring}
 \mathfrak u^3-\frac{\h^2}{2}\mathfrak u\mathfrak u''
 -\frac{\h^2}{4}(\mathfrak u')^2+\frac{\h^4}{40}\mathfrak u''''=x.
\end{equation}
The series in \eqref{eq:52Q} satisfies this equation through $\h^4$, and the independent commutator gives $[P,Q]=5\h$ through $\h^5$. Reconstructing from the second-order projection returns the same $\mathfrak u$ with the operators exchanged. This brings the higher-order calculation back to a familiar KdV minimal-string equation and provides another comparison with the literature.

The fifth-order boundary amplitude remains a different observable. At genus one,
\begin{equation}\label{eq:52W11}
 W^{(5,2)}_{1,1}(p)=-\frac{B_{52}(p)}{36(2p^4-6p^2+3)^4}.
\end{equation}
The numerator $B_{52}$ is given in Appendix~\ref{app:NUdata}.
For the second-order projection,
\begin{equation}\label{eq:25W11}
 W^{(2,5)}_{1,1}(p)=-\frac{2p^2+3}{72p^4}.
\end{equation}
The difference is expected: the first projection has four finite simple ramification points, while the second has one. In both projections the full genus-two one-boundary function and the full genus-one two-boundary function pass independent EO comparisons. The corresponding genus-two polynomials are included in Appendix~\ref{app:NUdata}.

\section{Higher-Airy limits and intersection theory}\label{sec:airy}

All the preceding inverse calculations relied on simple ramification. What remains of the construction when branch points meet? The higher-Airy family gives a particularly transparent setting for this question:
\begin{equation}\label{eq:Airy}
 Q=D^r-\lambda x,\qquad p^r=E+\lambda x,
 \qquad E(p,0)=p^r,\qquad y(p)=\frac p\lambda,
\end{equation}
with $\lambda\ne0$. For $r>2$, the finite ramification point is multiple and the inverse used in \eqref{eq:descent} becomes singular. This locus lies outside the regular domain of the uniqueness theorem. The scalar ODE and its concomitant remain well defined there, and the derivative
\begin{equation}
 \cD=\frac{\lambda}{rp^{r-1}}\partial_p
\end{equation}
turns the coefficient calculation into a recursion on monomials. Here the operator is already specified, so the calculation tests the resolvent and its normalization on a degenerate curve.

There is a useful geometric description of these amplitudes. Their coefficients generate $r$-spin intersection numbers, which are integrals of characteristic classes on moduli spaces of curves equipped with $r$-spin data \cite{Witten1993,FaberShadrinZvonkine2010,BertolaDubrovinYang2015,BertolaDubrovinYang2016,EynardEtAl2023}. At $r=2$ this reduces to Witten--Kontsevich theory \cite{Witten1991,Kontsevich1992}. Each marked point carries a descendant index $d_i\geq0$ and a spin-sector label $\nu_i$, with $0\leq\nu_i\leq r-2$. To state the dictionary, write
\begin{equation}\label{eq:multifactorial}
 m_{!!}^{(r)}=m(m-r)(m-2r)\cdots
\end{equation}
with the product ending at the last positive factor. The stable normalization dictionary is
\begin{align}\label{eq:rspindictionary}
 W^{\mathrm A,r;\lambda}_{g,n}(\mathbf p)
 ={}&\lambda^{2g-2+n}\sum_{\mathbf d,\boldsymbol\nu}
 (-r)^{g-1-\sum_i d_i}
 \left\langle\prod_{i=1}^n\tau_{d_i,\nu_i}\right\rangle_g
 \prod_{i=1}^n\frac{(rd_i+\nu_i+1)_{!!}^{(r)}}
 {p_i^{rd_i+\nu_i+2}},\\
 &\sum_i(rd_i+\nu_i+1)=(r+1)(2g-2+n).\nonumber
\end{align}
The superscript $\mathrm A$ labels the Airy family within the type-$A$ construction. Equation~\eqref{eq:rspindictionary} is the higher-Airy determinantal formula in the curve variables $E_i=p_i^r$ \cite[Theorem~5.1, Eq.~(5.19)]{EynardEtAl2023}. The disk rescaling fixes the factor of $\lambda$ at every genus.

For one marked point the dimension constraint reads
\begin{equation}\label{eq:onepointselection}
 rd+\nu=2g(r+1)-(r+2),\qquad \nu=(2g-2)\bmod r.
\end{equation}
The constraint tells us which coefficient can occur before we do the ODE calculation. If it requires $\nu=r-1$, the label lies outside the allowed narrow sector and the contribution vanishes. The five-Airy genus-three answer below illustrates this rule.

The closed one-point intersection formula of Liu, Vakil and Xu \cite[Proposition~5.4]{LiuVakilXu2011}, combined with \eqref{eq:rspindictionary}, gives
\begin{align}\label{eq:Airyallr}
 W^{\mathrm A,r;\lambda}_{1,1}(p)
 &=-\frac{\lambda(r^2-1)}{24r\,p^{r+2}},\\
 W^{\mathrm A,r;\lambda}_{2,1}(p)
 &=\frac{\lambda^3(r^2-1)(r^2-9)(2r+1)(2r+3)}
 {640r^3\,p^{3r+4}}.\nonumber
\end{align}
The intersection formula gives these expressions directly at arbitrary scalar order. We can therefore compare the scalar calculation with the literature using the same normalization at every scalar order. The factor $r^2-9$ predicts the genus-two zero at $r=3$, while at $r=2$ the answer has the negative sign of the monic second-order convention.

With $\lambda=1$, the first fourth- and fifth-order values are
\begin{align}
 W^{\mathrm A,4}_{1,1}&=-\frac5{32p^6},&
 W^{\mathrm A,4}_{2,1}&=\frac{2079}{8192p^{16}},&
 W^{\mathrm A,4}_{3,1}&=\frac{3132675}{1048576p^{26}},\label{eq:Airy4}\\
 W^{\mathrm A,5}_{1,1}&=-\frac1{5p^7},&
 W^{\mathrm A,5}_{2,1}&=\frac{429}{625p^{19}},&
 W^{\mathrm A,5}_{3,1}&=0.\label{eq:Airy5}
\end{align}
The sign of each amplitude follows from the full conversion between intersection numbers and curve coordinates. The conversion contains both the power $(-r)^{g-1-d}$ and the multifactorial. For example, at $r=4,g=2$ one has $(d,\nu)=(3,2)$ and
\begin{equation}
 \frac{2079}{8192}=(-4)^{-2}(15\cdot11\cdot7\cdot3)
                 \frac3{2560}.
\end{equation}
The numerical factor and the sign therefore follow from the same conversion.

The direct fourth- and fifth-order calculations continue through genus six. After the normalization in \eqref{eq:rspindictionary}, every one-point coefficient agrees with the closed formula of Liu--Vakil--Xu \cite{LiuVakilXu2011}; the exact values and the forbidden-sector zero are collected in Table~\ref{tab:Airy} of Appendix~\ref{app:airydata}.

With additional boundaries, the comparison also tests how the descendant labels are distributed among the marked points. At orders four and five, the complete $W_{1,2}$ agrees with the string, dilaton and genus-one gluing relations; the planar three- and four-boundary functions agree with the corresponding $r$-spin rules. Appendix~\ref{app:airydata} gives the calculation, together with the specified subsets of coefficients checked for the larger $W_{2,2}$ and $W_{3,2}$ outputs.

\section{Discussion}\label{sec:discussion}
We began by asking how much of a quantum differential equation is already encoded in its classical curve. The examples show why the background motion must be part of that question. Pure gravity and Ising illustrate this dependence: they can share an endpoint energy polynomial, with their different motion in $x$ selecting different disks and amplitudes. Once this motion is specified, the analytic behavior of the one-boundary integral carries enough information to constrain the quantum fields very strongly.

The reason is visible near a branch point. The lower-genus data fix the most singular terms in the resolvent at genus $g$. Matching these terms determines a candidate primitive, after which the remaining equations leave at most one choice for the fields. This is the content of the uniqueness theorem on the regular domain. The final remainder also has a useful meaning: it tells us whether the candidate actually solves the problem in the chosen analytic class. A nonzero remainder is an obstruction to reconstruction in this class. When an admissible string solution is available, the exact trace identity supplies a primitive and the uniqueness theorem identifies it with the one obtained by reconstruction.

This separates two questions that are often answered together. A string equation selects a background and provides its quantum completion. The inverse calculation asks how much of that completion is forced by the analytic form of an observable. In the examples considered here, comparison with the literature and independent recursion calculations show that the primitive criterion recovers the expected completion. The general theorem explains its uniqueness, while existence still requires either compatibility at each genus or an admissible string solution. Characterizing the leading families for which compatibility holds at every genus would therefore be a substantial next step. In particular, it would be useful to learn when a string partner can be recovered from the leading data themselves.

The additional boundaries reveal information that is less visible in the scalar answer. The concomitant gives a rank-one projector, and cyclic traces of this projector retain the full field dependence needed for connected amplitudes. In Ising, the mixed term in $W_{1,2}$ connects the two ramification points. It shows concretely why the leading singularity at each branch point is only part of a global two-boundary function. The identification of the cyclic traces with topological-recursion amplitudes depends on properties of the full differential system. Similarly, exchanging the two operators of a string pair changes the energy projection and the disk, so it changes the marked-boundary observable even when the same background solution is being described.

Each extension of the basic examples has a purpose. The higher-order unitary calculations show that the method can determine several homogeneous fields at once. They give a particularly simple genus-one formula for the highest field. The fourth-order example in Appendix~\ref{app:diagnostics} explains how the lower fields acquire further contributions that require their own calculation. The nonunitary models show that the polynomial-remainder construction works just as well when the ramification points are complex. Together, these examples bring the role of the regularity conditions into focus: simple ramification and nonzero branch-value velocities support the reconstruction in both unitary and nonunitary models, including curves with complex roots.

When ramification points collide, or their branch values stop moving, the multiplication map used in the proof becomes singular. The higher-Airy calculations show that the scalar ODE can remain meaningful there, and its comparison with $r$-spin theory provides a useful guide. An inverse reconstruction theorem on this locus remains an open problem. A reformulation of the allowed primitives, informed by higher-ramification recursion, may make it possible to ask the same reconstruction question under these different local conditions \cite{BertolaDubrovinYang2015,BouchardEynard2013}.

There is also a question beyond the formal expansion considered here. Our construction determines coefficients in powers of $\h$; choosing contours, Stokes data and a nonperturbative completion requires further analytic information. Johnson and Rodrigues have extended the two-ODE approach to transseries and large-genus behavior in the second-order setting \cite{JohnsonRodrigues2026}, while resurgent determinantal methods give asymptotics of intersection numbers \cite{EynardEtAl2023}. Extending these connections to higher-order reconstruction would help clarify which parts of a nonperturbative theory can be read from the observable and which must be specified independently. The present result settles the perturbative uniqueness question on the regular domain, and gives a finite way to examine compatibility genus by genus.

\section*{Acknowledgments}
\addcontentsline{toc}{section}{Acknowledgments}
This article is our first attempt at AI-assisted research where we prompted some simple ideas and tested them on many examples which led to the development of this work. Claude Opus 5.0 and GPT Astra 6.0 were used in developing symbolic calculations and consistency checks for our computed examples. The authors have checked the results presented here, with the scope and remaining limitations of independent verification stated beside the relevant formulas, and take responsibility for the scientific content.

During the final preparation of this manuscript, we became aware of related independent work by Clifford V. Johnson.
W.A. is supported by the U.S. Department of Energy under grant DE-SC 0011702. J.N. is supported by the Beijing Institute of Mathematical Sciences and Applications.

\appendix
\addtocontents{toc}{\protect\setcounter{tocdepth}{1}}

\section{The reconstruction and trace statements}\label{app:proofs}
\subsection{The multiplication map and uniqueness}
We state the regularity assumptions precisely because the inverse calculation depends on them. Let $\mathbb K$ be a differential field of characteristic zero containing the leading fields and their derivatives. Work on a local domain where $\Delta\in\mathbb K[p]$ is square-free and $V$ is nonzero at every zero of $\Delta$. Equivalently,
\begin{equation}
 \gcd(\Delta,\Delta_p)=1,\qquad \gcd(\Delta,V)=1.
\end{equation}
These are algebraic versions of distinct ramification points and nonzero branch-value velocities. They apply to real and complex roots alike.

\begin{lemma}
For $k\geq1$, multiplication by $kV\Delta_p$ is invertible in $\mathbb K[p]/(\Delta)$. Its determinant in any polynomial basis is \eqref{eq:determinant}.
\end{lemma}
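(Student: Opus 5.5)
The plan is to compute the determinant of multiplication by $G:=kV\Delta_p$ on $\mathbb K[p]/(\Delta)$ through its eigenvalues over an algebraic closure, and then read off invertibility from the two gcd conditions.

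\textbf{Determinant via eigenvalues.} Pass to an algebraic closure $\overline{\mathbb K}$. Since $\gcd(\Delta,\Delta_p)=1$, the polynomial $\Delta$ has $d=r-1$ distinct roots $\alpha_1,\dots,\alpha_{r-1}$. Because $\Delta$ is monic up to the constant $r$, the Chinese remainder theorem gives
\begin{equation*}
 \overline{\mathbb K}[p]/(\Delta)\cong\prod_{i=1}^{r-1}\overline{\mathbb K},
 \qquad f\mapsto\bigl(f(\alpha_i)\bigr)_i .
\end{equation*}
Equivalently, the Lagrange basis $\ell_i(p)=\prod_{j\ne i}(p-\alpha_j)/(\alpha_i-\alpha_j)$ consists of idempotents, and multiplication by $G$ sends $\ell_i\mapsto G(\alpha_i)\ell_i$ modulo $\Delta$. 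The map $\mathscr L_k$ is therefore diagonal in this basis with eigenvalues $G(\alpha_i)=kv_i\Delta_p(\alpha_i)$, using $v_i=V(\alpha_i,x)$ from \eqref{eq:velocities}.

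\textbf{Basis independence.} The determinant of a linear map does not depend on the basis, and it is unchanged by extension of scalars. So for the monomial basis $1,p,\dots,p^{d-1}$, or any other $\mathbb K$-basis of $\mathbb K[p]/(\Delta)$, we get $\det\mathscr L_k=k^{r-1}\prod_i v_i\Delta_p(\alpha_i)$. This is \eqref{eq:determinant}. The value lies in $\mathbb K$ because it is symmetric in the roots. Concretely, it equals $k^{d}\,r^{-d}\operatorname{Res}(\Delta,V\Delta_p)$ up to the standard leading-coefficient normalization of the resultant, and we do not need the exact normalization.

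\textbf{Invertibility.} The characteristic is zero and $k\ge1$, so $k\ne0$. We have $\Delta_p(\alpha_i)\ne0$ by squarefreeness, and $v_i=V(\alpha_i)\ne0$ since $\gcd(\Delta,V)=1$. Hence the determinant is nonzero and $\mathscr L_k$ is invertible over $\mathbb K$. An alternative route avoids roots entirely: $\gcd(\Delta,kV\Delta_p)=1$, so Bézout gives $a\Delta+b\,kV\Delta_p=1$, and $b$ is an explicit inverse modulo $\Delta$. This is the extended Euclidean algorithm mentioned after \eqref{eq:descent}.

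\textbf{Main obstacle.} The only delicate point is justifying basis independence across the field extension while keeping track of the leading coefficient $r$ of $\Delta$. This is handled by using the reduced-algebra isomorphism above, which is insensitive to scaling $\Delta$ by a unit.
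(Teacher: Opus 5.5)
Your proposal is correct and follows the paper's proof essentially step for step. You extend scalars so that the distinct roots $\alpha_i$ exist, identify $\mathbb K[p]/(\Delta)$ with a product of copies of the field by evaluation, read off the diagonal entries $kv_i\Delta_p(\alpha_i)$ and their product, and obtain the inverse over $\mathbb K$ itself from B\'ezout's identity via the extended Euclidean algorithm. One small point: in your parenthetical resultant formula the power of $r$ should be $r^{-\deg(V\Delta_p)}$ rather than $r^{-d}$, but you already flagged that formula as not needed.
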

\begin{proof}
Extend the coefficient field temporarily so that all roots $\alpha_i$ are available. Evaluation at those distinct roots identifies the quotient algebra with a product of $r-1$ copies of that field. Multiplication by $kV\Delta_p$ becomes diagonal, with entries $kv_i\Delta_p(\alpha_i)$. All are nonzero, and their product gives the determinant. The inverse descends to the original coefficient field: B\'ezout's identity constructs it directly from $\Delta$ and $V\Delta_p$ by the extended Euclidean algorithm. The result is independent of the ordering of the roots.
\end{proof}

\begin{theorem}[Uniqueness at a fixed genus]\label{thm:unique}
Fix the leading fields and all corrections below genus $g\geq1$. On the regular domain, at most one field polynomial $F_g$, with $\deg F_g<r-1$, and one primitive in the class \eqref{eq:slots} satisfy
\begin{equation}
 \cD\Qhat_g=S_g+
 \frac{F_g\Delta_p-(\partial_pF_g)\Delta}{\Delta^3}.
\end{equation}
Consequently an admissible reconstruction, when it exists, is unique at each genus for every scalar order $r$.
\end{theorem}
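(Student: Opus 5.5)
Suppose two pairs $(F_g,\Qhat_g)$ and $(F_g',\Qhat_g')$ both satisfy the displayed equation with the same source $S_g$. Subtracting the two equations removes $S_g$. The differences $G=F_g-F_g'$, with $\deg G<r-1$, and $R=\Qhat_g-\Qhat_g'$ then satisfy
\begin{equation*}
 \cD R=\frac{G\Delta_p-(\partial_pG)\Delta}{\Delta^3}.
\end{equation*}
The class \eqref{eq:slots} is a $\mathbb K$-linear space, so $R$ again has the normal form $R=\sum_{k=2}^{K}A_k/\Delta^k$ with $\deg A_k<d$. The plan is to show first that $R=0$, and then that $G=0$.

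For the first step, assume $R\neq0$ and let $k\geq2$ be the largest index with $A_k\neq0$. By \eqref{eq:slotderivative}, the lower slots $A_j/\Delta^j$ with $j<k$ contribute at most $\Delta^{-(j+2)}$, which is weaker than $\Delta^{-(k+2)}$. Likewise, the first two terms of $\cD(A_k/\Delta^k)$ reach at most $\Delta^{-(k+1)}$. Write $\cD R=N/\Delta^{k+2}$ over the common denominator $\Delta^{k+2}$. Then $N\equiv kV\Delta_pA_k \pmod\Delta$, since every other contribution carries at least one factor of $\Delta$ in the numerator.

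The preceding lemma says multiplication by $kV\Delta_p$ is invertible on $\mathbb K[p]/(\Delta)$. Since $A_k\neq0$ and $\deg A_k<d$, the class of $A_k$ is nonzero, so $N\not\equiv0\pmod\Delta$. Square-freeness of $\Delta$ gives more: at some root $\alpha_i$ one has $N(\alpha_i)\neq0$. Hence $\cD R$ has a pole of exact order $k+2\geq4$ at $\alpha_i$. On the right-hand side, the numerator $G\Delta_p-(\partial_pG)\Delta$ is a polynomial, so that side has poles of order at most three at every root of $\Delta$. This contradiction forces $R=0$.

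For the second step, $R=0$ gives $G\Delta_p-(\partial_pG)\Delta=0$. Reducing modulo $\Delta$ yields $G\Delta_p\equiv0$. Because $\gcd(\Delta,\Delta_p)=1$, $\Delta_p$ is invertible modulo $\Delta$, so $\Delta\mid G$. Since $\deg G<r-1=\deg\Delta$, this means $G=0$.

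The final sentence of the theorem then follows at once: the theorem fixes the leading fields and the lower-genus corrections, so $S_g$ is determined, and the argument applies verbatim for every $r\geq2$. The main point needing care is the claim that the top slot yields a genuine pole of order $k+2$ rather than merely a nonzero numerator. This is exactly where the reduced form of the numerators ($\deg A_k<d$) and square-freeness of $\Delta$ are used. The pole comparison itself is just an order count at each simple root of $\Delta$.
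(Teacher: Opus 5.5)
Your proof is correct and follows the paper's argument essentially step for step. You subtract the two equations, use the top slot's class $[kV\Delta_pA_k]$ together with the invertibility lemma to force a pole of order $k+2\geq4$ against a right-hand side with poles of order at most three, and then eliminate the field difference with the degree bound. The only difference is cosmetic and sits in the last step: the paper writes $\partial_p(F/\Delta)=0$ to get $F=c(x)\Delta$, whereas you reduce modulo $\Delta$ and use $\gcd(\Delta,\Delta_p)=1$, which is equally valid on the regular domain.
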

\begin{proof}
Suppose that two solutions exist, and subtract their equations. Write the primitive difference as $q$ and the field difference as $F$. If $q\ne0$, let $k\geq2$ be its highest occupied slot in the unique reduced expansion \eqref{eq:slots}. The coefficient at denominator power $k+2$ in $\cD q$ is $[kV\Delta_pA_k]$, which is nonzero by the lemma. At a root where this class is nonzero, the derivative has a pole of order $k+2\geq4$. The right-hand side, $(F\Delta_p-F_p\Delta)/\Delta^3$, has poles of order at most three. This is a contradiction, so $q=0$.

It follows that $F\Delta_p-F_p\Delta=0$, or $\partial_p(F/\Delta)=0$. Thus $F=c(x)\Delta$. Since $\deg F<\deg\Delta$, one has $F=0$. The argument applies successively at each genus, once lower-genus coefficients have been fixed.
\end{proof}

Admissibility also fixes the integration constant. Two primitives of the same resolvent differ by a function with zero fixed-energy derivative. The strongest-pole argument shows that every nonzero admissible rational function has a nonzero derivative, so the two primitives coincide. The pole and infinity conditions thus fix the freedom to add a function of $E$.

The uniqueness argument needs only a finite admissible primitive; it does not assume a numerical cutoff on its highest slot. To obtain the prescribed finite search space used by the algorithm, we additionally need a bound on the poles produced by the scalar recurrence. The following induction supplies that bound.

\begin{localpolelemma}[Local pole bounds]
Fix a simple ramification point and work on a background domain where the coefficient fields and the finitely many jets at the perturbative order under consideration are regular. Let $\operatorname{pole}_\alpha f$ denote the pole order of $f$ in $p-\alpha(x)$, with value zero for a regular function. For every $m\geq1$, the formal scalar recurrence and the concomitant obey
\begin{equation*}
 \operatorname{pole}_\alpha w_m\leq3m-1,\qquad
 \operatorname{pole}_\alpha\Sigma_m\leq3m-1,\qquad
 \operatorname{pole}_\alpha\Rhat_m\leq3m+1.
\end{equation*}
The paired coefficients $\widetilde w_m$ satisfy the same bound as $w_m$.
\end{localpolelemma}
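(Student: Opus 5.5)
The plan is a simultaneous induction on $m$, using the recurrences \eqref{eq:WKBrec}, \eqref{eq:Sigma} and \eqref{eq:Rrec}. Each $w_m$, $\Sigma_m$ or $\Rhat_m$ is a finite polynomial combination of lower coefficients, their $x$ derivatives and regular field jets, divided by powers of $\Delta$. Near $\alpha$, $\Delta$ has a simple zero, so division by $\Delta$ raises the pole order by one. The derivative $\cD=\partial_x|_p-(V/\Delta)\partial_p$ raises the pole order by at most two, since $\partial_p$ adds one and $V/\Delta$ adds one. I will treat $\partial_x$ in the Bell polynomials \eqref{eq:Bell} as $\cD$, acting at fixed energy. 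The first task is to check that $\partial_x|_p$ does not raise pole orders even though $\alpha(x)$ moves. Since $\partial_x|_p(p-\alpha)^{-k}=k\alpha'(p-\alpha)^{-k-1}$, it can raise the order by one. This is still within the budget of $\cD$, so each application of $\cD$ raises the pole order by at most two. I will record this as the basic counting rule and assign weight $3$ to each power of $\h$.

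For $w_m$, expand $\sum_jc_j\cP_j(w_{<m})-E$ at order $\h^m$. A typical term is a product of factors $\h^{a}\cD^{b}w_{k}$ from the Bell polynomials, multiplied by field jets. The Bell recursion $\cP_{j+1}=w\cP_j+\h\cD\cP_j$ pairs each extra power of $\h$ with at most one $\cD$. A factor $\h^k w_k$ contributes pole order at most $3k-1$ by induction, where $w_0=p$ is regular. A factor $\h\cD$ contributes pole order at most $2<3$. The coefficients $c_j$ carry positive powers of $\h$ only together with $x$ derivatives of regular fields, which are regular. The lower-genus field corrections are regular jets by hypothesis.

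The claim to verify is then that every monomial at order $\h^m$ in the bracket has pole order at most $3m-2$. Dividing by $\Delta$ gives the bound $3m-1$ for $w_m$. The main obstacle is this sharper counting. Naively, a term $\h\cD w_{m-1}$ has pole order $3(m-1)-1+2=3m-2$, which is fine. A product $\prod_i\h^{k_i}w_{k_i}$ with $\sum k_i=m$ and at least two factors has pole order at most $3m-2$. Mixed terms $\h^a\cD^b$ with $b\le a$ each lose at least one unit per $\h$ relative to weight $3$. The term linear in $w_m$ is exactly the $\Delta w_m$ that was isolated.

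I will organize this with a filtration. Call a function of weight $m$ admissible if its pole order is at most $3m-1$. I will show that the bracket minus $\Delta w_m$ lies in weight $m$ with one unit to spare, because every surviving monomial either contains at least two positive-order factors or at least one explicit $\h\cD$. The same argument applies to $\widetilde w_m$ through \eqref{eq:parity}, $\widetilde w(\h)=-w(-\h)$.

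For $\Sigma_m$, the concomitant \eqref{eq:Sigma} is built from the same Bell polynomials $\cP$ and the Leibniz sums $\Pi$. Its order-$\h^m$ part is a sum of the same kinds of monomials, now without division by $\Delta$. A single factor $w_m$ or $\widetilde w_m$ gives pole order $3m-1$, and the remaining monomials are bounded by the same counting, giving $\operatorname{pole}_\alpha\Sigma_m\le 3m-1$.

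Finally, \eqref{eq:Rrec} gives $\Rhat_m=-\Delta^{-1}\sum_{k=1}^m\Sigma_k\Rhat_{m-k}$, with $\Rhat_0=1/\Delta$ of pole order one. By induction, each term has pole order at most $(3k-1)+(3(m-k)+1)+1=3m+1$. This establishes the third bound, and in particular the pole order $6g+1$ of $\Rhat_{2g}$ used in the text.
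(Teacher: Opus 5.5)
Your route is the paper's. You induct on $m$, isolate $\Delta w_m$ in \eqref{eq:WKBrec}, and show that the rest of the order-$\h^m$ source has pole order at most $3m-2$. You then pass to $\widetilde w_m$ by \eqref{eq:parity}, count products in $\Sigma_m$, and finish with \eqref{eq:Rrec}. The $\Sigma_m$ and $\Rhat_m$ steps are fine.

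The gap is in the $w_m$ step, at factors $\cD^k w_0=\cD^kp$. Your stated rules are that $w_0=p$ is regular and that each $\cD$ raises the pole order by at most two. So $\h\cD p$ gets pole order at most $2$, only one unit below its weight $3$, and that is not enough at $m=1$:
\begin{itemize}
\item The order-$\h$ source is $\tfrac12\Delta_p\,\cD p$ plus regular ordering terms, coming from the $\binom j2\h w^{j-2}\cD w$ part of $\cP_j$.
\item Your rules bound it by $2=3m-1$ rather than $3m-2$, and division by $\Delta$ gives only $\operatorname{pole}_\alpha w_1\le3$.
\item The induction then degrades at every later order. For instance, your rules would bound $\tfrac12\Delta_p w_1^2$ in the order-$\h^2$ source only by $6$. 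So neither $3m-1$ nor the $6g+1$ bound follows.
\end{itemize}
Your line that $\h\cD w_{m-1}$ has pole order $3(m-1)-1+2$ gives the right number at $m=1$ only by applying the bound $3k-1$ at $k=0$. That amounts to treating $p$ as if it had pole order $-1$.

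The missing observation, which the paper states explicitly, is that $\partial_x|_p\,p=0$ and $\partial_pp=1$. Hence $\cD p=-V/\Delta$ has only a simple pole, and $\operatorname{pole}_\alpha(\cD^kp)\le2k-1$ for $k\ge1$. It follows that every factor $\cD^kw_\ell$ with $\ell+k>0$ has pole order at most $3\ell+2k-1=3(\ell+k)-k-1$. So a monomial with $q\ge1$ positive-order factors, $K$ derivatives among them and total order $M$ from them has pole order at most $3M-K-q$.

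This gives $3m-2$ for every remaining source term. Such a term has two or more positive-order factors, or a differentiated factor, or an $\h$ supplied by some $c_j$, which forces $M<m$. Your dichotomy omits that last case, though your weight count covers it. With this refinement the base case holds and your induction goes through. For $m\ge2$ your coarser rule would have sufficed, since $2m\le3m-2$ there.
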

\begin{proof}
Put $t=p-\alpha(x)$. Since $\Delta$ has a simple zero, the operator $\cD=\partial_x|_p-(V/\Delta)\partial_p$ raises a pole order by at most two. Its action on the regular leading momentum is more specific: $\cD p=-V/\Delta$ has order at most one. Hence
\begin{equation*}
 \operatorname{pole}_\alpha(\cD^k p)\leq2k-1,
 \qquad k\geq1.
\end{equation*}
The coefficients of each $c_j$ and their background derivatives are regular in the curve variable and introduce no additional $p$-poles.

Every Bell-polynomial term is a product of factors $\cD^k w_\ell$, with an explicit power $\h^k$ for a factor differentiated $k$ times. Such a factor has total perturbative order $\ell+k$. Assume inductively that the bound for $w_\ell$ is known for $1\leq\ell<m$. A factor with $\ell+k>0$ then has pole order at most $3\ell+2k-1$; the case $\ell=0$, $k\geq1$ follows from the preceding bound on $\cD^k p$. Undifferentiated factors $w_0=p$ are regular. If a monomial has $q\geq1$ positive-order factors, $K$ derivatives distributed among them, and perturbative order $M$ from those factors, its pole order is at most $3M-K-q$.

At order $\h^m$ in the scalar equation, the only terms containing a single undifferentiated factor $w_m$ combine to $\Delta w_m$. Remove that term to form the known source in \eqref{eq:WKBrec}. Every remaining term of order $m$ either has at least two positive-order factors, has a differentiated factor, or obtains a positive power of $\h$ from a coefficient $c_j$. In the first two cases $K+q\geq2$, and in the last case $M<m$ (or the term has no singular factor). The known source therefore has pole order at most $3m-2$. Division by the simple zero of $\Delta$ proves the bound $3m-1$ for $w_m$, including the initial case $m=1$. The paired series has the same estimate by \eqref{eq:parity}.

Apply the same product counting to the finite concomitant sum. At positive order $m$, any singular term contains at least one positive-order factor, so its pole order is at most $3m-1$. This proves the bound for $\Sigma_m$. Finally, $\Rhat_0=\Delta^{-1}$ has pole order one. Inducting in the inverse-series recurrence \eqref{eq:Rrec}, every summand obeys
\begin{equation*}
 \operatorname{pole}_\alpha
 \left(\frac{\Sigma_k\Rhat_{m-k}}{\Delta}\right)
 \leq (3k-1)+\bigl(3(m-k)+1\bigr)+1=3m+1,
\end{equation*}
where the bound for $m-k=0$ is precisely the leading pole order. This gives the claimed resolvent estimate.
\end{proof}

At even order $m=2g$ the bound is $6g+1$. If an admissible primitive had a highest occupied slot $k>6g-1$, the invertible multiplication map of the first lemma would force its derivative to have a nonzero pole of order $k+2>6g+1$. Thus the allowed search cutoff is $K_g^{\max}=6g-1$. The bound gives a finite search space, while the residual equations still decide compatibility. Special backgrounds may have smaller pole orders through cancellations.

\subsection{Why a string pair supplies a primitive}
The exact identity \eqref{eq:traceprimitive} follows directly from compatibility. Using $\h\partial_x\cM=[L,\cM]$ and cyclicity of the trace,
\begin{align}
 \partial_x\tr(\mathsf B\cM)
 &=\tr\left[\left(\partial_x\mathsf B+
                   \frac{[\mathsf B,L]}\h\right)\cM\right]\\
 &=\tr\bigl((\partial_E L)\cM\bigr)
   =\tr(e_{r1}\cM)=\cM_{1r}=\Rhat.
\end{align}
Here the first line uses only the spatial projector equation, and the second uses the zero-curvature relation \eqref{eq:flatness}. The matrix $e_{r1}$ is the derivative of the companion matrix with respect to its spectral parameter. This explains why the particular scalar entry selected by the concomitant appears.

There is also a scalar expression for the same trace. Define
\begin{equation}
 T_0=\frac{P\psi}{\psi},\qquad
 T_{j+1}=wT_j+\h\cD T_j.
\end{equation}
Then $T_j\psi=D^jP\psi$. In terms of the row coefficients $\ell_j$ in \eqref{eq:rowcolumn}, the trace is
\begin{equation}\label{eq:scalartrace}
 \tr(\mathsf B\cM)
 =-\frac{\Rhat}{\kappa}\sum_{j=0}^{r-1}\ell_j T_j.
\end{equation}
Thus the trace check uses the same formal scalar WKB coefficients that already determine the resolvent.

\begin{proposition}[Transfer from an admissible string solution]\label{prop:transfer}
Let a specified formal pair $[P,Q]=\kappa\h$ have the leading fields used in the inverse problem and the symmetric, even-field convention of Section~\ref{sec:resolvent}. Suppose that, at every positive genus, $[\h^{2g}]\tr(\mathsf B\cM)$ belongs to the class \eqref{eq:slots}. On the regular domain, the inverse reconstruction exists and agrees with this string solution order by order.
\end{proposition}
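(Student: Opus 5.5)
We argue by induction on the genus $g\geq1$, using the string solution itself as the candidate. Write the string operator's fields as $u_j=\sum_g\h^{2g}u_{j,g}$; by hypothesis $u_{j,0}$ are the leading fields of the inverse problem. Assume that the reconstruction through genus $g-1$ exists and coincides with the string fields $u_{j,h}$ for $h<g$. The source $S_g$ of \eqref{eq:fieldvariation} is then the same whether it is computed from the reconstruction or from the string solution. The candidate pair at genus $g$ is
\begin{equation*}
 \Qhat_g^{\rm str}=[\h^{2g}]\tr(\mathsf B\cM),\qquad
 F_g^{\rm str}(p)=\sum_{j=0}^{r-2}u_{j,g}(x)p^j .
\end{equation*}

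First, we need the trace identity \eqref{eq:traceprimitive} to hold coefficientwise, with the derivative interpreted as $\cD=\partial_x|_E$. The matrices $\mathsf B$ and $\cM$ are built from the formal WKB series at fixed spectral value $E$, and in \eqref{eq:flatness} the $x$ derivative is taken at fixed $E$. So $\partial_x\tr(\mathsf B\cM)$ is exactly $\cD$ acting on the coefficients, once these are written as functions of $(p,x)$ with $E=E(p,x)$. The scalar form \eqref{eq:scalartrace} makes this explicit, because it is a rational expression in the same coefficients $w_m$, $\widetilde w_m$, $T_j$, all of which are functions on the sheet. Extracting $[\h^{2g}]$ then gives $\cD\Qhat_g^{\rm str}=\Rhat_{2g}$.

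Second, $\Rhat_{2g}$ here is computed from the full string fields, so by \eqref{eq:fieldvariation} it equals $S_g+(F_g^{\rm str}\Delta_p-\partial_pF_g^{\rm str}\Delta)/\Delta^3$. The derivation of that formula used only the first variation of $\Rhat^{\rm cl}=1/\Delta$ and the order counting. It therefore applies to any even-field family in the symmetric ordering, which the hypothesis supplies. Hence $(\Qhat_g^{\rm str},F_g^{\rm str})$ satisfies the equation in Theorem~\ref{thm:unique}.

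Third, admissibility is checked as follows. The hypothesis places $\Qhat_g^{\rm str}$ in the class \eqref{eq:slots}: rational in the jets, poles only at zeros of $\Delta$, and a differential regular at infinity. By the Local pole bounds lemma and the remark after it, its highest slot is at most $6g-1$, so it also lies in the finite search space of the algorithm. Since $\deg F_g^{\rm str}\leq r-2<r-1$, the pair is an admissible solution. The obstruction identities \eqref{eq:obstructions} then hold automatically, because the descent \eqref{eq:descent}–\eqref{eq:fieldrecovery} applied to this $\Rhat_{2g}$ must land on some admissible solution. Theorem~\ref{thm:unique} says there is at most one, so the descent returns exactly $(\Qhat_g^{\rm str},F_g^{\rm str})$. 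This proves existence and agreement at genus $g$ and closes the induction.

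The main obstacle is the first step. One must justify that $\tr(\mathsf B\cM)$ has an $\h$-expansion whose coefficients are genuinely functions in $\mathbb K(p)$ on the regular domain, with no odd-genus leakage that would spoil the identification at order $\h^{2g}$. One must also justify that no $E$-dependent integration constant, such as a $C(E)$ term, enters. The first point follows from the parity \eqref{eq:parity} of $\Rhat$: taking even coefficients of $\cD\tr(\mathsf B\cM)=\Rhat$ suffices, whatever the odd coefficients of the trace are. The second point is handled by the hypothesis that the coefficient already lies in the admissible class, since admissibility excludes any nonzero $\cD$-constant, as noted after Theorem~\ref{thm:unique}.
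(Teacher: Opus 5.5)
Your proposal is correct and follows the paper's proof: the trace identity supplies a primitive for the genus-$g$ equation, the hypothesis makes it admissible, and Theorem~\ref{thm:unique} identifies it inductively with the reconstruction, with your induction, field-variation decomposition and parity remarks only spelling out what the paper's three-line proof leaves implicit. One small point: your claim that the descent returns $(\Qhat_g^{\rm str},F_g^{\rm str})$ does not follow from ``at most one'' alone. It follows from the top-down slot matching in the uniqueness proof: any admissible solution's numerators are forced step by step by \eqref{eq:descent}, so the descent reproduces them and the obstruction identities hold.
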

\begin{proof}
The trace identity supplies a solution of the primitive equation. By hypothesis it is admissible. Theorem~\ref{thm:unique} identifies it with the unique inverse solution, first at genus one and then inductively at every higher genus.
\end{proof}

The proposition makes clear what must be checked for a chosen string solution: rationality, the allowed finite poles and regularity at infinity. Once these hold, the trace identity and uniqueness theorem give the inverse result. Identifying the higher cyclic traces with EO differentials requires the additional topological-type properties of the differential system \cite{BergereBorotEynard2015}.

\subsection{A genus-one source at arbitrary scalar order}
At genus one there is a compact way to obtain the scalar source using the symbol of the operator. It provides a useful check on the recurrence of Section~\ref{sec:resolvent}. All derivatives in this subsection are at fixed $p$, except for the explicitly displayed energy derivative. Put
\begin{equation}
 V=E_x,\qquad J=E_{xx},\qquad
 \partial_E=\Delta^{-1}\partial_p,
 \qquad F_1=\sum_{j=0}^{r-2}u_{j,1}p^j.
\end{equation}
For the anticommutator ordering used here, the Weyl symbol of the operator is
\begin{equation}
 Q_{\rm W}=E+\h^2\left(F_1+\frac18 E_{xxpp}\right)+O(\h^4).
\end{equation}
Converting from anticommutator ordering to Weyl ordering introduces fixed derivative terms for general powers of $D$. For example, $\{u,D^2\}/2$ has Weyl symbol $up^2+\h^2u''/4$ in the convention $D=\h\partial_x$. The derivative term above is fixed by this ordering conversion.

For an external spectral parameter $\epsilon$, set $t=E(p,x)-\epsilon$. The inverse symbol is defined by Weyl composition, rather than ordinary pointwise multiplication. In the convention $D=\h\partial_x$, that composition is
\begin{equation*}
 f\star g
 =f\exp\!\left[\frac\h2
 \left(\overleftarrow{\partial_p}\overrightarrow{\partial_x}
       -\overleftarrow{\partial_x}\overrightarrow{\partial_p}\right)\right]g.
\end{equation*}
Let $G_{\rm W}$ be the unique formal symbol satisfying
$(Q_{\rm W}-\epsilon)\star G_{\rm W}=G_{\rm W}\star(Q_{\rm W}-\epsilon)=1$
with leading term $t^{-1}$. Expanding this inverse equation gives
\begin{align}
 (Q-\epsilon)^{-1}_{\rm W}
 ={}&\frac1t-\frac{\h^2(F_1+E_{xxpp}/8)}{t^2}
       +\frac{\h^2(E_{pp}J-E_{xp}^2)}{4t^3}\\
 &-\frac{\h^2(E_{pp}V^2-2E_{xp}V\Delta+J\Delta^2)}{4t^4}
       +O(\h^4).\nonumber
\end{align}
To relate this inverse to the scalar pairing, first work away from ramification and select a simple root $p_a(x,\epsilon)$ of $E(p,x)=\epsilon$. The label $a$ here selects a sheet. In the formal kernel of a Weyl operator, setting the two position arguments equal leaves the momentum integral of its symbol. Selecting this root by a positively oriented local contour therefore gives the normalized sheet contribution
\begin{equation*}
 \Rhat_a(x,\epsilon;\h)
 =\Res_{\xi=p_a(x,\epsilon)}
       G_{\rm W}(x,\xi;\epsilon,\h)\,\dd\xi.
\end{equation*}
The variable $\xi$ is the integration momentum, and the residue is taken coefficient by coefficient about the classical root.

The normalization and the identification with the concomitant can be seen without stopping at the leading term. For $(Q-\epsilon)f=j$, the companion vector $\boldsymbol f=(f,Df,\ldots,D^{r-1}f)^t$ satisfies
\begin{equation*}
 \h\partial_x\boldsymbol f=L\boldsymbol f+e_r j,
\end{equation*}
where $e_r$ is the last coordinate vector. Variation of constants gives the mode contribution to its scalar Green kernel as
\begin{equation*}
 G_a(x,x';\epsilon)
 =\frac1\h e_1^t\Psi(x,\epsilon)\Pi_a
                    \Psi^{-1}(x',\epsilon)e_r.
\end{equation*}
Here $\Pi_a$ selects the same formal sheet, and the oriented source-jump convention agrees with the positive residue above. On the diagonal, $\h G_a(x,x;\epsilon)=\cM_{a,1r}=\psi_a\chi_a$, with the concomitant normalized to $\mathcal B=\h$. The factor $1/\h$ also appears in the Weyl kernel measure $\dd\xi/(2\pi\ii\h)$. Thus the Weyl parametrix and the paired solutions construct the same normalized local inverse, order by order, rather than merely sharing the leading value $1/\Delta$. This is a formal identification on a separated sheet; actual global Green functions still require boundary or contour data, as discussed in Section~\ref{sec:resolvent} \cite{GelfandDikii1975,BergereBorotEynard2015}.

For a function $f(x,p)$ regular near the selected simple root, changing the residue coordinate from $p$ to $E$ gives
\begin{equation*}
 \Res_{p=p_a(x,\epsilon)}
 \frac{f(x,p)\,\dd p}{(E(p,x)-\epsilon)^{k+1}}
 =\left.\frac1{k!}\partial_E^k\left(\frac f\Delta\right)
   \right|_{E=\epsilon},
 \qquad \partial_E=\Delta^{-1}\partial_p.
\end{equation*}
Applying this identity to the displayed inverse symbol and suppressing the sheet label yields
\begin{align}\label{eq:genusonesource}
 \Rhat_2={}&-\partial_E\left(\frac{F_1+E_{xxpp}/8}{\Delta}\right)
  +\frac18\partial_E^2\left(\frac{E_{pp}J-E_{xp}^2}{\Delta}\right)\nonumber\\
 &-\frac1{24}\partial_E^3\left(
     \frac{E_{pp}V^2-2E_{xp}V\Delta+J\Delta^2}{\Delta}\right).
\end{align}
Setting $F_1=0$ gives the source $S_1$ in \eqref{eq:fieldvariation}. The term containing $F_1$ is exactly the universal new-field contribution in that formula. At $r=2$, with $E=p^2+u(x)$, the result reduces to
\begin{equation}
 \Rhat_2=\frac{F_1}{4p^3}+\frac{u''}{16p^5}
                              +\frac{5(u')^2}{64p^7}.
\end{equation}
The sign follows our monic convention; conversion to the Schr\"odinger operator in \eqref{eq:GD} requires the corresponding sign change. At the moving sixth- and seventh-order backgrounds the compact source and a separately organized Bell/concomitant calculation agree exactly.

\subsection{Local involutions for the residue check}
Near a simple ramification point $\alpha$, put $p=\alpha+t$ and $X_j=E^{(j)}(\alpha)$. The nontrivial local solution of $E(\sigma(p))=E(p)$ has the expansion
\begin{equation}
 \sigma(\alpha+t)=\alpha-t+s_2t^2+s_3t^3+s_4t^4+\cdots,
\end{equation}
where coefficient matching gives
\begin{align}
 s_2&=-\frac{X_3}{3X_2},\qquad
 s_3=-\frac{X_3^2}{9X_2^2},\\
 s_4&=-\frac{9X_2^2X_5-30X_2X_3X_4+40X_3^3}{540X_2^3}.
\end{align}
The condition $X_2\ne0$ is simple ramification. Set $Y_j=y^{(j)}(\alpha)$ and assume $y$ is regular there with $Y_1\ne0$, as required in Section~\ref{sec:EO}. Substituting the involution into the recursion kernel and Bergman differential, including $\dd\sigma=\sigma'(p)\dd p$, gives the complete genus-one answer
\begin{align}\label{eq:EO11}
 W_{1,1}(p)=\sum_\alpha\bigg[&
 -\frac1{8X_2Y_1(p-\alpha)^4}
 +\frac{X_3}{24X_2^2Y_1(p-\alpha)^3}\nonumber\\
 &+\frac{X_4/X_2-X_3^2/X_2^2-X_3Y_2/(X_2Y_1)+Y_3/Y_1}
 {48X_2Y_1(p-\alpha)^2}\bigg].
\end{align}
This sum retains the lower local poles as well as the leading one. At a fixed topology only finitely many involution coefficients are needed. If the roots of $E_p$ are algebraic, the residue coefficients can be reduced modulo the ramification polynomial before their trace is taken.

\section{Background selection and the limits of a compact formula}\label{app:diagnostics}
\subsection{Stationary leading equations}
The hierarchy offers a systematic way to specify leading fields before attempting quantum reconstruction. With
\begin{equation}
 \lambda(p)=p^r+\sum_{j=0}^{r-2}u_jp^j,
\end{equation}
define the dispersionless densities
\begin{equation}\label{eq:stationarydensities}
 h_{\beta,n}=\frac{\Gamma(\beta/r)}{\Gamma(n+1+\beta/r)}
                 [p^{-1}]\lambda^{n+\beta/r},
 \qquad \beta=1,\ldots,r-1,\quad n\geq0.
\end{equation}
Here the $u_j$ are leading fields and $[p^{-1}]$ extracts the coefficient of $p^{-1}$ in the Laurent series. If $k=rn+\beta$ and $q=1/p$, the required coefficient is
\begin{equation}
 [q^{k+1}]\left(1+\sum_j u_jq^{r-j}\right)^{k/r}.
\end{equation}
Only a finite expansion is needed for a prescribed density. A useful source choice is
\begin{equation}
 \mathcal S=xu_0+\sum_{\beta,n}t^{\beta,n}h_{\beta,n},
 \qquad \frac{\partial\mathcal S}{\partial u_j}=0.
\end{equation}
The coefficients $t^{\beta,n}$ are fixed background parameters in this normalization of the stationary problem. Choosing the source $xu_0$ selects this stationary problem within the hierarchy \cite{DrinfeldSokolov1985,Dubrovin1996,BelavinDubrovinMukhametzhanov2014}.

At second order, $h_{1,n}=u_0^{n+1}/(n+1)!$, so the stationary equation has the usual dispersionless KdV form. At third order, writing $u_1=b$ and $u_0=a$, the first densities are
\begin{equation}
 \begin{array}{c|ccc}
 & n=0& n=1&n=2\\\hline
 \beta=1&b&ab&b(18a^2-b^3)/36\\[2pt]
 \beta=2&a&a^2/2-b^3/18&a(3a^2-b^3)/18
 \end{array}
\end{equation}
These make the source dependence visible with little algebra. For Ising, $t^{2,2}=-2/3$ and the branch $a=0$ give $b^3=-27x$. The choice $t^{1,0}=\tau$, $t^{2,1}=c$ and $t^{1,1}=\sigma$ gives
\begin{equation}
 ca+\sigma b+x=0,\qquad
 \sigma a-\frac c6b^2+\tau=0.
\end{equation}
On its regular domain,
\begin{equation}
 a'=-\frac{cb}{c^2b+3\sigma^2},\qquad
 b'=-\frac{3\sigma}{c^2b+3\sigma^2}.
\end{equation}
Substitution into \eqref{eq:genericB}--\eqref{eq:genericA} yields
\begin{equation}
 A=-\frac{9c^3\sigma^3}{4(c^2b+3\sigma^2)^4},\qquad
 B=-\frac{81\sigma^4(c^2b+\sigma^2)}{8b^2(c^2b+3\sigma^2)^4}.
\end{equation}
This illustrates the reconstruction formula for general fields. Identifying a minimal-model background at these times would require an additional independent comparison.

Changing the source reaches a different homogeneous family. The action
\begin{equation}
 \mathcal S=xb+T\frac{a(3a^2-b^3)}{18}
\end{equation}
gives $a^2=b^3/9$ and $ab^2=6x/T$. For the normalization of the $(3,5)$ example in Section~\ref{sec:nonunitary}, $T=2/243$ selects $a=9x^{3/7}$ and $b=9x^{2/7}$. Its classical fractional-power partner has constant bracket $\kappa=-405$. The disk coefficient $1/405$ is therefore fixed before any positive-genus comparison.

The density coordinates are related to the operator coefficients by a change of variables that becomes nonlinear at higher scalar order. Already at $r=4$,
\begin{equation}
 h_{1,0}=u_2,\qquad h_{2,0}=u_1,\qquad
 h_{3,0}=u_0-\frac18u_2^2.
\end{equation}
Thus a formula stated in flat density coordinates requires a change of variables before it is compared with the symmetric coefficient fields used here. The comparison therefore requires both this explicit change of variables and an identification of the physical background, alongside the scaling weights.

\subsection{A restricted genus-two calculation}
The genus-two calculation can be carried out with $a(x)$ arbitrary and the leading $b$ retained as a symbolic nonzero constant. Work locally where $a_1\ne0$, with $a_j=\partial_x^j a$. At genus one, equations~\eqref{eq:genericA}--\eqref{eq:genericB} give
\begin{equation*}
 u_{0,1}=0,\qquad
 u_{1,1}=\frac{a_1a_3-a_2^2}{4a_1^2}.
\end{equation*}
Only the leading $b$ is constant: this nonconstant genus-one correction and all of its ordering terms must be retained in the genus-two scalar source. Applying the descending reconstruction gives
\begin{equation}
 u_{1,2}=0,\qquad u_{0,2}=-\frac{b\mathcal N}{480a_1^6},
\end{equation}
where
\begin{align}
 \mathcal N={}&5a_1^4a_6-41a_1^3a_2a_5-73a_1^3a_3a_4
                 +204a_1^2a_2^2a_4\\
 &+285a_1^2a_2a_3^2-700a_1a_2^3a_3+320a_2^5.\nonumber
\end{align}
This statement is a rational identity in $b$ and the independent jets of $a$, rather than an interpolation from selected values of $b$. To check it, use
\begin{equation*}
 \Delta=3p^2+b,\qquad
 \cD=\sum_{j\geq0}a_{j+1}\frac{\partial}{\partial a_j}
             -\frac{a_1}{\Delta}\partial_p,
\end{equation*}
where only finitely many jet derivatives enter at any fixed order. Form $S_2$ with the genus-one fields above and use the slots $2,\ldots,11$ in \eqref{eq:descent}. In the resulting primitive $\Qhat_2$, the two residual checks are
\begin{equation*}
 \cD\Qhat_2-S_2-\frac{6p\,u_{0,2}}{\Delta^3}=0,
 \qquad [p]A_2=0.
\end{equation*}
Both vanish with $b$ symbolic; the polynomial $A_2$ here is the denominator-power-two numerator in \eqref{eq:slots}, not a field correction. The regularity assumptions reduce to $b\ne0$ and $a_1\ne0$, so Theorem~\ref{thm:unique} fixes the displayed genus-two fields in this constant-leading-$b$ family. No claim about arbitrary moving $b(x)$ or all-genus existence follows from this finite-order calculation. In the linear specialization $a=-x$, both genus-two field corrections vanish. The one-boundary function is
\begin{equation}
 W_{2,1}(p;b)=\frac{315bp(99p^4-42bp^2+2b^2)}{(3p^2+b)^{10}}.
\end{equation}
The $b\to0$ limit vanishes, consistently with the forbidden narrow sector of the linear three-Airy problem. The point $b=0$ lies on the degenerate locus and requires a separate treatment of the inverse problem.

\subsection{A fourth-order test of branch-velocity interpolation}
The simple third-order formulas suggest interpolating the field corrections from branch velocities. With $\zeta$ an indeterminate independent of $x$, the trial expression was
\begin{equation}
 \sum_{j=0}^{r-2}u_{j,1}\zeta^j
 \mathrel{\mathop=^{?}}\frac r{24}\partial_x
 \left[\sum_i\prod_{k\ne i}(\zeta-\alpha_k)
                         \partial_x\log v_i\right].
\end{equation}
It reproduces the checked second- and third-order answers. At fourth order, a family with leading $u_1=0$ and moving $u_2,u_0$ reveals a discrepancy in the constant coefficient. In this paragraph $u_2,u_0$ denote leading coefficient fields. The computed correction minus the trial correction is
\begin{equation}
 -\frac{6(u_0')^2u_2''+6u_0'u_0''u_2'
 -5u_0'u_2u_2'u_2''-2u_0'(u_2')^3-u_0''u_2(u_2')^2}
 {24u_0'(2u_0'-u_2u_2')}.
\end{equation}
The displayed term is the correction needed by the trial interpolation at fourth order. The higher coefficients agree in this family, and the difference vanishes when the higher field is constant. Detecting it therefore requires a moving background and a comparison of the lower fields. The remainder construction includes this contribution by determining every field correction independently.

\section{Further Ising and unitary data}\label{app:polynomials}
\subsection{The Ising genus-two and genus-three numerators}
The genus-two numerator used in the text is
\begin{align}\label{eq:IsingP2}
 P_2(p)={}&77p^{18}-539p^{16}+1982p^{14}-4098p^{12}+6556p^{10}\nonumber\\
 &-5060p^8+8010p^6+5642p^4+10831p^2+791.
\end{align}

The polynomial appearing in \eqref{eq:IsingW31} is
\begin{align}
 P_3(p)={}&53955p^{30}-701415p^{28}+4465420p^{26}-17959320p^{24}\nonumber\\
 &+51069755p^{22}-107773575p^{20}+175823400p^{18}-221591820p^{16}\nonumber\\
 &+227618305p^{14}-165273597p^{12}+136206860p^{10}+14026336p^8\nonumber\\
 &+126732801p^6+139051115p^4+51852480p^2+1534020.
\end{align}
With the normalization \eqref{eq:BBEscale}, comparison with the literature reproduces every coefficient of the Ising genus-three one-point function \cite[Section~6.3]{BergereBorotEynard2015}. In particular, this checks the lower poles as well as the leading behavior at ramification.

The homogeneous string equation also gives a useful independent coefficient recurrence. Write
\begin{equation}
 \mathfrak u=\sum_{g\geq0}c_g\h^{2g}x^{\gamma_g},\qquad
 \gamma_g=\frac{1-7g}{3},\qquad c_0=1,
\end{equation}
so that $b=-3\mathfrak u$. Expanding \eqref{eq:Isingstring} isolates the new coefficient as
\begin{align}
 12c_g={}&3\sum_{i+j=g-1}c_ic_j\gamma_j(\gamma_j-1)
 +\frac32\sum_{i+j=g-1}c_ic_j\gamma_i\gamma_j\nonumber\\
 &-\frac{\mathbf1_{g\geq2}}6c_{g-2}
        \prod_{j=0}^3(\gamma_{g-2}-j)
 -4\!\sum_{\substack{i+j+k=g\\ i,j,k<g}}c_ic_jc_k.
\end{align}
The indicator $\mathbf1_{g\geq2}$ is one when $g\geq2$ and zero otherwise. This recurrence follows from the specified string equation and establishes the coefficients for that solution. Reconstruction for a general leading family remains subject to the primitive obstruction tests.

\subsection{Two further multi-boundary outputs}\label{app:isingmulti}
The polynomial entering the complete genus-one two-boundary function is
\begin{align}\label{eq:IsingH}
 \mathsf H_\alpha(\xi,\eta)={}&69-27\alpha(\xi+\eta)
       +81(\xi^2+\eta^2)+4\xi\eta\nonumber\\
 &-40\alpha(\xi^3+\eta^3)-12\alpha\xi\eta(\xi+\eta)
       +60(\xi^4+\eta^4)+36\xi^2\eta^2.
\end{align}

The branch-pole basis keeps the four-boundary Ising answer manageable. With $L_\alpha(p)=(p-\alpha)^{-1}$ and $\alpha=\pm1$, the exact EO result is
\begin{align}\label{eq:IsingW04appendix}
 W_{0,4}(p_1,p_2,p_3,p_4)=\frac1{144}\Bigg\{
 &\sum_{\alpha=\pm1}\prod_{i=1}^4L_\alpha(p_i)^2
 \left[13-4\alpha\sum_{i=1}^4L_\alpha(p_i)
              +12\sum_{i=1}^4L_\alpha(p_i)^2\right]\nonumber\\
 &+\sum_{\substack{S\subset\{1,2,3,4\}\\|S|=2}}
       \prod_{i\in S}L_+(p_i)^2\prod_{j\notin S}L_-(p_j)^2
 \Bigg\}.
\end{align}
The second sum contains six terms. It separates the contributions involving both ramification points from those supported on a single branch. In particular,
\begin{equation}
 W_{0,4}(0,0,0,0)=\frac{10}{9}.
\end{equation}
The complete residue formula was checked against the matrix cycles at eight exact rational four-tuples. These evaluations establish agreement at the stated points; a full four-variable symbolic comparison remains open.

The matrix calculation also gives the genus-two specialization
\begin{equation}\label{eq:IsingW22appendix}
 W_{2,2}(0,p)=\frac{5\mathcal P_{22}(p)}{46656(p^2-1)^{12}},
\end{equation}
with
\begin{align}
 \mathcal P_{22}(p)={}&4746p^{22}-42714p^{20}+182247p^{18}-469857p^{16}\nonumber\\
 &+817272p^{14}-987816p^{12}+873134p^{10}-514938p^8\nonumber\\
 &+286814p^6+9346p^4+91083p^2+16795.
\end{align}
It obeys
\begin{equation}
 W_{2,2}(0,0)=\frac{83975}{46656},\qquad
 W_{2,2}(0,2)=\frac{412026425}{918330048}.
\end{equation}
A separate exact projector coefficient extraction reproduces \eqref{eq:IsingW22appendix}, and a local expansion checks the six most singular Laurent coefficients, those of $(p-\alpha)^{-12},\ldots,(p-\alpha)^{-7}$ at each $\alpha=\pm1$. The first external coordinate is fixed at the ordinary point $p_1=0$; the Laurent orders refer to the second coordinate approaching a ramification point. The local comparison uses the required subleading jets of the Ising curve, not merely the leading quadratic Airy model. Reproducing the rational function by a second projector extraction is an internal check of the one-variable specialization, while the six Laurent coefficients supply only a partial independent local comparison. They do not replace an EO derivation of the entire specialization or of the full two-variable function, which remains to be compared.

\subsection{The fourth- and fifth-order genus-two numerators}
For \eqref{eq:W21r4} and \eqref{eq:W21r5}, respectively, the complete polynomials are
\begin{align}
 N_4(p)={}&1365p^{28}-22188p^{26}+162195p^{24}-694163p^{22}\nonumber\\
 &+1906498p^{20}-3449435p^{18}+4034219p^{16}-2759855p^{14}\nonumber\\
 &+775586p^{12}+262164p^{10}+99400p^8-13776p^6\nonumber\\
 &+90720p^4-141120p^2+67200,
\end{align}
\begin{align}
 N_5(p)={}&13090p^{38}-332910p^{36}+3872150p^{34}-27240062p^{32}\nonumber\\
 &+129234637p^{30}-436758505p^{28}+1083697673p^{26}\nonumber\\
 &-2007954230p^{24}+2804943208p^{22}-2971620568p^{20}\nonumber\\
 &+2400320300p^{18}-1487844014p^{16}+715395340p^{14}\nonumber\\
 &-272524798p^{12}+84320108p^{10}-18533930p^8\nonumber\\
 &-610847p^6-1414105p^4+4387107p^2+127138.
\end{align}
These functions agree with independent EO calculations in their full rational form. Their comparison with the literature is through the standard string-pair construction; the displayed coefficients are obtained here by reconstruction and tested by residues.

A compact way to display additional two-boundary data is to fix the first argument away from ramification. At $p_1=3$,
\begin{align}
 W^{(4,5)}_{1,2}(3,q)&=\frac{S_4(q)}{10978063488q^6(q^2-2)^6},\\
 W^{(5,6)}_{1,2}(3,q)&=\frac{S_5(q)}{332167687500(q^4-3q^2+1)^6},
\end{align}
where
\begin{align}
 S_4(q)={}&123807473q^{16}+158015124q^{15}-1011778251q^{14}\nonumber\\
 &-1264120992q^{13}+3474719310q^{12}+4317033024q^{11}\nonumber\\
 &-5676187295q^{10}-7155164160q^9+4390289805q^8\nonumber\\
 &+6307238016q^7+1128752040q^6+851125688q^4\nonumber\\
 &-864014256q^2+762365520,
\end{align}
\begin{align}
 S_5(q)={}&5230869288q^{22}+8113077828q^{21}-70543049778q^{20}\nonumber\\
 &-110350837740q^{19}+404434456890q^{18}+642384284400q^{17}\nonumber\\
 &-1268332395550q^{16}-2065146046500q^{15}+2360696284710q^{14}\nonumber\\
 &+3994744533660q^{13}-2670664676359q^{12}-4773542768244q^{11}\nonumber\\
 &+1920744959259q^{10}+3629235740760q^9-899547473910q^8\nonumber\\
 &-1759182695400q^7+319782299925q^6+506200551300q^5\nonumber\\
 &-213829918090q^4-86685948840q^3+168169700253q^2\nonumber\\
 &+47774390328q+19894193387.
\end{align}
The letter $q$ here is only the second boundary's curve coordinate. The exact comparison uses the full two-variable expressions obtained from \eqref{eq:W12matrix}. The one-variable specializations displayed here present their coefficients compactly.

\subsection{All fields at orders six and seven}\label{app:higherfields}
These examples test the same inverse calculation with additional lower fields. At order six, with $U=x^{1/6}$, the reconstructed symmetric coefficients are
\begin{align}
 u_4&=-6U+\frac58\h^2U^{-12}
             +\frac{164255}{62208}\h^4U^{-25}+O(\h^6),\\
 u_2&=9U^2-\frac{55}{24}\h^2U^{-11}
             -\frac{90295}{10368}\h^4U^{-24}+O(\h^6),\\
 u_0&=-2U^3+\frac{31}{48}\h^2U^{-10}
             +\frac{22435}{7776}\h^4U^{-23}+O(\h^6),
\end{align}
with $u_3=u_1=0$ through the displayed orders. At order seven, with $U=x^{1/7}$, they are
\begin{align}
 u_5&=-7U+\frac78\h^2U^{-14}
             +\frac{8349}{1568}\h^4U^{-29}+O(\h^6),\\
 u_3&=14U^2-\frac{30}{7}\h^2U^{-13}
             -\frac{144763}{6272}\h^4U^{-28}+O(\h^6),\\
 u_1&=-7U^3+\frac{83}{28}\h^2U^{-12}
             +\frac{14115}{784}\h^4U^{-27}+O(\h^6),
\end{align}
with $u_4=u_2=u_0=0$ through those orders. The zero corrections are outputs of the all-field homogeneous fit.

At order six the genus-one and genus-two fits have, respectively, 29 equations for 24 unknowns and 59 equations for 54 unknowns, each with full column rank and equal augmented rank. At order seven the corresponding counts are 35 for 29 and 71 for 65. Independent fractional-power calculations verify the string commutators and the trace primitives. The full genus-one and genus-two one-boundary functions also agree with EO residues. Their amplitude polynomials are determined by the fields above and the primitive equations of Section~\ref{sec:reconstruction}.

\section{Exact data for dual and nonunitary projections}\label{app:NUdata}
This appendix makes the finite-order outputs explicit. All coefficients use the symmetric ordering, and all amplitudes are evaluated at $x=1$, so $z=p$. The two integers $(r,s)$ list the orders of $Q,P$ in that order. The field expansion is exactly \eqref{eq:homfields}; the dimensionless number $\beta_{j,g}$ multiplies $\h^{2g}U^{(r-j)/2-(r+s)g}$, with $U=x^{2/(r+s-1)}$. This convention is essential when comparing two rows with the same operator order and different partners.

\subsection{All field coefficients through genus two}
\begingroup\small\renewcommand{\arraystretch}{1.20}
\begin{longtable}{@{}ccrrr@{}}
\caption{Homogeneous symmetric-field coefficients in the normalization of \eqref{eq:homfields}.}\label{tab:allNUfields}\\
\toprule
$(r,s)$ & $j$ & $\beta_{j,0}$ & $\beta_{j,1}$ & $\beta_{j,2}$\\\midrule
\endfirsthead
\toprule $(r,s)$ & $j$ & $\beta_{j,0}$ & $\beta_{j,1}$ & $\beta_{j,2}$\\\midrule
\endhead
\bottomrule\endfoot

$(3,2)$ & $0$ & $0$ & $0$ & $0$\\

$(3,2)$ & $1$ & $-3$ & $\frac{1}{16}$ & $\frac{49}{1536}$\\

\addlinespace

$(4,3)$ & $0$ & $2$ & $- \frac{13}{54}$ & $- \frac{5713}{23328}$\\

$(4,3)$ & $1$ & $0$ & $0$ & $0$\\

$(4,3)$ & $2$ & $-4$ & $\frac{1}{6}$ & $\frac{1925}{11664}$\\

\addlinespace

$(5,4)$ & $0$ & $0$ & $0$ & $0$\\

$(5,4)$ & $1$ & $5$ & $- \frac{115}{128}$ & $- \frac{25631}{16384}$\\

$(5,4)$ & $2$ & $0$ & $0$ & $0$\\

$(5,4)$ & $3$ & $-5$ & $\frac{5}{16}$ & $\frac{17745}{32768}$\\

\addlinespace

$(6,5)$ & $0$ & $-2$ & $\frac{57}{100}$ & $\frac{173249}{75000}$\\

$(6,5)$ & $1$ & $0$ & $0$ & $0$\\

$(6,5)$ & $2$ & $9$ & $- \frac{21}{10}$ & $- \frac{56923}{10000}$\\

$(6,5)$ & $3$ & $0$ & $0$ & $0$\\

$(6,5)$ & $4$ & $-6$ & $\frac{1}{2}$ & $\frac{2618}{1875}$\\

\addlinespace

$(3,5)$ & $0$ & $9$ & $0$ & $\frac{1573}{36015}$\\

$(3,5)$ & $1$ & $9$ & $\frac{1}{7}$ & $- \frac{3289}{108045}$\\

\addlinespace

$(5,3)$ & $0$ & $90$ & $\frac{40}{49}$ & $\frac{2860}{21609}$\\

$(5,3)$ & $1$ & $45$ & $\frac{120}{49}$ & $- \frac{1847}{2401}$\\

$(5,3)$ & $2$ & $15$ & $0$ & $\frac{1573}{21609}$\\

$(5,3)$ & $3$ & $15$ & $\frac{5}{21}$ & $- \frac{3289}{64827}$\\

\addlinespace

$(5,2)$ & $0$ & $0$ & $0$ & $0$\\

$(5,2)$ & $1$ & $\frac{15}{2}$ & $- \frac{25}{36}$ & $- \frac{385}{864}$\\

$(5,2)$ & $2$ & $0$ & $0$ & $0$\\

$(5,2)$ & $3$ & $-5$ & $\frac{5}{36}$ & $\frac{35}{432}$\\

\addlinespace

$(2,5)$ & $0$ & $-2$ & $\frac{1}{18}$ & $\frac{7}{216}$\\

\addlinespace

\end{longtable}
\endgroup
For $(3,2)$ at genus three, the remaining coefficients are $\beta_{0,3}=0$, $\beta_{1,3}=1225/18432$. The full-column-rank homogeneous systems determine these coefficients and the primitive simultaneously. The reconstruction determines both the zero and nonzero entries in the positive-genus columns from the specified leading coefficients.

\subsection{Further genus-one boundary formulas}
The fifth-order Yang--Lee numerator in \eqref{eq:52W11} is
\begin{align}
 B_{52}(p)={}&16p^{14}-168p^{12}+672p^{10}-1908p^8
 +3780p^6\nonumber\\
 &-4914p^4+2592p^2+567.
\end{align}
The $(3,5)$ two-boundary specialization quoted in the text is
\begin{align}\label{eq:35W12slice}
 W^{(3,5)}_{1,2}(0,p)&=-\frac{A_{35}(p)}{441(p^2+3)^6},\nonumber\\
 A_{35}(p)&=16p^{10}+66p^9+144p^8+648p^7+648p^6+3564p^5\nonumber\\
 &\quad+891p^4+4536p^3-243p^2+22842p-6318.
\end{align}

\paragraph{$(5,4)$.}

\begin{equation} W^{(5,4)}_{1,1}(z)=-\frac{\mathcal A_{5,4}(z)}{32\bigl(z^{4} - 3 z^{2} + 1\bigr)^4}.\end{equation}

\begin{align*}
 \mathcal A_{5,4}(z)&=2z^{14}-17z^{12}+57z^{10}-84z^{8}\\
 &\quad+85z^{6}-84z^{4}+55z^{2}+9 .
\end{align*}

\paragraph{$(6,5)$.}

\begin{equation} W^{(6,5)}_{1,1}(z)=-\frac{\mathcal A_{6,5}(z)}{120\bigl(z^{5} - 4 z^{3} + 3 z\bigr)^4}.\end{equation}

\begin{align*}
 \mathcal A_{6,5}(z)&=10z^{18}-121z^{16}+601z^{14}-1562z^{12}\\
 &\quad+2317z^{10}-1872z^{8}+387z^{6}+594z^{4}\\
 &\quad-243z^{2}+81 .
\end{align*}

\paragraph{$(5,3)$.}

\begin{equation} W^{(5,3)}_{1,1}(z)=-\frac{\mathcal A_{5,3}(z)}{21\bigl(z^{4} + 9 z^{2} + 6 z + 9\bigr)^4}.\end{equation}

\begin{align*}
 \mathcal A_{5,3}(z)&=z^{14}+27z^{12}+15z^{11}+288z^{10}\\
 &\quad+270z^{9}+1674z^{8}-162z^{7}+5670z^{6}\\
 &\quad-7641z^{5}+6318z^{4}-486z^{3}-31995z^{2}\\
 &\quad+86265z+26244 .
\end{align*}

\subsection{The full genus-two one-boundary functions}
For compactness define the integer-coefficient ramification polynomial $\mathscr D_{r,s}(z)$ and the constants $\epsilon_{r,s}\in\{+1,-1\}$ and $C_{r,s}>0$ by the following table. Then the complete amplitudes are
\begin{equation}\label{eq:NUW21appendix}
 W^{(r,s)}_{2,1}(z)=\epsilon_{r,s}
 \frac{\mathcal N_{r,s}(z)}{C_{r,s}\,\mathscr D_{r,s}(z)^{10}}.
\end{equation}
The polynomial $\mathscr D_{5,2}$ is twice the monic polynomial used in the quotient-algebra calculation; this factor is already included in $C_{5,2}$. The numerators below have positive leading coefficients.
\begin{center}\small
\begin{tabular}{@{}cccc@{}}\toprule
$(r,s)$ & $\mathscr D_{r,s}(z)$ & $\epsilon_{r,s}$ & $C_{r,s}$\\\midrule

$(3,2)$ & $\left(z - 1\right) \left(z + 1\right)$ & $-1$ & $2304$\\

$(4,3)$ & $z \left(z^{2} - 2\right)$ & $-1$ & $31104$\\

$(5,4)$ & $\left(z^{2} - z - 1\right) \left(z^{2} + z - 1\right)$ & $-1$ & $40960$\\

$(6,5)$ & $z \left(z - 1\right) \left(z + 1\right) \left(z^{2} - 3\right)$ & $-1$ & $720000$\\

$(3,5)$ & $z^{2} + 3$ & $1$ & $46305$\\

$(5,3)$ & $z^{4} + 9 z^{2} + 6 z + 9$ & $1$ & $46305$\\

$(5,2)$ & $2 z^{4} - 6 z^{2} + 3$ & $-1$ & $6480$\\

$(2,5)$ & $z$ & $-1$ & $51840$\\

\bottomrule\end{tabular}\end{center}

\paragraph{$(3,2)$.}

\begin{align*}
 \mathcal N_{3,2}(z)&=7z^{18}-49z^{16}+168z^{14}-287z^{12}\\
 &\quad+413z^{10}+270z^{8}+3470z^{6}+16873z^{4}\\
 &\quad+9014z^{2}+361 .
\end{align*}

\paragraph{$(4,3)$.}

\begin{align*}
 \mathcal N_{4,3}(z)&=385z^{28}-6545z^{26}+50335z^{24}-229315z^{22}\\
 &\quad+684730z^{20}-1397590z^{18}+1978315z^{16}-1894915z^{14}\\
 &\quad+1216090z^{12}-89820z^{10}+219560z^{8}-412560z^{6}\\
 &\quad+572640z^{4}-409920z^{2}+120960 .
\end{align*}

\paragraph{$(5,4)$.}

\begin{align*}
 \mathcal N_{5,4}(z)&=1365z^{38}-35838z^{36}+431850z^{34}-3160943z^{32}\\
 &\quad+15682128z^{30}-55736390z^{28}+146291700z^{26}-288215160z^{24}\\
 &\quad+429253517z^{22}-483501562z^{20}+409871060z^{18}-258093851z^{16}\\
 &\quad+117050000z^{14}-34769452z^{12}+1683282z^{10}+9662540z^{8}\\
 &\quad-10262939z^{6}+2697860z^{4}+1607088z^{2}+30597 .
\end{align*}

\paragraph{$(6,5)$.}

\begin{align*}
 \mathcal N_{6,5}(z)&=52360z^{48}-1855240z^{46}+30637600z^{44}-313071648z^{42}\\
 &\quad+2216533028z^{40}-11537330300z^{38}+45722115252z^{36}-140962250740z^{34}\\
 &\quad+342541057393z^{32}-660696038993z^{30}+1013689831750z^{28}-1234391656765z^{26}\\
 &\quad+1185115217910z^{24}-886877108693z^{22}+508733534238z^{20}-218754570465z^{18}\\
 &\quad+68541714864z^{16}-14947642395z^{14}+2212396578z^{12}-1062994023z^{10}\\
 &\quad+1349692470z^{8}-805646331z^{6}+355628070z^{4}-91165095z^{2}\\
 &\quad+10333575 .
\end{align*}

\paragraph{$(3,5)$.}

\begin{align*}
 \mathcal N_{3,5}(z)&=143z^{18}-429z^{17}+3003z^{16}-10296z^{15}\\
 &\quad+38610z^{14}-108108z^{13}+286119z^{12}-822393z^{11}\\
 &\quad+1622835z^{10}-3393495z^{9}+7020999z^{8}-17272197z^{7}\\
 &\quad+17748234z^{6}-22173993z^{5}+104538600z^{4}-184718394z^{3}\\
 &\quad-55066473z^{2}+150220656z-12203460 .
\end{align*}

\paragraph{$(5,3)$.}

\begin{align*}
 \mathcal N_{5,3}(z)&=143z^{38}-429z^{37}+11583z^{36}\\
 &\quad-27456z^{35}+411840z^{34}-764478z^{33}\\
 &\quad+8548659z^{32}-12289563z^{31}+115913430z^{30}\\
 &\quad-130064940z^{29}+1082465370z^{28}-1034666055z^{27}\\
 &\quad+7044805710z^{26}-7517892690z^{25}+30124339425z^{24}\\
 &\quad-57110349159z^{23}+56160675027z^{22}-413550550674z^{21}\\
 &\quad-273585927402z^{20}-2459688358545z^{19}-2932377729309z^{18}\\
 &\quad-11280449682333z^{17}-14017084155999z^{16}-40243802672268z^{15}\\
 &\quad-43402010566140z^{14}-133124718129759z^{13}-87574565232648z^{12}\\
 &\quad-539149891984704z^{11}-126814175650800z^{10}-2026336828198635z^{9}\\
 &\quad-691499924267913z^{8}-3994584925347981z^{7}-3815224363043253z^{6}\\
 &\quad-498922289956710z^{5}-5075994056387940z^{4}+5950430032599900z^{3}\\
 &\quad+7390912812550785z^{2}+598396199336820z-237104739958959 .
\end{align*}

\paragraph{$(5,2)$.}

\begin{align*}
 \mathcal N_{5,2}(z)&=32256z^{38}-883712z^{36}+11206400z^{34}-87384320z^{32}\\
 &\quad+469347840z^{30}-1842356736z^{28}+5470470912z^{26}-12545740800z^{24}\\
 &\quad+22493113920z^{22}-31727669760z^{20}+35574981408z^{18}-33745592736z^{16}\\
 &\quad+31044785760z^{14}-23462252640z^{12}-4740541200z^{10}+41068500768z^{8}\\
 &\quad-40171074066z^{6}+7325618940z^{4}+5400326295z^{2}+117802755 .
\end{align*}

\paragraph{$(2,5)$.}

\begin{align*}
 \mathcal N_{2,5}(z)&=252z^{8}+656z^{6}+1390z^{4}+2030z^{2}\\
 &\quad+1575 .
\end{align*}

Every function in \eqref{eq:NUW21appendix} has a full independent EO comparison. We include the numerators so that all coefficients can be compared. The construction of the longer two-variable $W_{1,2}$ functions and the unreduced primitives is specified by \eqref{eq:homprimitive} and \eqref{eq:W12matrix}.

\section{Conventions for comparison with the literature}\label{app:published}
To compare with the literature, we need to put both calculations in the same variables and ordering convention. This appendix gives the required maps for the full operators and their amplitudes. We begin with the Berg\`ere--Borot--Eynard Ising pair:
\begin{align}\label{eq:BBEIsingops}
 Q_B&=D_B^3-3\mathfrak uD_B-\frac32\h_B\mathfrak u_{t_B},\nonumber\\
 P_B&=D_B^4-4\mathfrak uD_B^2-4\h_B\mathfrak u_{t_B}D_B
      +2\mathfrak u^2-\frac53\h_B^2\mathfrak u_{t_Bt_B}.
\end{align}
With $t_B=4x$ and $\h_B=4\h$, their genus-one one-boundary expression is
\begin{equation}
 \omega^B_{1,1}=-\frac1{576}\left[
 \frac{7+7p+3p^2}{(p+1)^4}
 +\frac{7-7p+3p^2}{(p-1)^4}\right]\dd p.
\end{equation}
Combining the fractions and multiplying by four gives \eqref{eq:IsingW11}. There is a related sign in the free-energy convention. BBE uses $\mathfrak u=\h_B^2\partial_{t_B}^2\log Z$, so $b=-3\h^2\partial_x^2\log Z$. Equation~\eqref{eq:Floc}, on the other hand, has $B=3\partial_x^2F_1^{\rm loc}$ and yields $F_1^{\rm loc}=-\log x/24$. Thus the coefficient of genus one in $\log Z$ has the opposite sign, up to an additive constant. The two expressions agree once the field convention is included.

The parameters with a subscript C below belong to Chan, Irie, Niedner and Yeh \cite{ChanIrieNiednerYeh2016}. To keep their literal operator names distinct from ours, write their Baker--Akhiezer system as
\begin{equation*}
 \zeta\psi=\mathsf P_{\mathrm C}\psi,\qquad
 g_{\mathrm C}\partial_\zeta\psi=\mathsf Q_{\mathrm C}\psi,
 \qquad [\mathsf P_{\mathrm C},\mathsf Q_{\mathrm C}]=g_{\mathrm C}.
\end{equation*}
Their $\mathsf P_{\mathrm C}$ is the spectral operator, whereas our spectral operator is called $Q$. Their $\mathsf Q_{\mathrm C}$ includes the overall coupling $\beta_{p,q}$ multiplying the differential polynomial printed in the Baker--Akhiezer equation. We retain this coupling in the operator dictionary below rather than silently removing it. The symbols $p,q$ in $\beta_{p,q}$ label the source paper's minimal model; they are not our curve coordinate or a new pair of variables.

\subsection{The third-/fifth-order nonunitary pair}
Let $a,b$ be the fields in \eqref{eq:35fields}, and write $\gamma_{\mathrm C}=\beta_{3,5}\ne0$. The coefficient fields and the complete operators in the source convention are
\begin{equation}\label{eq:Chan35dict}
 \begin{aligned}
 U_{2,\mathrm C}&=4b,\qquad U_{3,\mathrm C}=4a+2\h b',\\
 \mathsf P_{\mathrm C}&=4Q,\qquad
 \mathsf Q_{\mathrm C}=16\gamma_{\mathrm C}P.
 \end{aligned}
\end{equation}
In particular, the derivative term in $U_{3,\mathrm C}$ is required by the conversion from symmetric to left ordering. Its coefficient is fixed by the ordering. The coordinates and dispersion parameters are
\begin{equation}
 t_{\mathrm C}=25920\gamma_{\mathrm C}x,\qquad
 g_{\mathrm C}=25920\gamma_{\mathrm C}\h,\qquad
 \mu_{\mathrm C}=0.
\end{equation}
Thus $g_{\mathrm C}\partial_{t_{\mathrm C}}=D$, and the spectral variables are related by $\zeta=4E$. The commutator provides a direct check of both the operator order and the coupling factor:
\begin{equation*}
 [\mathsf P_{\mathrm C},\mathsf Q_{\mathrm C}]
 =64\gamma_{\mathrm C}[Q,P]
 =25920\gamma_{\mathrm C}\h=g_{\mathrm C},
\end{equation*}
since our pair satisfies $[P,Q]=-405\h$. The parameter $\mu_{\mathrm C}$ specifies their background. Our $\mu$ denotes the endpoint, set to one in the displayed amplitudes.

It is useful to follow both field equations explicitly through the conversion. Let $\mathscr D=\h\partial_x$ when it acts on a coefficient function, and abbreviate $U_2=U_{2,\mathrm C}$, $U_3=U_{3,\mathrm C}$. The two differential polynomials in their convention are
\begin{align}
 C_1={}&60U_3(\mathscr D U_2-U_3)+20U_2\mathscr D^2U_2
          +\frac53U_2^3+16\mathscr D^4U_2,\\
 C_2={}&20(2\mathscr D U_2-3U_3)\mathscr D U_3
          +5U_3(8\mathscr D^2U_2+U_2^2)\nonumber\\
       &+20U_2\mathscr D^2U_3+16\mathscr D^4U_3.
\end{align}
After \eqref{eq:Chan35dict}, $-C_1/960$ is the residual of the first equation in \eqref{eq:35strings}, with its right-hand side moved to the left, and $(C_2-\mathscr D C_1/2)/320$ is the second left-hand side before subtracting $729x$. Their equations on the chosen slice therefore become exactly \eqref{eq:35strings}. Under this conversion, the odd derivative terms combine across the two equations and cancel. The corresponding complete fifth-order operator also agrees with the separately reconstructed $(5,3)$ member.

\subsection{Yang--Lee and the second-order partner}
For the potential $\mathfrak u$ in \eqref{eq:YLstring}, put $\gamma_{\mathrm C}=\beta_{2,5}\ne0$. The literal source-operator dictionary is
\begin{equation}
 U_{2,\mathrm C}=-4\mathfrak u,\qquad
 \mathsf P_{\mathrm C}=2P,\qquad
 \mathsf Q_{\mathrm C}=16\gamma_{\mathrm C}Q.
\end{equation}
Here their spectral operator $\mathsf P_{\mathrm C}$ is second order, while our $Q$ in Section~\ref{sec:nonunitary} is fifth order. This exchange of projection is separate from the numerical coefficient normalization. The coordinate and dispersion changes are
\begin{equation}
 t_{\mathrm C}=160\gamma_{\mathrm C}x,\qquad
 g_{\mathrm C}=160\gamma_{\mathrm C}\h,\qquad
 \mu_{\mathrm C}=0.
\end{equation}
They obey $g_{\mathrm C}\partial_{t_{\mathrm C}}=D$, and our $[P,Q]=5\h$ gives
\begin{equation*}
 [\mathsf P_{\mathrm C},\mathsf Q_{\mathrm C}]
 =32\gamma_{\mathrm C}[P,Q]
 =160\gamma_{\mathrm C}\h=g_{\mathrm C}.
\end{equation*}
Their integrated differential polynomial becomes
\begin{equation}
 -5(\mathscr D U_2)^2-10U_2\mathscr D^2U_2
          -5U_2^3-2\mathscr D^4U_2=320x.
\end{equation}
Inserting $U_2=-4\mathfrak u$ and dividing by 320 gives \eqref{eq:YLstring}. The fifth-order coefficients agree with the operator in their Eq.~(B.28), and the displayed differential polynomial is their Eq.~(B.29) on this slice \cite{ChanIrieNiednerYeh2016}. These comparisons follow directly from the operator and coordinate dictionary.

Both nonunitary examples use a particular homogeneous zero-$\mu_{\mathrm C}$ background. The one-scale polynomial forms displayed here describe these chosen backgrounds; other conformal-times choices require their own background identification. The nonzero classical bracket, the explicit operator dictionary and the disk normalization together specify the comparison.

\section{Higher-Airy calculations and comparisons}\label{app:airydata}
\subsection{One-point comparison with the literature through genus six}
We first state the one-point formula from the literature, then compare it with the scalar coefficients. Define
\begin{equation}
 t_i^{\rm LVX}=-\frac{\binom r{2i}}{(2i+1)4^i},\qquad
 |\mathbf m|=\sum_i i m_i,\qquad
 \|\mathbf m\|=\sum_i m_i,\qquad \mathbf m!=\prod_i m_i!.
\end{equation}
The symbols $t_i^{\rm LVX}$ denote the numerical coefficients in this closed formula. We reserve $t_m$ for the hierarchy times in \eqref{eq:hierarchy}. For the allowed label fixed by \eqref{eq:onepointselection}, Proposition~5.4 of \cite{LiuVakilXu2011} gives
\begin{equation}\label{eq:LVX}
 \langle\tau_{d,\nu}\rangle_g=
 \frac{(-1)^g}{r^g\Gamma(1-(\nu+1)/r)}
 \sum_{|\mathbf m|=g}
 \Gamma\!\left(\|\mathbf m\|-\frac{2g-1}{r}\right)
 \frac{\prod_i(t_i^{\rm LVX})^{m_i}}{\mathbf m!}.
\end{equation}
Only $i\leq\lfloor r/2\rfloor$ occur. For $r=4,5$, the sum is over $m_1+2m_2=g$, and the gamma ratios reduce to rational products. The comparison can therefore be made exactly.

After applying \eqref{eq:rspindictionary}, the scalar concomitant calculation through $\h^{12}$ gives Table~\ref{tab:Airy}. Each nonzero entry agrees with \eqref{eq:LVX}, and the ODE also gives the zero required by the narrow-sector rule. The calculation therefore reproduces these known intersection numbers in the scalar normalization.
\begin{table}[htbp]
\centering\small
\begin{tabular}{@{}ccccc@{}}\toprule
$g$&$(d,\nu)$, $r=4$&$4$-spin coefficient&$(d,\nu)$, $r=5$&$5$-spin coefficient\\\midrule
1&$(1,0)$&$1/8$&$(1,0)$&$1/6$\\
2&$(3,2)$&$3/2560$&$(3,2)$&$11/3600$\\
3&$(6,0)$&$3/20480$&$(5,4)$&$0$ (forbidden)\\
4&$(8,2)$&$77/39321600$&$(8,1)$&$341/25920000$\\
5&$(11,0)$&$19/104857600$&$(10,3)$&$161/777600000$\\
6&$(13,2)$&$59/33554432000$&$(13,0)$&$3397/93312000000$\\\bottomrule
\end{tabular}
\caption{One-boundary higher-Airy coefficients compared with the literature using \eqref{eq:LVX}. The field label $\nu=4$ at $r=5,g=3$ is outside the narrow range and is shown only to make the selection-rule zero explicit.}\label{tab:Airy}
\end{table}

\subsection{Additional boundaries as a normalization check}
The matrix calculation at $r=4,5$ gives the complete genus-one two-boundary functions
\begin{align}\label{eq:AiryW12}
 W^{\mathrm A,4}_{1,2}(p_1,p_2)
 &=\frac1{128}\left[
 45\left(\frac1{p_1^2p_2^{10}}+\frac1{p_1^{10}p_2^2}\right)
 -21\left(\frac1{p_1^4p_2^8}+\frac1{p_1^8p_2^4}\right)
 +\frac{25}{p_1^6p_2^6}\right],\nonumber\\
 W^{\mathrm A,5}_{1,2}(p_1,p_2)
 &=\frac1{25}\left[
 11\left(\frac1{p_1^2p_2^{12}}+\frac1{p_1^{12}p_2^2}\right)
 -9\left(\frac1{p_1^4p_2^{10}}+\frac1{p_1^{10}p_2^4}\right)\right.\nonumber\\
 &\hspace{1.25cm}\left.
 -8\left(\frac1{p_1^5p_2^9}+\frac1{p_1^9p_2^5}\right)
 +\frac6{p_1^7p_2^7}\right].
\end{align}
All energy-coincidence denominators cancel. The string and dilaton equations fix the coefficients with labels $(2,0),(0,0)$ and $(1,0),(1,0)$ from $\langle\tau_{1,0}\rangle_1=(r-1)/24$. The remaining coefficients follow from genus-one gluing and Witten's four-primary formula, as explained below. Thus the independent intersection-theoretic checks cover the complete functions.

The following subsections give the full $W_{0,3}$ and $W_{0,4}$ at these scalar orders. Their comparison tests the vanishing of lower forbidden powers of $\h$ in the cycle sums, as well as the final Laurent coefficients. The larger $W_{2,2}$ and $W_{3,2}$ outputs have only specified coefficient subsets checked independently. For example,
\begin{equation}\label{eq:Airyhigherchecks}
 \langle\tau_{4,1}\tau_{0,1}\rangle_2^{4\text{-spin}}=\frac1{320},\quad
 \langle\tau_{4,1}\tau_{0,1}\rangle_2^{5\text{-spin}}=\frac7{1200},\quad
 \langle\tau_{6,3}\tau_{0,1}\rangle_3^{5\text{-spin}}=\frac{11}{108000}
\end{equation}
agree with Corollary~5.5 of \cite{LiuVakilXu2011}. The last value gives a nonzero five-spin genus-three two-point coefficient. At the same genus, the one-point coefficient vanishes by its selection rule. These examples show how the allowed coefficients depend on the number of boundaries.

\subsection{A finite symmetric basis}
All formulas in this appendix have $\lambda=1$. For an exponent list $(k_1,\ldots,k_n)$, define
\begin{equation}
 m_{k_1,\ldots,k_n}(\mathbf p^{-1})
 =\sum_{\text{distinct permutations }\sigma}
                \prod_{i=1}^n p_i^{-k_{\sigma(i)}}.
\end{equation}
The sum contains one term for each distinct exponent assignment, with repeated assignments counted once. This convention fixes the numerical coefficients in the following formulas.

The three-boundary sphere functions are
\begin{align}
 W^{\mathrm A,4}_{0,3}&=-\frac14(3m_{4,2,2}+4m_{3,3,2}),\\
 W^{\mathrm A,5}_{0,3}&=-\frac15(4m_{5,2,2}+6m_{4,3,2}+8m_{3,3,3}).
\end{align}
They follow from the primary three-point selection rule
$\langle\tau_{0,\nu_1}\tau_{0,\nu_2}\tau_{0,\nu_3}\rangle_0
=\delta_{\nu_1+\nu_2+\nu_3,r-2}$ and the dictionary \eqref{eq:rspindictionary}. The matrix three-cycle yields the same finite Laurent functions.

At four boundaries the complete results are
\begin{align}
 W^{\mathrm A,4}_{0,4}={}&\frac1{16}\bigl(
 21m_{8,2,2,2}+24m_{7,3,2,2}+15m_{6,4,2,2}\nonumber\\
 &\hspace{2.2cm}+20m_{6,3,3,2}-36m_{4,4,3,3}\bigr),
\end{align}
\begin{align}
 W^{\mathrm A,5}_{0,4}={}&\frac1{25}\bigl(
 36m_{10,2,2,2}+48m_{9,3,2,2}+42m_{8,4,2,2}\nonumber\\
 &+56m_{8,3,3,2}+24m_{7,5,2,2}+36m_{7,4,3,2}
       +48m_{7,3,3,3}\nonumber\\
 &-64m_{5,5,3,3}-72m_{5,4,4,3}-162m_{4,4,4,4}\bigr).
\end{align}
The independent primary comparison uses Witten's four-point formula, quoted in \cite{LiuVakilXu2011}. When $\sum_i\nu_i=2r-2$,
\begin{equation}
 \left\langle\prod_{i=1}^4\tau_{0,\nu_i}\right\rangle_0
 =\frac1r\min_i\{\nu_i,r-1-\nu_i\}.
\end{equation}
The remaining four-point terms have one descendant of degree one and $\sum_i\nu_i=r-2$; their coefficient is one by the genus-zero descendant relation. These two cases account for all monomials displayed above, including their different signs under \eqref{eq:rspindictionary}.

\subsection{The genus-one two-point comparison}
The string and dilaton equations reduce two coefficients in \eqref{eq:AiryW12} to the one-point invariant $(r-1)/24$ \cite{Witten1993,LiuVakilXu2011}. The remaining terms are fixed by the genus-one gluing relation
\begin{equation}
 \langle\tau_{1,a}\tau_{0,b}\rangle_1
 =\frac1{24}\sum_{c=0}^{r-2}
 \langle\tau_{0,a}\tau_{0,b}\tau_{0,c}\tau_{0,r-2-c}\rangle_0,
 \qquad a+b=r.
\end{equation}
Here $a,b,c$ label the spin sectors in the intersection formula. In these cases the separating contribution vanishes by the dimension constraint. The four-primary formula then gives $1/96$ for $r=4,a=b=2$, and $1/60$ for $r=5,(a,b)=(2,3)$ or $(3,2)$. Applying the multifactorials and the power of $-r$ in \eqref{eq:rspindictionary} reproduces every term of the matrix two-point functions. This comparison is sensitive both to the odd intermediate projector coefficients and to the descendant-dependent normalization.

\subsection{Higher two-point functions and the checked subset}
The matrix two-cycle gives the full genus-two functions
\begin{align}
 W^{\mathrm A,4}_{2,2}={}&\frac1{32768}\bigl(
 -39501m_{20,2}-96768m_{19,3}-37791m_{18,4}\nonumber\\
 &-31185m_{16,6}-75264m_{15,7}-33579m_{14,8}
 -33957m_{12,10}-84480m_{11,11}\bigr),
\end{align}
\begin{align}
 W^{\mathrm A,5}_{2,2}={}&\frac1{3125}\bigl(
 -9867m_{24,2}-18326m_{23,3}-9702m_{22,4}\nonumber\\
 &-7722m_{19,7}-14161m_{18,8}-8272m_{17,9}
 -8437m_{14,12}-15876m_{13,13}\bigr).
\end{align}
The string and dilaton equations check the coefficients that reduce to the one-point functions. Corollary~5.5 of Liu, Vakil and Xu supplies the additional coefficient checks listed in \eqref{eq:Airyhigherchecks}. The independent checks cover these specified subsets of coefficients at genera two and three. The full two-point functions are outputs of the matrix algorithm; the remaining coefficients await independent verification.

\begingroup
\raggedright
\bibliographystyle{unsrtnat}
\bibliography{main}
\endgroup
\end{document}